\newcommand{\st}{\,:\,}
\newcommand{\ket}[1]{\vert{ #1 }\rangle}
\newtheorem{theorem}{Theorem}
\newtheorem{lemma}{Lemma}
\theoremstyle{definition}
\newtheorem{definition}{Definition}
\theoremstyle{remark}
\newtheorem*{remark}{Remark}
\begin{document}

\title{Accelerating Fault-Tolerant Quantum Computation with Good qLDPC Codes}

\author{Guo Zhang}
\affiliation{Graduate School of China Academy of Engineering Physics, Beijing 100193, China}

\author{Yuanye Zhu}
\affiliation{Graduate School of China Academy of Engineering Physics, Beijing 100193, China}

\author{Ying Li}
\email{yli@gscaep.ac.cn}
\affiliation{Graduate School of China Academy of Engineering Physics, Beijing 100193, China}

\begin{abstract}
We propose a fault-tolerant quantum computation scheme that is broadly applicable to quantum low-density parity-check (qLDPC) codes. The scheme achieves constant qubit overhead and a time overhead of $O(d^{a+o(1)})$ for any $[[n,k,d]]$ qLDPC code with constant encoding rate and distance $d = \Omega(n^{1/a})$. For good qLDPC codes, the time overhead is minimized and reaches $O(d^{1+o(1)})$. In contrast, code surgery based on gauging measurement and brute-force branching requires a time overhead of $O(dw^{1+o(1)})$, where $d\leq w\leq n$. Thus, our scheme is asymptotically faster for all codes with $a < 2$. This speedup is achieved by developing techniques that enable parallelized code surgery under constant qubit overhead and leverage classical locally testable codes for efficient resource state preparation. These results establish a new paradigm for accelerating fault-tolerant quantum computation on qLDPC codes, while maintaining low overhead and broad applicability. 
\end{abstract}

\maketitle

\section{Introduction}

Quantum error correction is essential to realizing large-scale quantum computation~\cite{Shor1995,Kitaev1997,Gottesman1998,Gaitan2018}. According to the fault-tolerance theorem, when the physical error rate is below a constant threshold, arbitrarily high computational accuracy can be achieved by choosing a quantum error correction code with sufficiently large code distance $d$~\cite{Knill1998,Aharonov2008}. However, fault-tolerant quantum computation (FTQC) inevitably introduces additional resource overheads: each logical qubit requires multiple physical qubits for encoding and for supporting its computational operations, referred to as \emph{qubit overhead}; each round of logical-qubit operations must be implemented through multiple rounds of physical-qubit operations, referred to as \emph{time overhead}. For instance, when computation is performed using the surface code via lattice surgery, the qubit overhead scales as $O(d^2)$, while the time overhead scales as $O(d)$~\cite{Fowler2012,Horsman2012}. 

Substantial effort has been devoted to reducing the resource overhead of FTQC. A major advance in this direction is the development of low-overhead quantum low-density parity-check (qLDPC) codes, such as hypergraph product (HGP) codes, which achieve a constant encoding rate~\cite{Tillich2014}. This property ensures that the logical qubit number $k$ scales linearly with the physical qubit number $n$, i.e.~$k = \Theta(n)$. Recent breakthroughs have led to the discovery of good qLDPC codes, which further improve the code distance from $d = \Theta(n^{1/2})$ to $d = \Theta(n)$ while maintaining a constant encoding rate~\cite{Panteleev2021,Breuckmann2021a,Breuckmann2021,Panteleev2022,Dinur2022}. 

To enable logical computational operations on qLDPC codes, two main strategies have been developed: the concatenated codes combined with gate teleportation (CC+GT) approach~\cite{Gottesman2014,Yamasaki2024,Tamiya2024,Nguyen2024} and the code surgery approach (a generalization of lattice surgery)~\cite{Horsman2012,Cohen2022,Cross2024,Cowtan2024,Zhang2025,Ide2024,Williamson2024,Zhang2025a,He2025,Swaroop2025,Cowtan2025,Baspin2025}. For general qLDPC codes, the most efficient known method is the gauging measurement combined with brute-force branching (GM+BFB)---a form of code surgery that achieves constant qubit overhead and a time overhead scaling as $O(d^{2+o(1)})$ [assuming the logical operator weight $w = O(d)$]~\cite{Zhang2025,Ide2024,Williamson2024,Cowtan2025}. A lower time overhead of $O(d^{1+o(1)})$ can be attained using a refined variant of the CC+GT scheme~\cite{Nguyen2024}; however, its applicability is restricted to (almost) good quantum locally testable codes (qLTCs)---also known as $c^3$-qLTCs---which, beyond having constant encoding rate and constant relative distance, must also possess constant soundness~\cite{Nguyen2024,dinur2024expansion}.

In this work, we propose a new scheme applicable to general qLDPC codes, achieving lower resource overhead than the GM+BFB protocol and extending the $O(d^{1+o(1)})$ time overhead to a broader code families. For qLDPC codes with a constant encoding rate and code distance $d = \Omega(n^{1/a})$, our scheme attains a time overhead of $O(d^{a+o(1)})$ while maintaining constant qubit overhead. Compared with GM+BFB, our method achieves a smaller time overhead for all constant-rate qLDPC codes whose relative distance exceeds that of HGP codes (i.e., $a < 2$). In particular, for good qLDPC codes, the time overhead is further reduced to $O(d^{1+o(1)})$. Although this minimal overhead matches the best performance of the CC+GT approach, our scheme achieves it under weaker conditions—requiring only a constant encoding rate and a constant relative distance. 

Our scheme integrates code surgery with gate teleportation by performing the parity-check measurements required in code surgery through gate teleportation. To minimize resource overhead, we introduce two key techniques: parallelized code surgery (PCS) and locally-testable state preparation (LTSP). In PCS, multiple code blocks share a common ancilla system (a set of mutually coupled qubits) to implement computational operations, thereby reducing the qubit overhead of code surgery to a constant. In LTSP, the resource states required for gate teleportation are prepared with constant qubit and time overhead, employing a classical locally testable code (LTC) that requires only two constants---constant encoding rate and constant soundness~\cite{Leverrier2022LTC,Panteleev2022,lin2022c3LTC}. Leveraging local testability, we can effectively control measurement errors in parity-check measurements: when performing code surgery with resource states prepared via LTSP, only a single round of parity-check measurements is needed, reducing the time overhead of code surgery from $O(d)$ to $O(1)$. The introduction of PCS and LTSP inevitably leads to serialization of logical operations, resulting in a final time overhead of $O(d^{a+o(1)})$. As general overhead-reduction techniques, PCS and LTSP not only improve the asymptotic scaling of FTQC but also offer practical means to lower overhead in near-term, finite-size implementations. 

\begin{figure*}
\begin{minipage}{\linewidth}
\begin{table}[H]
\begin{tabular}{ccccc}
\hline\hline
Code & Protocol & Qubit overhead exponent & Time overhead exponent & Refs. \\
\hline\hline
\multirow{4}{*}{\shortstack{qLDPC codes \\ with $k = \Theta(n)$ \\ and $d = \Omega(n^{1/a})$}} & GM+BFB & $0$ & $\geq 2$ & \cite{Zhang2025a,Cowtan2025}\\
\cline{2-5}
& DS & $1$ & $1$ & \cite{Zhang2025} \\
\cline{2-5}
& polylog CC+GT & $0$ & $\geq 2a$& \cite{Tamiya2024} \\
\cline{2-5}
& PCS+LTSP+GT & $0$ & $a$ & This work \\
\hline\hline
\multirow{2}{*}{Good qLTCs} & log CC+GT & $0$ & $1$ &\cite{Nguyen2024} \\
\cline{2-5}
& PCS+LTSP+GT & $0$ & $1$ & This work \\
\hline\hline
Surface codes & LS & $2$ & $1$ & \cite{Litinski2019} \\
\hline
\end{tabular}
\caption{
Qubit and time overheads in fault-tolerant schemes using quantum low-density parity-check (qLDPC) codes. The exponent $b$ in the overhead expression $O(d^{b+o(1)})$ is referred to as the overhead exponent. Techniques developed for code surgery include gauging measurement (GM)~\cite{Ide2024,Williamson2024}, brute-force branching (BFB)~\cite{Zhang2025}, and devised sticking (DS)~\cite{Zhang2025}. {The overhead estimates for the GM+BFB protocol are predicated on the assumption that $w = O(d)$; any departure from this regime only increases the overhead exponents. In contrast, our protocol and other methods in the table do not rely on this assumption.} Typical protocols based on gate teleportation (GT) utilize concatenated codes (CC)~\cite{Gottesman2014,Nguyen2024,Tamiya2024}. The overhead results for the polylog CC+GT~\cite{Tamiya2024} protocol assume the use of quantum codes possessing single-shot properties. Lattice surgery (LS)~\cite{Litinski2019} is a method for quantum computation on the surface code (SC). This work introduces two new techniques for reducing time overhead: parallelized code surgery (PCS) and locally-testable state preparation (LTSP).}
\label{tab:schemes}
\end{table}
\end{minipage}
\end{figure*}

\section{Related works}

Two general approaches have been developed for fault-tolerant quantum computation with qLDPC codes: CC+GT and code surgery. 

Gate teleportation implements logical gates by consuming pre-prepared resource states~\cite{Gottesman2014,Nguyen2024,Tamiya2024}. Gottesman proposed using gate teleportation to operate on logical qubits encoded in qLDPC codes, with resource states generated via concatenated codes~\cite{Gottesman2014}. This approach is made possible by the existence of well-established fault-tolerant quantum computation protocols in certain concatenated codes, such as the concatenated Steane codes~\cite{Yamasaki2024}. In Gottesman's approach, the resource state is prepared by simulating its preparation circuit using fault-tolerant operations of the concatenated code, producing an encoded resource state. This is then decoded using the decoding circuit of the concatenated code to yield the unencoded resource state needed for gate teleportation. Since the simulation is fault-tolerant and the decoding circuit introduces only local stochastic errors at an asymptotically constant rate, the final resource state can be made sufficiently accurate to enable reliable gate teleportation. 

Several variants of the CC+GT approach have been developed, all achieving constant qubit overhead~\cite{Nguyen2024,Tamiya2024}. Their primary difference lies in the time overhead. Consider a quantum computer consisting of $M$ memory blocks, each encoded using an $[[n, k, d]]$ qLDPC code. To implement a quantum circuit of size $\vert C\vert$ on the $K = Mk$ logical qubits with a target failure rate $\epsilon$, the logical error rate per operation must be reduced to $p_L = O(\epsilon / \vert C\vert)$. Achieving this requires the concatenated code to have sufficiently large distance, ensuring that the encoded resource states are prepared with error rate $O(p_L)$. This requirement imposes a significant qubit cost on resource state production. To maintain a constant total qubit overhead, resource states are provided to only a subset of memory blocks at any given time. As a result, each layer of logical operations must be executed over multiple time steps. The overall time overhead is determined by both the cost of preparing resource states and the complexity due to serialization. In Gottesman's original protocol, and its subsequent improvement~\cite{Gottesman2014,Yamasaki2024}, this overhead scales as $O(\mathrm{poly}(K))$. A more recent advancement~\cite{Tamiya2024} reduces the time overhead to 
\[
O\left(\log^{2a + \gamma_1 + \gamma_2} \left(\frac{\vert C\vert}{\epsilon}\right)\right) = O\left(d^{2a + \gamma_1 + \gamma_2}\right),
\]
where $\gamma_1$ and $\gamma_2$ are parameters characterizing the concatenated code; note that $\vert C\vert / \epsilon = O(\mathrm{poly}(K))$ in typical scenarios. Here, we have re-expressed the time overhead by assuming an exponential suppression of the logical error rate with code distance, i.e.,~$p_L = e^{-\Omega(d)}$, allowing for a direct comparison with code-surgery protocols. 

A further breakthrough was reported in Ref.~\cite{Nguyen2024}, which introduced state distillation protocols capable of achieving an almost constant spacetime overhead. Before distillation, a concatenated code with modest distance is used to produce resource states with a sufficiently small constant error rate. These are subsequently distilled to achieve the target error rate of $O(p_L)$. Due to the efficiency of the distillation process, the time cost of preparing resource states becomes negligible, making serialization cost the dominant contribution to the overall time overhead. Consequently, the total time overhead is reduced to
\[
O\left(\log^{1+o(1)} \left(\frac{\vert C\vert}{\epsilon}\right)\right) = O(d^{1+o(1)}).
\]
However, since the efficient distillation relies on local testability, achieving this $O(d^{1+o(1)})$ overhead requires the underlying code to be an (almost) good qLTC, characterized by triple constants: constant encoding rate, constant relative distance, and constant soundness. 

Code surgery is a generalization of lattice surgery, originally developed for implementing logical operations on surface codes~\cite{Horsman2012,Horsman2012,Cohen2022,Cross2024,Zhang2025,Ide2024,Williamson2024,Zhang2025a,He2025,Cowtan2025,Baspin2025}. In code surgery, the original code is temporarily deformed by coupling it to an ancilla system, enabling the measurement of one or more selected logical Pauli operators. These logical measurements can then be used to implement Clifford gates. Code surgery for qLDPC codes was first proposed in Ref.~\cite{Cohen2022}, known as the CKBB protocol, named after its authors. In this protocol, a logical operator is measured using an ancilla system of $O(dw)$ physical qubits, where $w$ is the weight of the logical operator. Since $w \geq d$, the required ancilla size is comparable to that of a surface code. Subsequent works introduced gauging measurements, which reduced the ancilla size to $O(w^{1+o(1)})$ physical qubits~\cite{Cross2024,Ide2024,Williamson2024}. Gauging measurements achieve this by transforming the relevant checks into those of a repetition code and constructing the ancilla system using an expander graph, thereby achieving robustness against certain errors that the CKBB protocol mitigates using additional qubits. 

Code surgery faces an inherent limitation in time efficiency: the parity-check measurements of the deformed code must be repeated for $\Theta(d)$ rounds to reliably correct measurement errors. As a result, lattice surgery on surface codes incurs a time overhead of $O(d)$. In the CKBB and gauging measurement protocols, when logical operators are measured sequentially---one in each block at a time---the total time overhead for measuring $\Theta(k)$ logical operators (a layer of operations on $k$ logical qubits) is $O(kd) = O(d^{1+a})$. 

To improve time efficiency, general techniques have been developed to construct deformed codes that support the simultaneous measurement of multiple logical operators. Two main approaches are devised sticking and brute-force branching~~\cite{Zhang2025}. In devised sticking, an arbitrary number of logical operators can be measured in parallel using a single ancilla system of $O(nd)$ physical qubits~\footnote{Throughout this paper, when discussing devised sticking, we refer specifically to the case where the logical operators to be measured are (i) expressed in standard form, and (ii) have logical thickness one, meaning they act on mutually disjoint sets of logical qubits~\cite{Zhang2025}.}. This approach has a time overhead of $O(d)$ and a qubit overhead of $O(d)$, assuming the underlying code has constant encoding rate. In brute-force branching, an intermediate deformed code is constructed to distribute the logical operators across disjoint supports, enabling their measurements to be performed in parallel. A protocol that combines gauging measurements with brute-force branching (GM+BFB)~\cite{Cowtan2025} achieves a time overhead of $O(dw^{1+o(1)})$ while reducing the qubit overhead to constant. 

Note that code surgery operations alone can only implement Clifford logical gates. A parallelized magic state injection protocol enables the preparation of magic states with (almost) the same resource scaling as code surgery~\cite{Zhang2025a}. 

In addition to typical CC+GT and code surgery protocols, Ref.~\cite{Tamiya2024} proposes preparing gate-teleportation resource states using surface codes (or alternatives such as color codes, rather than concatenated codes), and then implementing gate teleportation on high-rate qLDPC code blocks via code surgery. 

\section{Scheme}


We now sketch our scheme for FTQC. In our scheme, we adopt the code-surgery approach and propose to implement code surgery via gate teleportation. We introduce two key techniques to minimize qubit and time overheads: PCS and LTSP; see Fig.~\ref{fig:scheme} and Sec.~\ref{sec:CS}. 

{In the PCS framework, we simultaneously operate on multiple logical qubits while maintaining a strictly constant qubit overhead. This is achieved by coupling an ancilla system to the memory code (which encodes the data logical qubits) to form a deformed qLDPC code. The ancilla system is constructed as a hypergraph product of two linear codes. The first linear code determines the specific logical operations to be applied, while the second linear code---referred to as the \textit{R code}---is a classical LDPC code with a constant encoding rate. A central finding of this work is that the codewords of the R code establish parallel channels for logical operations: if the R code has a logical dimension $k_R$, a single ancilla system can be coupled to $k_R$ distinct memory-code blocks, enabling parallel logical operations across all blocks. Since both the memory code and the R code possess constant encoding rates, the ancilla system has a total size of $O(kk_R)$, where $k$ is the number of logical qubits per block. This approach permits the simultaneous execution of logical operations on $O(kk_R)$ logical qubits distributed across the blocks, thereby achieving high parallelism with constant qubit overhead.

To ensure high time efficiency, we propose the LTSP scheme to circumvent the temporal bottlenecks of conventional code surgery. Standard implementations typically require $O(d)$ rounds of repeated stabilizer measurements to achieve fault tolerance against measurement errors. We overcome this requirement by introducing \textit{single-shot} properties. Rather than requiring the memory code or the deformed code itself to be single-shot~\cite{Campbell2019,cowtan2025fast}, we demonstrate that single-shot measurements can be effectively realized by utilizing an independent single-shot code within some state preparation circuits. Specifically, we implement stabilizer measurements in a single round via gate teleportation. By utilizing a carefully prepared gate-teleportation resource state with bounded error weight, a single measurement round is sufficient to reliably extract the stabilizer information. This approach shifts the single-shot requirement from the code surgery circuit to the circuit preparing resource states. To control errors on the resource state, we encode each qubit in the preparation circuit as a logical qubit of an \textit{F code}. By selecting the F code to be a classical LTC with a constant encoding rate and constant soundness, we can prepare resource states with (almost) constant qubit overhead in constant depth.

By combining PCS and LTSP, logical operations can be implemented in parallel with constant qubit overhead and circuit depth. However, the architecture naturally imposes a constraint, requiring identical operations to be performed across multiple blocks simultaneously. For general heterogeneous circuits, this constraint necessitates the \textit{serialization} of logical operations, introducing a time overhead factor of $O(k) = O(d^a)$. Additional polylogarithmic factors $O(d^{o(1)})$ arise from the requirements for robustness against stochastic noise and magic state distillation. Ultimately, through a time-space trade-off, we achieve a strictly constant qubit overhead while maintaining a total time complexity within the $O(d^{a+o(1)})$ class. These results are summarized in the following theorem. }

\begin{theorem}
\label{the:FTQC}
Consider a classical-input/classical-output quantum circuit $C$ composed of Clifford and $T$ gates, with size $|C| \leq W D$, where $W$ denotes the width and $D = \mathrm{poly}(W)$ denotes the depth. There exists a constant $\epsilon_* \in (0,1)$ such that for any $\epsilon_L \in (0,1)$, one can efficiently construct a fault-tolerant version $C_{\mathrm{FT}}$ with width $W_{\mathrm{FT}}$ and depth $D_{\mathrm{FT}}$ satisfying 
\begin{align}
\frac{W_{\mathrm{FT}}}{W} &= O_{W \to \infty}(1), \\
\frac{D_{\mathrm{FT}}}{D} &=O_{W \to \infty}(\mathrm{log}^{a+o(1)}(\frac{|C|}{\epsilon_L})),
\end{align}
where $a$ is a constant associated with the memory code block. If $C_{\mathrm{FT}}$ operates under a local stochastic noise model and physical error rate $p \le \epsilon_*$ at every location, then $C_{\mathrm{FT}}$ simulates $C$ up to output error $\epsilon_L$, in the sense that their output distributions differ by at most $\epsilon_L$.
\end{theorem}

A detailed proof is provided in Appendix~\ref{app:threshold}. 

\begin{remark}
{The theorem assumes a physical qubit network with all-to-all connectivity, where two-qubit gates can be applied between any pair of physical qubits. In this regime, the logical qubit network inherits this all-to-all connectivity, allowing the fault-tolerant circuit $C_{\mathrm{FT}}$ to implement any logical circuit $C$ with the stated overheads. Notably, our protocol can be adapted to much more stringent architectural constraints. Specifically, on a physical network with almost constant vertex degree $D_P = O(d^{o(1)})$, we still achieve an almost constant qubit overhead and a time overhead of $O(d^{a+o(1)})$. In this restricted setting, however, the logical qubit network is characterized by a constant vertex degree $D_L = \Theta(1)$. For further details, see Appendix~\ref{app:almost_constant_degree}. }
\end{remark}

\begin{figure*}[htbp]
\centering
\includegraphics[width=0.9\linewidth]{./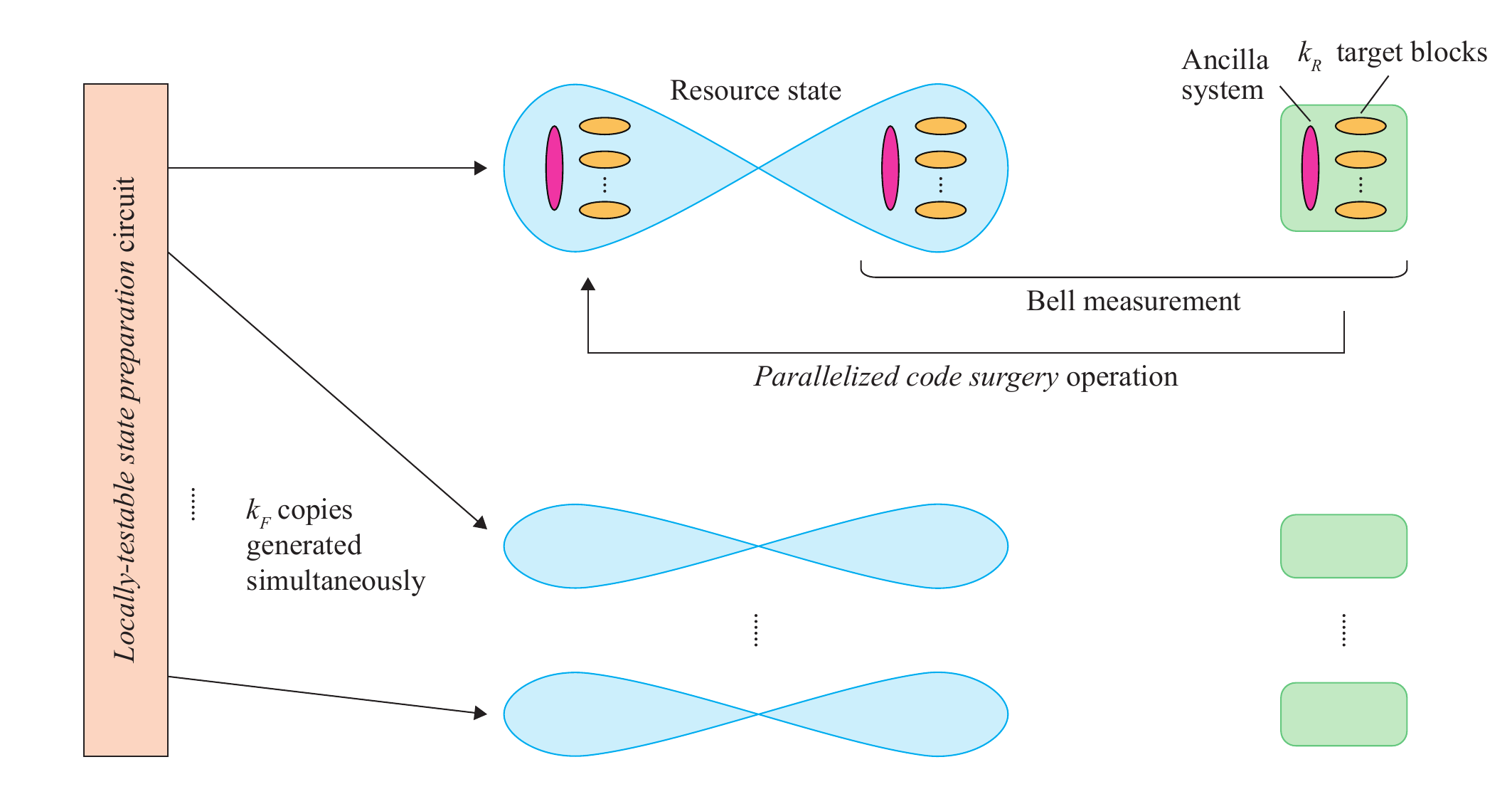}
\caption{
Schematic illustration of our scheme. Each execution of the locally-testable state-preparation circuit produces $k_F$ copies of the resource state, which are then used in gate teleportation to implement parallelized code surgery (PCS). PCS applies logical operations simultaneously on $k_R$ target code blocks, using a single ancilla system shared across the $k_R$ blocks. The locally-testable state-preparation circuit has almost constant qubit and time overhead, while gate-teleportation-based PCS also maintains constant qubit and time overhead.
}
\label{fig:scheme}
\end{figure*}

\begin{figure*}
\begin{minipage}{\linewidth}
\begin{table}[H]
\begin{tabular}{ccc}
\hline\hline
Codes & Parameters & Examples \\
\hline\hline
\multirow{5}{*}{Memory code (Q)} & \multirow{5}{*}{$k = \Theta(n)$ and $d = \Omega(n^{1/a})$} & Hypergraph product, $a = 2$ \cite{Tillich2014} \\
& & Quantum Tanner, $a = 1$ \cite{Leverrier2022} \\
& & Expander LP , $a = 1+\epsilon$ \cite{Panteleev2022} \\
& & DLV, $a = 1+\epsilon$ \cite{dinur2024expansion} \\
& & LH,  $a = 1$ \cite{Lin2022} \\
& & DHLV, $a = 1$ \cite{Dinur2022} \\
\hline
\multirow{2}{*}{R code (C)} & \multirow{2}{*}{$k_R = \Theta(n_R)$ and $d_R = \Omega\left(n_R^{1/a_R}\right)$} & Expander, $a_R = 1$ \cite{556667} \\
& & Tanner, $a_R = 1$ \cite{1056404} \\
\hline
F code (C) & $k_F = \Theta(n_F)$, $d_F = \Omega\left(n_F^{1/a_F}\right)$, and $s = \Theta(1)$  & PK, $a_F = 1$ \cite{Panteleev2022}\\& & ~~Lossless-expander LTCs, $a_F = 1$ \cite{lin2022c3LTC}
\\
\hline
\end{tabular}
\caption{
Quantum (Q) and classical (C) codes utilized in our protocol. The quantum computer is structured into multiple blocks, each encoded in a qLDPC code, called memory code. The quantum computation is performed on logical qubits within these memory-code blocks via code surgery, where each operation corresponds to a specific deformed code. The deformed code is generated utilizing a classical linear code, called R code. To implement the parity-check measurements of a deformed code (or the memory code), we prepare the corresponding gate-teleportation resource states using another classical linear code, referred to as the F code. This code has to be a locally testable code, and $s$ denotes its soundness. In examples, $\epsilon$ denotes an arbitrary positive number. Additionally, we employ a low-distance surface code to suppress errors during resource state preparation and to inject $T$-gate magic states, and a high-distance color code to prepare $S$-gate magic states. We use $d_S$ and $d_C$ to denote distances of the surface and color codes, respectively. 
}
\label{tab:codes}
\end{table}
\end{minipage}
\end{figure*}

\subsection{Codes}

Our scheme employs a set of error correction codes, as listed in Table~\ref{tab:codes}. A qLDPC code, referred to as the memory code, is used to encode logical qubits. We perform operations on these logical qubits via code surgery, which requires the construction of corresponding deformed codes. A classical LDPC code, called the R code, is used in constructing the deformed codes, replacing the role of the repetition code in the CKBB protocol. Another classical LDPC code, referred to as the F code, serves as the core component of the resource-state factory, which produces the resource states required for gate teleportation. The requirements for these codes, along with representative examples, are presented in Table~\ref{tab:codes}. Additionally, our scheme also employs a surface code and a color code for various auxiliary tasks. 

\subsection{Quantum computer}

\begin{figure*}[htbp]
\centering
\includegraphics[width=0.9\linewidth]{./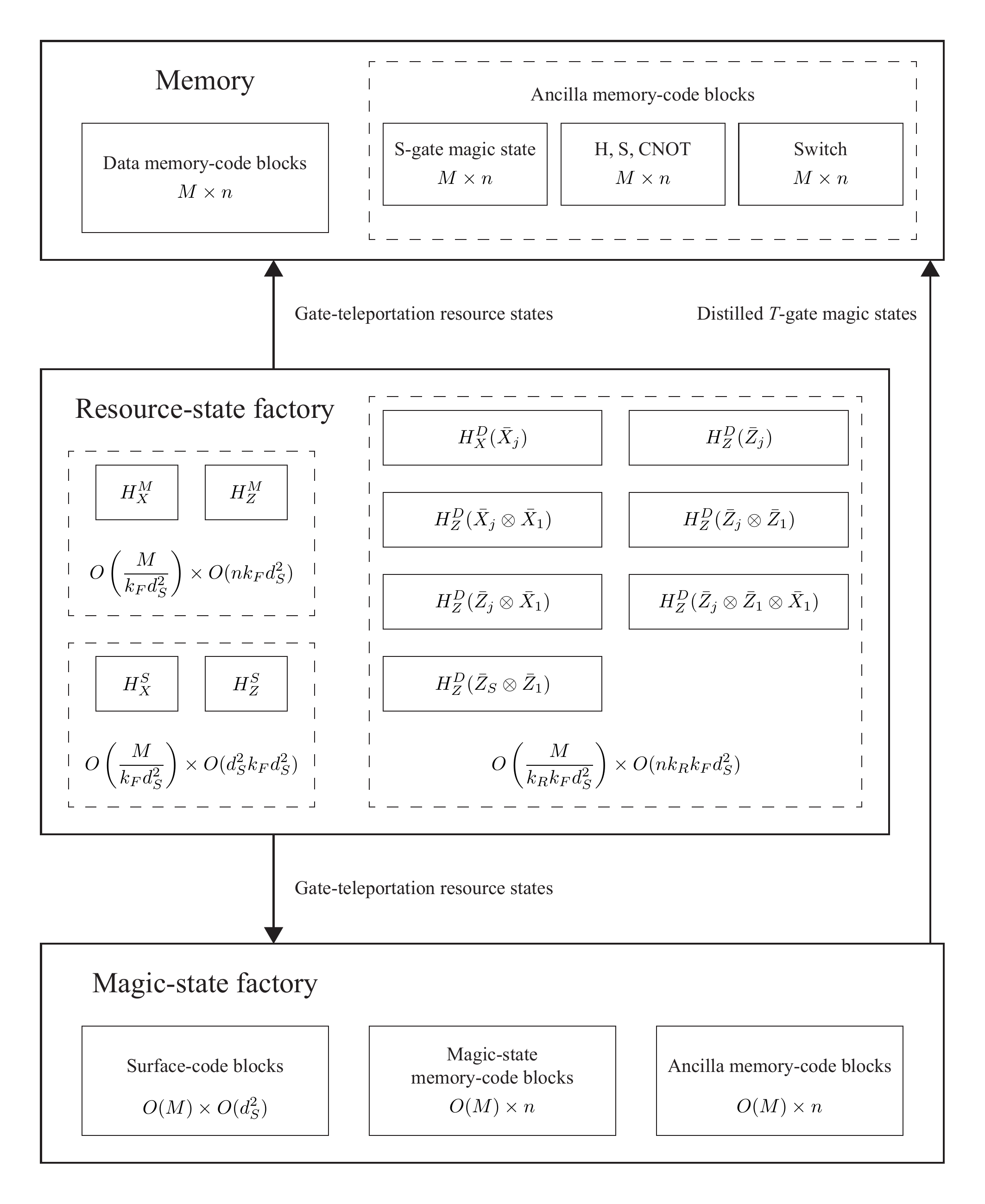}
\caption{
Quantum computer and allocation of physical qubits.
In the memory, the memory-code blocks are divided into four groups according to their functions. Each group contains $M$ blocks, and each block occupies $n$ physical qubits.
The resource-state factory produces resource states specified by the corresponding check matrices, which are organized into three families (indicated by the dashed boxes). In each dashed box, the notation $O(A)\times O(B)$ denotes the total number of physical qubits used for generating resource states within that family: each type of resource state is prepared by a tailored locally-testable state-preparation circuit requiring $O(B)$ physical qubits, and $O(A)$ such circuits are run in parallel.
In the magic-state factory, $O(M)$ surface-code blocks are used to inject $T$-gate magic states into $O(M)$ memory-code blocks, while an additional $O(M)$ ancillary memory-code blocks are employed for magic-state distillation.
}
\label{fig:QC}
\end{figure*}

As depicted in Fig.~\ref{fig:QC}, the quantum computer comprises three main components: the memory, the resource-state factory, and the magic-state factory. The memory consists of multiple memory-code blocks. $M$ of the blocks store the logical information, giving a total of $K = Mk$ data logical qubits. Here, $k$ is the logical dimension of the memory code. The remaining memory-code blocks serve as ancillas. The resource-state factory produces the resource states required for gate teleportation, while the magic-state factory prepares encoded magic states. 

\begin{figure*}
\begin{minipage}{\linewidth}
\begin{table}[H]
\begin{tabular}{cc}
\hline\hline
~Logical operator (set) $\sigma$~ & Resource states \\
\hline\hline
$\bar{X}_j$ & $H^D_Z(\bar{X}_j)$, $H^M_Z \times k_R$ \\
\hline
$\bar{Z}_j$ & $H^D_Z(\bar{Z}_j)$, $H^M_X \times k_R$ \\
\hline
$\bar{X}_j \otimes \bar{X}_1$ & $H^D_Z(\bar{X}_j \otimes \bar{X}_1)$, $H^M_Z \times 2k_R$ \\
\hline
$\bar{Z}_j \otimes \bar{Z}_1$ & $H^D_Z(\bar{Z}_j \otimes \bar{Z}_1)$, $H^M_X \times 2k_R$ \\
\hline
$\bar{Z}_j \otimes \bar{X}_1$ & $H^D_Z(\bar{Z}_j \otimes \bar{X}_1)$, $H^M_X \times k_R$, $H^M_Z \times k_R$ \\
\hline
$\bar{Z}_j \otimes \bar{Z}_1 \otimes \bar{X}_1$ & ~~$H^D_Z(\bar{Z}_j \otimes \bar{Z}_1 \otimes \bar{X}_1)$, $H^M_X \times 2k_R$, $H^M_Z \times 2k_R$~~ \\
\hline
$\bar{Z}_S \otimes \bar{Z}_1$ & $H^D_Z(\bar{Z}_S \otimes \bar{Z}_1)$, $H^S_X \times k_R$, $H^M_X \times k_R$ \\
\hline
$\bar{Z}_C \otimes \bar{Z}_1$ & $H^D_Z(\bar{Z}_C \otimes \bar{Z}_1)$, $H^C_X \times k_R$, $H^M_X \times k_R$ \\
\hline
\end{tabular}
\caption{
List of logical measurements. $\bar{X}_j$ ($\bar{Z}_j$) denotes the $X$ ($Z$) logical operator of the $j$th logical qubit in a memory-code block; $\bar{Z}_S$ is the $Z$ logical operator of a distance-$d_S$ surface-code block; and $\bar{Z}_C$ is the $Z$ logical operator of a distance-$d_C$ color-code block. A resource state is specified by the corresponding check matrix $H$: the state $H$ is used to measure $Z$ stabilizer operators represented by $H$. The notation $H \times m$ denotes $m$ copies of the resource state. Measurements of $X$ stabilizer operators are obtained by applying transversal Hadamard gates to rotate the basis. We use $H^c_P$ with $c = M,D,S,C$ to represent the $P\in \{X,Z\}$ check matrix of the memory, deformed, surface, and color codes, respectively; and $H^D_Z(\sigma)$ denotes the $Z$ check matrix of the deformed code tailored to the measurement of logical operator(s) $\sigma$. Each set of resource states enables the simultaneous application of the same logical measurement on $k_R$ identical systems (target blocks). 
}
\label{tab:measurements}
\end{table}
\end{minipage}
\end{figure*}

\subsection{Logical measurements---Parallelized code surgery}

\begin{figure}[htbp]
\centering
\includegraphics[width=\linewidth]{./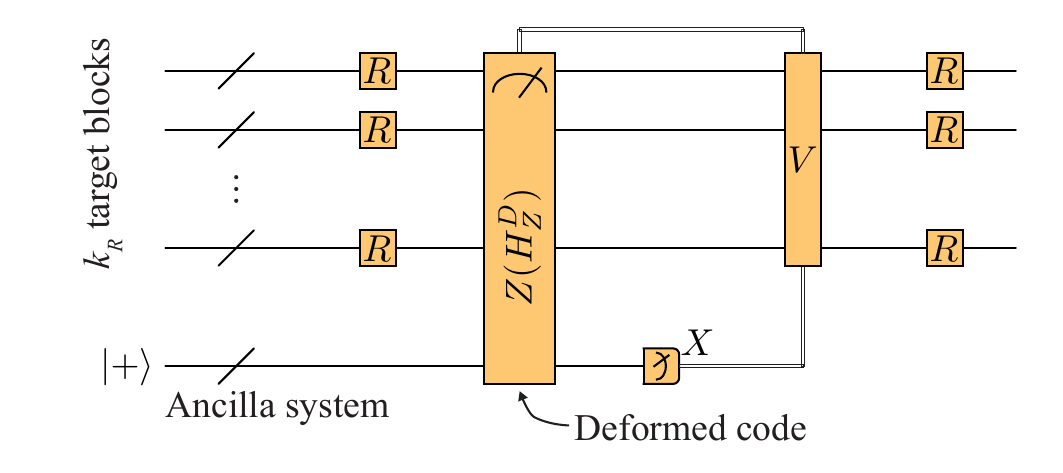}
\caption{
Circuit for parallelized code surgery. The ancilla system is transversally initialized in the state $\ket{+}$ and measured in the $X$ basis. The $R$ gate rotates the logical measurement basis to $Z$: for each memory-code block on which the measured Pauli operator acts as $X$, a transversal Hadamard gate is applied. The central step of code surgery is a single round of parity-check measurements of the deformed code, specifically measuring $Z$ stabilizer operators $Z(H^D_Z)$. Implemented via gate teleportation with a resource state from locally-testable state preparation, this single round suffices to guarantee fault tolerance. The feedback gate $V$ is a Pauli gate. 
}
\label{fig:PCS}
\end{figure}

We achieve universal quantum computation using measurements of logical Pauli operators, supplemented with encoded magic states. Table~\ref{tab:measurements} summarizes all logical measurements used in our scheme. These measurements are realized through PCS; see Fig.~\ref{fig:PCS} for the circuit. Each logical measurement corresponds to a specifically tailored deformed code. The construction of the deformed codes is detailed in Sec.~\ref{sec:PCS}. A key feature of PCS is that the deformed codes require only a constant qubit overhead. 

We refer to the collection of code blocks involved in a logical measurement as a target block. For example, the measurement $\bar{Z}_j \otimes \bar{Z}_1 \otimes \bar{X}_1$ acts on a target block composed of three memory-code blocks. In conventional code surgery, a logical measurement is performed on a single target block at a time. In contrast, PCS performs identical logical measurements simultaneously across $k_R$ identical target blocks, where $k_R$ is the logical dimension of the R code (see Fig.~\ref{fig:scheme}). {In this context, `parallelized' refers specifically to the batching of identical operations applied to multiple blocks. }

\subsection{Gate teleportation---Locally-testable state preparation}

\begin{figure}[htbp]
\centering
\includegraphics[width=\linewidth]{./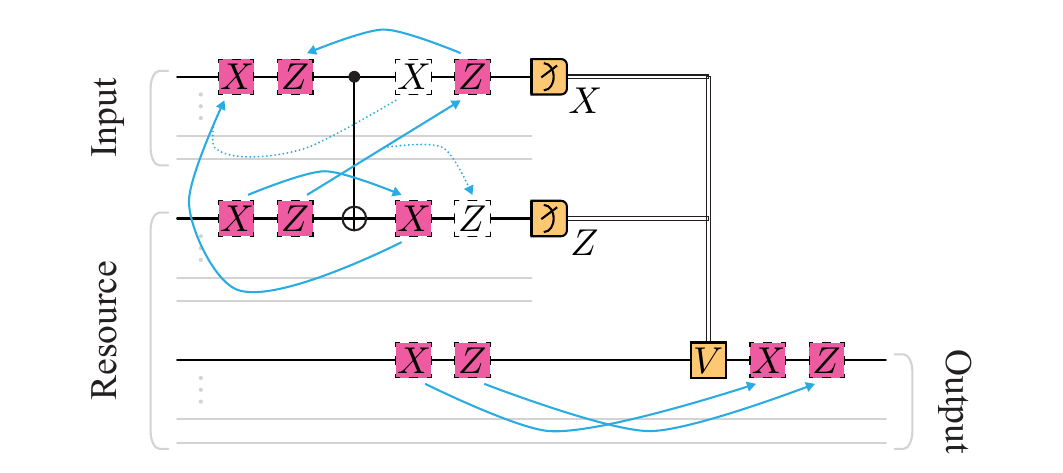}
\caption{
Gate teleportation. The circuit involves three sets of qubits: the first set carries the input state, while the second and third sets encode the resource state. The diagram illustrates operations on a representative qubit from each set (black lines), with the same operations applied in parallel to the remaining qubits (gray lines). Depending on measurement outcomes, a Pauli gate $V$ is applied on the output qubit. Dashed boxes indicate Pauli errors, and white boxes denote trivial errors. Arrows trace the paths of error propagation through the circuit. 
}
\label{fig:teleportation}
\end{figure}

The fundamental operations in error correction and code surgery are parity-check measurements. In our scheme, we perform parity-check measurements on various quantum codes: the memory code, surface code, color code, and deformed codes. In the standard CSS code formalism, each code corresponds to two sets of stabilizer operators---$X$-type and $Z$-type---represented by the corresponding check matrices $H_X$ and $H_Z$. We measure the $X$ and $Z$ generators separately. For each of $H_X$ and $H_Z$, the parity-check measurements are implemented via gate teleportation, each consuming a dedicated resource state; see Fig.~\ref{fig:teleportation} for the circuit. 

\begin{figure}[htbp]
\centering
\includegraphics[width=\linewidth]{./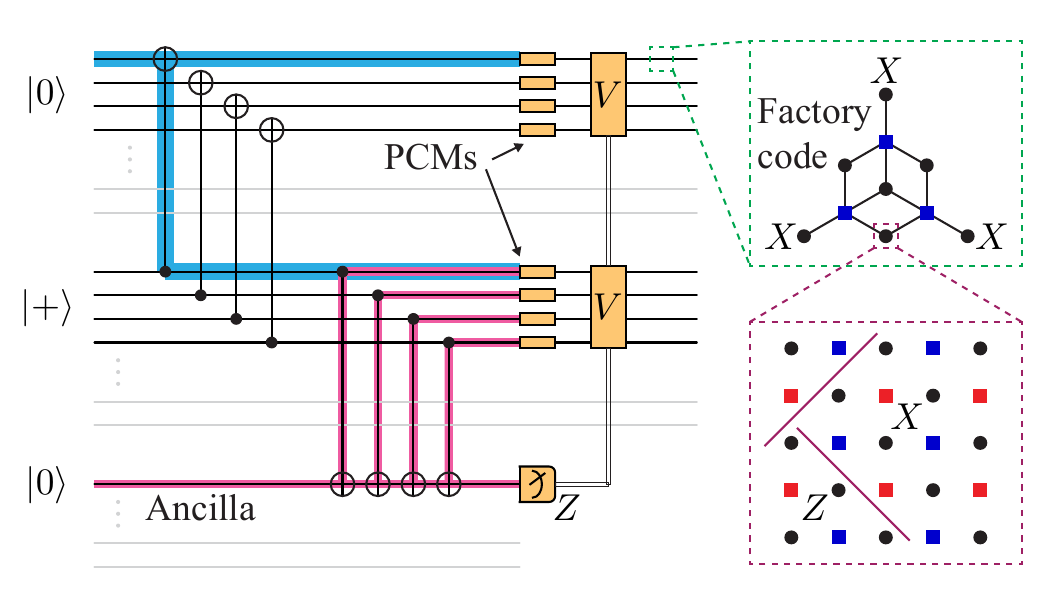}
\caption{
Circuit for generating resource states used to measure $Z$ stabilizers. The diagram illustrates operations on selected representative qubits (black lines), with similar operations applied in parallel to the remaining qubits (gray lines). The procedure begins by applying transversal controlled-NOT gates between two sets of qubits to prepare Bell states. An ancilla qubit is then used to measure each $Z$-stabilizer operator of the code, with the illustrated example assuming a weight-four stabilizer for the representative ancilla qubit. Measurement outcomes of the ancilla qubits determine Pauli gates $V$, which are subsequently applied. Each qubit in this circuit is encoded using a F code, which has only $Z$-type stabilizers. An example Tanner graph is shown, where blue (red) squares represent $Z$ ($X$) stabilizers. Note that the depicted graph is from the $[7,4,3]$ Hamming code; in practice, the F code should be an LTC with large enough distance. Furthermore, each qubit of the F code is itself encoded in a low-distance surface code, forming a two-level encoding scheme. All operations are performed via transversal gates within both the factory and surface codes. After preparing the encoded resource states, decoding is performed to obtain unencoded resource states. This is achieved by measuring selected qubits in the $X$ and $Z$ bases as illustrated in the figure and applying Pauli gates to the remaining qubits based on the outcomes. The measurement pattern is determined by the standard form of the generator matrices. The entire decoding process can be completed in $O(1)$ time, producing $k_F$ copies of unencoded resource states per circuit run. To correct errors, parity-check measurements (PCMs) of the F code are performed before decoding. Since the F code is an LTC, measurement errors in PCMs are equivalent to low-weight Pauli errors, and a single round of PCMs suffices to reliably detect and correct $X$ errors (including measurement errors on ancilla qubits) by exploiting the propagation of F-code $Z$ stabilizers (highlighted in bold cyan and magenta lines). As a result, each resource state copy has only low-weight residual errors. 
}
\label{fig:LTSP}
\end{figure}

We prepare the gate-teleportation resource states using LTSP. Each type of resource state is prepared by a tailored state-preparation circuit. Each execution of such a circuit generates $k_F$ identical copies of the corresponding resource state, where $k_F$ denotes the logical dimension of the F code; see Figs.~\ref{fig:scheme}~and~\ref{fig:LTSP}. We note that each qubit in the circuit is encoded via the concatenation of the F code and a surface code to protect against both types of Pauli errors. The detailed construction of these circuits is provided in Sec.~\ref{sec:LTSP}. 

A key feature of LTSP is that, by encoding with a constant-soundness LTC, the prepared resource states possess a crucial property: the parity-check measurements performed using these resource states are effectively free of measurement errors. Consequently, only a single round of measurements is required in PCS (and other code-surgery protocols). 

\begin{figure*}
\begin{minipage}{\linewidth}
\begin{table}[H]
\begin{tabular}{ccc}
\hline\hline
Logical operations & Logical measurements & Extra resource states \\
\hline\hline
Initialization & & $H^M_X$ \\
\hline
$Mea_j$ & $\bar{Z}_j$ & \\
\hline
$H_j$ & $\bar{Z}_j \otimes \bar{Z}_1$, $\bar{X}_j$, $\bar{Z}_j \otimes \bar{X}_1$, $\bar{Z}_1$ & $H^M_Z$ \\
\hline
$S_j$ & $\bar{Z}_1 \otimes \bar{Z}_1$, $\bar{Z}_j \otimes \bar{Z}_1 \otimes \bar{X}_1$, $\bar{X}_1$ & $H^M_Z$ \\
\hline
$T_j$ & ~$\bar{Z}_j \otimes \bar{Z}_1$, $\bar{X}_1 \times 2$, $\bar{Z}_1 \otimes \bar{Z}_1$, $\bar{Z}_j \otimes \bar{Z}_1 \otimes \bar{X}_1$~ & ~$T$-gate magic state, $H^M_Z$~ \\
\hline
$CNOT_{a,b}$ & $\bar{Z}_{a} \otimes \bar{Z}_1$, $\bar{X}_{b} \otimes \bar{X}_1$, $\bar{Z}_1$ & $H^M_Z$ \\
\hline\hline
Noisy $T$-gate magic state preparation & $\bar{Z}_S \otimes \bar{Z}_1$ & $H^S_X$, $H^S_Z$, $H^M_Z$\\
\hline
~Fault-tolerant $S$-state magic state preparation~ & $\bar{Z}_C \otimes \bar{Z}_1$ & $H^C_X$, $H^C_Z$, $H^M_Z$\\
\hline
\end{tabular}
\caption{
Universal set of logical operations employed in our scheme. 
Initialization prepares all logical qubits of a memory-code block transversally in the state $\ket{0}$. 
$Mea_j$, $H_j$, $S_j$, and $T_j$ denote, respectively, the $Z$-basis measurement, Hadamard gate, $S$ gate, and $T$ gate acting on the $j$th logical qubit of a memory-code block. Note that $Mea_j$ can be used to reinitialize a logical qubit. $CNOT_{a,b}$ is the controlled-NOT gate with control the $a$th logical qubit of block $A$ and target the $b$th logical qubit of block $B$, where $A$ and $B$ may be the same or different blocks. Each logical operation is implemented through a set of logical measurements and therefore consumes corresponding resource states; the table also lists any additional resource states consumed in addition to those used in the logical measurements. A $T$ gate consumes one copy of the $T$-gate magic state, whereas an $S$ gate requires one copy of the $S$-gate magic state as input but does not consume it. 
In addition to the universal operation set, we also list the magic-state preparation operations. 
}
\label{tab:operations}
\end{table}
\end{minipage}
\end{figure*}

\subsection{Universal quantum computation}

We adopt the following universal set of logical gates: Hadamard gate, $S$ gate, $T$ gate, and controlled-NOT gate. Using logical measurements, we can realize controlled-NOT and Hadamard gates directly, while $S$ and $T$ gates are implemented via logical measurements assisted by their corresponding magic states. In addition to the logical gates, initialization and measurement of memory-code logical qubits are also required. Table~\ref{tab:operations} summarizes all logical operations, with relevant circuits given in Appendix~\ref{app:operations}. 

For the $S$-gate magic state, we prepare it using a color code with a sufficiently large code distance $d_C$. The state is first created on a color-code block using the transversal $S$ gate and then transferred to a memory-code block via logical measurements. Because the operations are fault-tolerant, the resulting magic state attains fault-tolerant–level fidelity, eliminating the need for further magic-state distillation. 

For the $T$-gate magic state, we prepare it using a surface code with a small code distance $d_S$. A noisy magic state is first generated on a surface-code block following the protocol in Ref.~\cite{Li2014AMS}, and then transferred to a memory-code block. Since the prepared state is noisy, further distillation is required, which is performed using a constant-spacetime-overhead protocol~\cite{Nguyen2024}. 

Each logical $S$ gate requires one $S$-gate magic state as input. Since the magic state is preserved after use, it can be reused. To exploit this, we prepare $M$ copies of the $S$-gate magic state prior to computation, with each copy stored in an ancilla memory-code block. These ancilla blocks serve $S$ gates for data memory-code blocks throughout the entire computation. 

In contrast, each $T$ gate consumes one copy of the $T$-gate magic state. Therefore, these magic states must be generated on demand during computation. 

\subsection{Cost analysis}
\label{sub:cost analysis}
We now present a scheme for qubit allocation and logical circuit compilation, followed by an analysis of the associated resource overhead. While the scheme leaves significant room for further optimization, it is sufficient to establish our main conclusions: a constant qubit overhead and a time overhead of $O(kd^{o(1)}) = O(d^{a+o(1)})$. 

The allocation of qubits in the quantum computer is illustrated in Fig.~\ref{fig:QC}. In addition to data memory-code blocks, the memory includes ancilla memory-code blocks organized into three sectors: one stores $S$-gate magic states, another supports the implementation of $H$, $S$, and controlled-NOT gates, and the third manages slot switching in PCS. For the latter, when logical operations are performed on $q < k_R$ target blocks, ancilla blocks are used to fill the remaining $k_R - q$ slots. Each sector contains $M$ ancilla blocks. Since the memory code has constant encoding rate, the total number of physical qubits in memory is $O(Mk)$, where $Mk$ is the total number of logical qubits in the data memory-code blocks. 

{The resource-state and magic-state factories also require physical qubits. Assume the F code has a constant encoding rate. To serve all $M$ data blocks simultaneously, the factories would naively require $O(Mkd_S^2)$ physical qubits, where the factor $d_S^2$ arises from the encoding with the surface code (see Fig.~\ref{fig:LTSP}). To recover a \textit{strictly constant} qubit overhead, we implement a time-space trade-off: by using fewer physical qubits in the factories and serving data blocks in batches of $O(M/d_S^2)$, the qubit requirement for the factories is reduced to $O(Mk)$ (see Fig.~\ref{fig:QC}). This trade-off incurs an additional time overhead factor of $O(d_S^2) = O(\mathrm{polylog}(d))$, which is absorbed into the $O(d^{o(1)})$ term of our final complexity results. Altogether, the quantum computer utilizes $O(Mk)$ physical qubits, achieving the target of constant qubit overhead.}

{We now address the time overhead of the protocol. As we will show in Section~\ref{sec:CS}, the LTSP scheme generates resource states with almost constant qubit overhead and circuit depth. Utilizing these resource states, the PCS gadget can then be implemented with strictly constant qubit overhead and circuit depth. Therefore, the integration of PCS and LTSP enables the parallel processing of $O(Mk)$ logical qubits using almost $O(Mk)$ physical qubits (without the time-space trade-off) in almost constant depth, potentially achieving a nearly constant spacetime cost. 

However, PCS and LTSP impose specific compilation constraints: each execution of PCS implements identical operations across $k_R$ code blocks, and each run of the LTSP circuit generates $k_F$ identical resource state copies. These characteristics introduce a single-instruction-multiple-data constraint on the logical circuit compilation. For structured circuits with repetitive gate motifs---such as the Trotterized simulation of Hamiltonians with translational symmetry---it may be possible to approach nearly constant spacetime overhead. Conversely, for general circuits with heterogeneous patterns, these constraints necessitate the \textit{serialization} of logical operations. This serialization process is the primary source of the $O(d^{a+o(1)})$ time overhead reported in our final results. }

Logical operations are serialized as follows. Consider compiling a single layer of heterogeneous logical operations acting on disjoint data logical qubits. These operations are serialized into multiple sub-layers so that, in each sub-layer, every data memory-code block performs at most one logical operation. Consequently, the number of operations per sub-layer is at most $M$. This serialization reduces to a graph coloring problem, and a full layer of operations can be scheduled within $O(k)$ sub-layers; see Ref.~\cite{Nguyen2024} and Appendix~\ref{app:serialization}. If the time overhead of each sub-layer is $T$, then the total time overhead is $O(kT)$. This introduce a time overhead factor of $O(k) = O(d^a)$, which represents the primary contributor to the overall time complexity. Next, we analyze the time overhead of each sub-layer. 

The physical circuit depth of all logical operations is $O(1)$, assuming access to sufficient gate-teleportation resource states and distilled magic states. The circuits of logical operations are detailed in Appendix~\ref{app:operations}, where logical measurements are implemented using the PCS circuit shown in Fig.~\ref{fig:PCS}, and the parity-check measurements in PCS are realized via the gate-teleportation circuit in Fig.~\ref{fig:teleportation}. Consequently, the time overhead $T$ is entirely determined by the efficiency of the two state factories. 

Gate-teleportation resource states used in logical operations fall into two categories: those for parity checks of the memory code and those for parity checks of deformed codes. In PCS, a deformed code can simultaneously act on $k_R$ code blocks. Accordingly, to implement up to $M$ operations in a sub-layer, the number of required memory-code resource states is $O(M)$, while that of deformed-code resource states is $O(M/k_R)$. In LTSP, each execution of the preparation circuit produces $k_F$ identical copies of a resource state. For memory-code resource states, the preparation circuit requires $O(k k_F d_S^2)$ physical qubits; for deformed-code resource states, it requires $O(k k_R k_F d_S^2)$ qubits. Here, $d_S = \mathrm{polylog}(d)$ is the surface code distance, which will be justified in Sec.~\ref{sec:LTSP}. In both cases, the circuit depth is $O(d_S)$. Therefore, with $O(Mk)$ physical qubits allocated to the resource-state factory (consider the time-space trade-off), one can generate all gate-teleportation resource states needed for a sub-layer within time $O(d_S^3)$. 

For each sub-layer of logical operations, at most $M$ copies of distilled $T$-gate magic states are required. The preparation of these states proceeds in two stages. In the first stage, we prepare $O(M)$ copies of noisy magic states on distance-$d_S$ surface-code blocks, which are subsequently transferred to memory-code blocks---each copy is transferred from a surface-code block to the first logical qubit of a memory-code block. The noisy-state preparation circuit requires $O(Mk)$ physical qubits and has a depth of $O(1)$, while the necessary resource states are generated in the resource-state factory with a time overhead of $O(d_S^3)$. In the second stage, we distill the noisy magic states. Using a constant-overhead distillation protocol~\cite{Nguyen2024}, we can obtain $M$ copies of distilled magic states from the noisy inputs. The distillation stage incurs a time overhead of $O(d^{o(1)} d_S^3)$, where $d^{o(1)}$ originates from the distillation circuit, and the $d_S^3$ factor arises from the resource-state generation. 

Based on the above analysis, the total time overhead for each sub-layer is $O(d^{o(1)}d_S^3)$ {(the $d^{o(1)}$ factor originates from the magic state distillation protocol~\cite{Nguyen2024}; see Theorem~\ref{the:constantmagic})}, resulting in a time overhead of $O(kd^{o(1)}d_S^3) = O(d^{a+o(1)})$ for an entire layer. Note that $d_S = \mathrm{polylog}(d)$ {(an explicit analytical expression for $d_S$ is provided in Appendix~\ref{app:SC_overhead})}. In addition to executing logical operations, the computation also requires the preparation of $S$-gate magic states and the initialization of data memory-code blocks prior to computation. These tasks can be completed in time $O(kd_S^3)$---the same order as that of an operation layer---given $O(Mk)$ physical qubits. A detailed analysis is provided in Appendix~\ref{app:cost}. 

\section{Low-overhead code surgery}
\label{sec:CS}

In the resource-efficient code surgery protocol, specifically devised sticking, the overall spacetime overhead scales as $O(d^2)$. This overhead arises from two sources. One factor of $d$ comes from the qubit cost of constructing the ancilla system in code surgery. In devised sticking, the ancilla size scales as $O(kd)$ to measure $\Theta(k)$ logical operators simultaneously. The other factor of $d$ reflects the inherent time-efficiency limitation of conventional code surgery, which requires $\Theta(d)$ rounds of parity-check measurements to suppress measurement errors. In this work, we develop an approach that eliminates both $d$-scaling factors, enabling a constant-spacetime-overhead implementation of code surgery. However, it imposes constraints on the operation set, resulting in an overall time overhead of $O(d^{a+o(1)})$ when compiling a general logical circuit. 

We address the qubit cost with PCS. In this approach, code surgery is applied simultaneously across multiple memory blocks using a single ancilla system. Consequently, with an ancilla of size $O(k\mathrm{poly}(d))$, we can operate on $\Theta(k\mathrm{poly}(d))$---the same polynomial---logical qubits in parallel, resulting in a qubit overhead that remains constant. 

We overcome the inherent time-efficiency limitation of code surgery through the combination of gate teleportation and LTSP. The central challenge is to reduce the time overhead without compromising the constant qubit overhead~\cite{Gottesman2014,Cowtan2025}. To this end, we utilize gate teleportation to perform parity-check measurements, and propose generating the required resource states using a classical LTC. Leveraging the local testability property, we demonstrate that the overhead from state preparation contributes only an almost constant factor to the qubit cost, while reducing the time overhead of code surgery to almost constant. 

Although the $d^2$ factor is eliminated, the use of PCS and LTSP imposes constraints on the compilation of logical circuits. In particular, only an operation set of size $O(1)$ can be applied to each memory block, similar to the CC+GT protocol using state distillation~\cite{Nguyen2024}. This limited operation set restricts intra-block parallelization and necessitates serialization of logical operations. As a result, the overall time overhead scales as $O(kd^{o(1)}) = O(d^{a+o(1)})$. 

\subsection{Parallelized code surgery}
\label{sec:PCS}

\begin{figure}[htbp]
\centering
\includegraphics[width=\linewidth]{./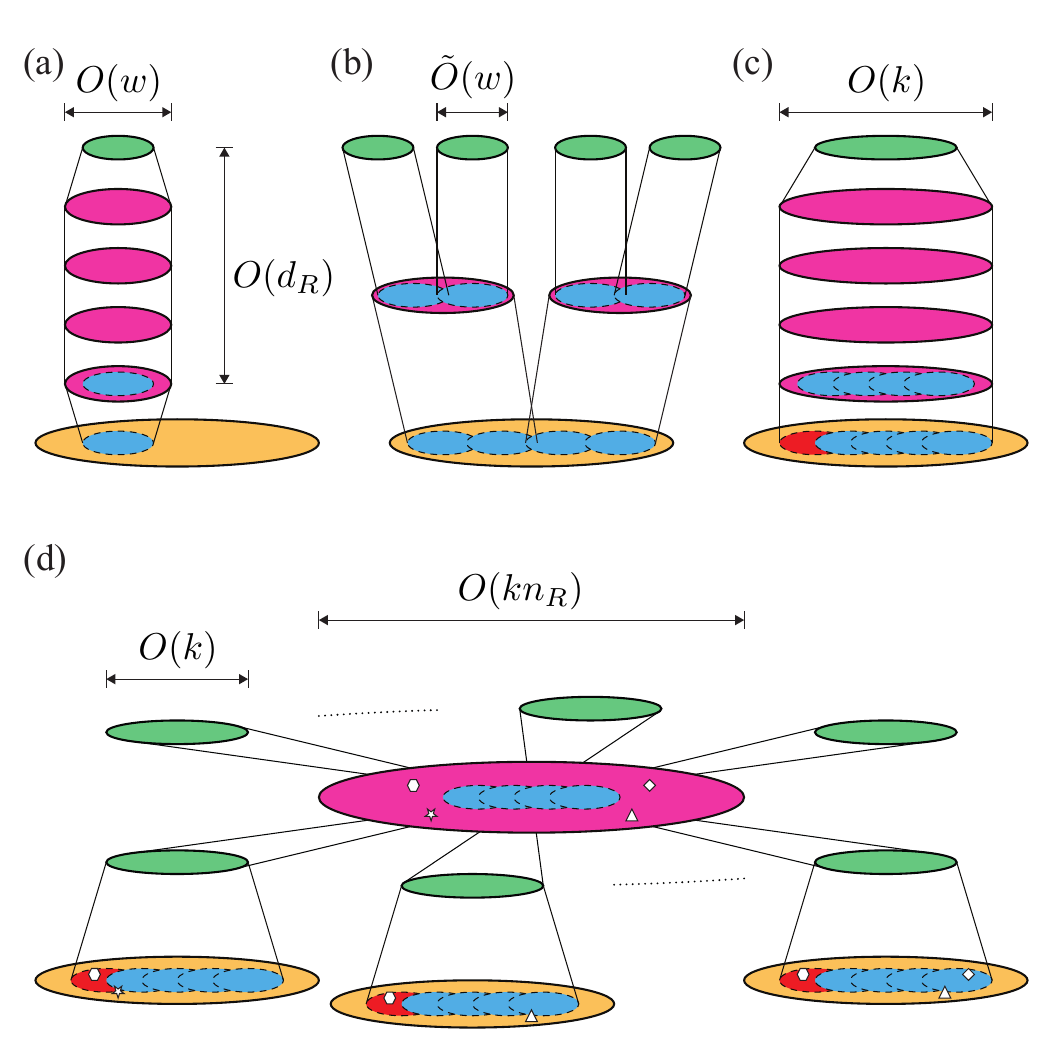}
\caption{
Code surgery schemes. Each orange circle represents a memory block. Magenta and green circles denote two types of subsystems that constitute the ancilla system. Dashed cyan circles indicate logical operators to be measured.
(a) CKBB protocol. The ancilla system comprises multiple subsystems and is coupled to the support of the target logical operator. 
(b) GM+BFB protocol. The ancilla system consists of two components: branch-sticker subsystems (magenta), which directly couple to memory blocks to isolate logical operators, and gauging-measurement subsystems (green), which couple to the branch stickers to facilitate readout. Although only one layer of branch stickers is shown, measuring $q$ logical operators in parallel requires $O(\log d)$ layers in practice. 
(c) Devised sticking. The ancilla system is coupled to a subset of the memory block that may overlap with logical operators beyond the targeted ones. By appropriately designing the glue code, any subset of logical operators can be selectively measured, while unmeasured logical operators (dashed red circles) are preserved. 
(d) Parallelized code surgery. Logical operators on multiple memory blocks are measured simultaneously. The ancilla system comprises two parts: the large magenta circle corresponds to the second columns of $H^D_X$ and $H^D_Z$ in Eqs.~(\ref{eq:HDX2})~and~(\ref{eq:HDZ2}); the green circles correspond to the third columns. There are $n_R$ green-circle subsystems in total, with $k_R$ of them coupled to memory blocks. Errors on the magenta subsystem (indicated by white-face markers) are equivalent to errors on the memory blocks. 
}
\label{fig:stickers}
\end{figure}

Before presenting our scheme, we first review the qubit overhead in several existing code surgery protocols; see Fig.~\ref{fig:stickers}. In the CKBB protocol, the ancilla system is constructed as a hypergraph product code, which is a product of two linear codes, referred to as the glue code and the R code throughout this paper. To measure a logical operator via code surgery, a layer of qubits along the glue-code direction [horizontal direction in Fig.~\ref{fig:stickers}(a)] is coupled to the support of the operator to extract its eigenvalue. Consequently, the glue code has size $\Theta(w)$, where $w$ is the weight of the logical operator. The R code is chosen as a repetition code of length $\Theta(d)$ to suppress errors in the ancilla, thereby setting the size of the ancilla in the other direction [vertical direction in Fig.~\ref{fig:stickers}(a)]. Together, these yield an overall ancilla size of $O(wd)$ per measured logical operator. In the gauging measurement protocol, the R-code dimension is minimized by using an expander-based construction that controls errors through graph-theoretic properties, reducing the ancilla size to $O(w^{1+o(1)})$ per measured operator. In devised sticking, the ancilla is also constructed as a hypergraph product code. However, the glue-code layer is coupled to multiple logical operators simultaneously. By carefully designing the glue code, the ancilla can be programmed to measure an arbitrary subset of logical operators. When measuring up to $k$ logical operators in parallel, the glue code has size $O(k)$, resulting in an overall ancilla size of $O(kd)$. Thus, the qubit overhead per logical operator is reduced to $O(d)$. 

In our approach, we further reduce the qubit overhead by coupling a single ancilla system to multiple memory blocks; see Fig.~\ref{fig:stickers}(d). Specifically, we select the R code to be a constant-rate LDPC code (see Table~\ref{tab:codes}). 
We require the R code to have distance $d_R = \Omega\left(n_R^{1/a_R}\right)$ for some finite constant $a_R$, ensuring that its block length satisfies $n_R = \mathrm{poly}(d)$ when $d_R = \Theta(d)$. Each independent codeword of the R code functions as a dedicated channel for reading out the eigenvalues from a corresponding memory block. Consequently, a R code of length $n_R$ enables simultaneous measurement of $\Theta(n_R)$ memory blocks and $\Theta(kn_R)$ logical operators. This construction reduces the eventual qubit overhead per logical operator to $O(1)$. 

The idea of coupling an ancilla system to the memory multiple times may have broader applications. It has been discussed as a potential strategy for measuring high-weight logical operators~\footnote{In a discussion with Armands Strikis (unpublished data, 2024)}. This technique may also offer a pathway to reducing qubit overhead when measuring multiple operators within a single code block. However, such applications remain underdeveloped due to the absence of a rigorous justification of fault tolerance---specifically, there is no guarantee that the resulting deformed code maintains a well-lower-bounded distance. In contrast, the approach presented in this work provides a systematic framework for constructing the ancilla system and designing its coupling to memory blocks, ensuring a provable code distance while faithfully implementing the desired logical operations. 

We now present the scheme of PCS. Following the standard formalism for CSS codes, we represent the target code by the tuple $(H_X, H_Z, J_X, J_Z)$. This corresponds to the code of the target block on which the logical measurement is applied. The target block could be a single memory-code block or a composite of multiple blocks. To simplify notation, we use $[[n, k, d]]$ to denote the parameters of the target code, although this notation has also been used for the memory code. Here, $H_X$ and $H_Z$ ($J_X$ and $J_Z$) are the check (generator) matrices, representing $X$ and $Z$ stabilizer (logical) operators of the code, respectively. Formal definitions of these matrices and notation can be found in Appendix~\ref{app:notations}. These matrices satisfy the standard conditions: $H_X H_Z^\mathrm{T} = H_X J_Z^\mathrm{T} = H_Z J_X^\mathrm{T} = 0$ and $J_X J_Z^\mathrm{T} = E_k$, where $E_k$ is the $k \times k$ identity matrix. 

In PCS, we perform simultaneous logical measurements on multiple target blocks. On each block, we measure $q$ independent $Z$ logical operators, which can be compactly represented by the matrix $\alpha J_Z$. Here, $\alpha \in \mathbb{F}_2^{q \times k}$ is a full-rank matrix, with each row specifying a $Z$ logical operator. Given a R code with logical dimension $k_R$, the same single-block measurement specified by $\alpha J_Z$ can be applied in parallel to $k_R$ target blocks. While we focus on single-block measurements of $Z$ logical operators, the generalization to multi-block measurements and $X$ logical operators is straightforward. 


To realize the parallelized measurement, we employ a deformed code defined by the following check matrices: 
\begin{eqnarray}
H^D_X &=& \left(\begin{array}{ccc}
E_{k_R} \otimes H_X & 0 & {G_R^\mathrm{r}}^\mathrm{T} \otimes T \\
0 & E_{r_R} \otimes H_G & H_R \otimes E_{r_G}
\end{array}\right)
\label{eq:HDX2}
\end{eqnarray}
and 
\begin{eqnarray}
H^D_Z &=& \left(\begin{array}{ccc}
E_{k_R} \otimes H_Z & 0 & 0 \\
G_R^\mathrm{r} \otimes S & H_R^\mathrm{T} \otimes E_{n_G} & E_{n_R} \otimes H_G^\mathrm{T}
\end{array}\right).
\label{eq:HDZ2}
\end{eqnarray}
Here, $r_X$ is the number of rows in the matrix $H_X$, and the superscript `r' denotes the right inverse of a matrix. The matrices $H_R$, $G_R$, $H_G$, $S$, and $T$ will be explained later. In each check matrix, the first column corresponds to the $k_R$ target blocks, while the second and third columns represent the ancilla system. In the code surgery procedure, the ancilla-system qubits are initialized in the $\ket{+}$ state. Parity-check measurements associated with the deformed code are then performed, followed by $X$-basis measurements on the ancilla-system qubits. This protocol implements the desired $Z$ logical measurements across the target blocks. 

The matrices $H_R$ and $G_R$ define the R code, where $H_R$ is the check matrix and $G_R$ is the generator matrix. We choose $G_R$ in its standard form, so that its right inverse takes the form ${G_R^\mathrm{r}}^{\mathrm{T}} = \left(\begin{array}{cc} E_{k_R} & 0 \end{array}\right)$, which is sparse. The ancilla system used in the protocol is a hypergraph product code generated from the R code and the glue code, the latter specified by the check matrix $H_G$. The $X$ and $Z$ check matrices of the hypergraph product code are given by the last two columns in the second row of $H^D_X$ and $H^D_Z$, respectively. 

The matrices $H_G$, $S$, and $T$ are constructed following the framework of devised sticking~\cite{Zhang2025}, such that they satisfy the following conditions: i) $H_X S^\mathrm{T} = T H_G$; ii) there exists a matrix $R$ such that $\alpha J_Z R S = \alpha J_Z$ and $H_G (\alpha J_Z R)^\mathrm{T} = 0$; and iii) there exists a matrix $\beta$ such that $\alpha_\perp J_X S^\mathrm{T} = \beta H_G$. Here, $\alpha_\perp \in \mathbb{F}_2^{(k - q) \times k}$ is a full-rank matrix satisfying $\alpha_\perp \alpha^\mathrm{T} = 0$. These conditions ensure that the ancilla system is correctly programmed and coupled to the target blocks: the intended logical operators are measured, while the unmeasured operators are preserved throughout the code surgery procedure. Furthermore, the matrices $H_G$, $S$, and $T$ are constructed to be sparse: their row and column weights are bounded above by a constant, ensuring that the resulting deformed code remains a qLDPC code. In particular, $S$ is designed to have both row and column weights exactly equal to one, a property that contributes to maintaining a favorable distance in the deformed code. Finally, when the generator matrices $J_X$ and $J_Z$ are expressed in standard form, and assuming that the logical operators selected for measurement act on mutually disjoint sets of logical qubits (i.e.~the logical thickness is one), the number of rows and columns in $H_G$ scales as $O(k)$. This ensures that the qubit overhead of code surgery is constant. 

We can measure the target logical operators because they belong to the stabilizer group of the deformed code. This is demonstrated by the relation 
\begin{eqnarray}
\left(\begin{array}{ccc} E_{k_R}\otimes(\alpha J_Z) & 0 & 0 \end{array}\right) &=& \left(\begin{array}{cc} 0 & G_R\otimes(\alpha J_Z R) \end{array}\right) H^D_Z.
\end{eqnarray}
This equation also specifies how to extract the eigenvalues of the target logical operators from the measurement outcomes of the stabilizers. 

Next, we sketch the proof for a lower bound on the distance of the deformed code. The key idea is to use the technique that establishes an equivalence between errors on the ancilla system and errors on the target blocks; see Fig.~\ref{fig:stickers}(d). On each target block, the weight of the equivalent error is upper bounded by the weight of the original error on the ancilla system. Consequently, the distance of the deformed code is lower bounded by the distance of the target code. We note, however, that an error on the ancilla system may be equivalent to multiple distinct errors on different target blocks, resulting in error amplification. In the present protocol, such amplification is tolerable because error correction is performed independently on each target block. Finally, we emphasize that the validity of the error equivalence argument assumes that the error weight is below the distance of the R code. Therefore, the R code must have sufficiently large distance to ensure fault tolerance. 

\begin{lemma}
Suppose the target code is a qLDPC code with constant encoding rate, and the R code is an LDPC code also with constant encoding rate. Then, the resulting deformed code supports PCS with constant qubit overhead. In particular, the deformed code defined by Eqs.~(\ref{eq:HDX2})~and~(\ref{eq:HDZ2}) is a qLDPC code {whose length is upper bounded by $(k_R+2n_R)n = O(k_R k)$}. This code enables the intended operation: the parallel measurement of an identical set of up to $k$ logical operators, specified by $\alpha J_Z$, across $k_R$ target blocks. Moreover, if $H_R$ is full-rank, the distance of the deformed code satisfies $d_D \geq \min\{d, d_R\}$. 
\label{lem:PCS}
\end{lemma}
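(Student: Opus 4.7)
The plan is to prove the lemma in three stages---algebraic verification, correctness of the intended measurement, and the distance bound---the last of which I expect to be the main obstacle.

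\textbf{Stage 1: commutation, sparsity, and size.} I would first compute $H^D_X(H^D_Z)^\mathrm{T}$ as a $2\times 2$ block matrix. The $(1,1)$ block reduces to $E_{k_R}\otimes H_X H_Z^\mathrm{T}=0$ by the CSS property of the target code; the $(1,2)$ block collapses to $(G_R^\mathrm{r})^\mathrm{T}\otimes(H_X S^\mathrm{T}+T H_G)=0$ by condition~(i); the $(2,1)$ block is trivially zero; and the $(2,2)$ block gives $H_R\otimes H_G+H_R\otimes H_G=0$ in $\mathbb{F}_2$. The qLDPC property then follows because every non-zero block is a tensor product of matrices of constant row/column weight---$H_X,H_Z,H_R,H_G$ by hypothesis and $S,T$ by construction (with $S$ of weight exactly one). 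The $O(k_R k)$ size is immediate from $n_R,r_R=O(k_R)$ (constant-rate R code) together with $n,n_G,r_X=O(k)$ (target code in standard form with unit logical thickness).

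\textbf{Stage 2: intended measurement and preservation.} Next I would verify the identity
\begin{equation*}
\bigl(E_{k_R}\otimes\alpha J_Z,\;0,\;0\bigr)=\bigl(0,\;G_R\otimes\alpha J_Z R\bigr)\,H^D_Z
\end{equation*}
column by column: the first column yields $(G_R G_R^\mathrm{r})\otimes(\alpha J_Z R S)=E_{k_R}\otimes\alpha J_Z$ by condition~(ii); the second yields $(G_R H_R^\mathrm{T})\otimes(\alpha J_Z R)=0$ by generator--check orthogonality; the third yields $G_R\otimes(\alpha J_Z R H_G^\mathrm{T})=0$, again by~(ii). Hence $\alpha J_Z$ on each of the $k_R$ target blocks lies in the $Z$-stabilizer group of the deformed code, and the CSS surgery sequence of Fig.~\ref{fig:PCS} reads off its eigenvalues deterministically from the parity-check outcomes. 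Preservation of the unmeasured logicals then follows from explicit dressed representatives: condition~(iii) makes $\bigl(E_{k_R}\otimes\alpha_\perp J_X,\,0,\,(G_R^\mathrm{r})^\mathrm{T}\otimes\beta\bigr)$ commute with every row of $H^D_Z$ for each preserved $X$ operator, while the bare $\bigl(E_{k_R}\otimes\alpha_\perp J_Z,\,0,\,0\bigr)$ already commutes with every row of $H^D_X$.

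\textbf{Stage 3: distance bound (main obstacle).} For $d_D\ge\min\{d,d_R\}$ the plan follows the error-equivalence strategy sketched after the lemma statement: show that any non-trivial logical representative $E=(E_1,E_2,E_3)$ of the deformed code with $\mathrm{wt}(E)<d_R$ is equivalent, modulo stabilizers, to an error supported entirely on the target blocks, whose per-block weight is upper bounded by the original weight of $E$. The assumption that $H_R$ is full row rank enters here: together with the R-code distance $d_R$, it ensures that whenever $\mathrm{wt}(E)<d_R$, the ancilla components $E_2,E_3$ can be absorbed into $E_1$ using a combination of row-2 stabilizers of $H^D_Z$ drawn from the hypergraph-product structure of the R and glue codes, while the accompanying back-action through the sticker $G_R^\mathrm{r}\otimes S$ remains bounded because $S$ has column weight exactly one. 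Once the ancilla support is cleared, each of the $k_R$ target blocks carries either a trivial residual (in which case $E$ is a stabilizer) or a non-trivial target-code logical of weight at least $d$; the $k_R$-fold replication of the residual is harmless because decoding proceeds independently per block. A symmetric argument using condition~(i) handles $X$-logicals, yielding $d_D\ge\min\{d,d_R\}$. The delicate part, and the one I expect to need the most care, is making the cleaning step precise enough to guarantee that the sticker back-action indeed contributes at most one unit of weight per cleaned ancilla qubit and that no residual ancilla support is regenerated in other fibers; both rely on the column-weight-one property of $S$ together with the full-rank hypothesis on $H_R$, and these are precisely the ingredients that close the inequality at $\min\{d,d_R\}$.
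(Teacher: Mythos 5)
Your Stages 1 and 2 are correct and essentially match the paper's treatment: the block-wise verification of $H^D_X (H^D_Z)^{\mathrm{T}} = 0$, the sparsity and $O(k_R k)$ size count, and the column-by-column check that $E_{k_R}\otimes(\alpha J_Z)$ lies in $\mathrm{rowsp}\,H^D_Z$ via conditions (i)--(iii) are all sound; the paper packages the same content by showing the tilde matrices $\tilde H_X, \tilde S, \tilde T, \tilde H_G, \tilde H_M$ satisfy the single-block surgery conditions of Lemma~\ref{lem:homological_measurement}.

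The genuine gap is in Stage 3. You correctly identify the cleaning strategy---add $u_\star$ times the second row block of $H^D_Z$, namely $(\tilde S,\ \tilde H_G^{\mathrm{T}})$ with $\tilde S = G_R^{\mathrm{r}}\otimes S$, to push a $Z$ logical error $e=(u,v)$ with $\vert v\vert<d_R$ entirely onto the target blocks---but the step that makes this possible is asserted rather than proven. Concretely you need two facts: (a) every $v$ with $\tilde H_M v^{\mathrm{T}}=0$ and $\vert v\vert<d_R$ actually lies in the image of $\tilde H_G$, i.e.\ $v^{\mathrm{T}}=\tilde H_G u_\star^{\mathrm{T}}$ for some $u_\star$ (a priori $\ker\tilde H_M$ may strictly contain $\mathrm{colsp}\,\tilde H_G$, and the weight restriction is exactly what rules out the homologically nontrivial elements); and (b) the preimage $u_\star$ can be chosen so that its sticker back-action satisfies $\vert u_\star[(G_R^{\mathrm{r}})_{\bullet,l}\otimes S]\vert\le\Vert S\Vert\,\vert v\vert$ on \emph{every} fiber $l$ simultaneously. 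Neither follows from the column-weight-one property of $S$ alone. The paper's proof vectorizes $\tilde H_M v^{\mathrm{T}}=0$ into $H_G A=BH_R^{\mathrm{T}}$, sets $U=A({H_R^{\mathrm{r}}})^{\mathrm{T}}$, and then must verify $B=BH_R^{\mathrm{T}}({H_R^{\mathrm{r}}})^{\mathrm{T}}$; this identity holds only for a right inverse of $H_R$ adapted to the support of $B$, constructed by observing that the at most $d_R-1$ nonzero columns of $B$ index linearly independent columns of $H_R$ (this is where the R-code distance enters) and extending them to a full column basis (this is where full-rankness of $H_R$ enters). That support-adapted right inverse is the technical heart of the distance bound and is absent from your sketch. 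A smaller point: the two error types are not symmetric, so no ``symmetric argument using condition (i)'' is needed---for an $X$-type logical error the first row of $H^D_Z$ already forces $\tilde H_Z u^{\mathrm{T}}=0$ with nontrivial logical action on some block, giving weight at least $d$ directly with no cleaning and no use of $d_R$; only the $Z$-error direction requires the machinery above.
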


See Appendix~\ref{app:PCS} for the proof. 

\subsection{Locally-testable state preparation}
\label{sec:LTSP}

The inherent time-efficiency limitation of code surgery arises from the need to correct measurement errors. In conventional code surgery, a subset of logical operators is measured by coupling the target block to an ancilla system, resulting in a deformed code whose stabilizer group includes the logical operators to be measured. Eigenvalues of these operators are then extracted by performing parity-check measurements of the deformed code, i.e.~measuring its stabilizer operators. However, due to the presence of measurement errors, these measurements must be repeated for $\Theta(d)$ rounds to ensure accurate error correction and reliable eigenvalue extraction. Consequently, if parity-check measurements could be performed without measurement errors, this would eliminate the $\Theta(d)$ repetition factor and significantly reduce the overall time overhead. 

Parity-check measurements free of measurement errors can be effectively realized through gate teleportation. The key insight is that, in gate teleportation, all errors can be interpreted as acting either on the input or output state, as illustrated in Fig.~\ref{fig:teleportation}. This leads to the following effective picture: the input state may carry some error, followed by a single round of error-free parity-check measurements, after which additional errors may occur on the output state. Although the measurements are effectively error-free in this picture, it remains crucial to ensure that the effective errors on both the input and output states are low-weight and therefore correctable. In particular, this requires that the resource state used in gate teleportation contains only low-weight errors. 

Errors in the resource state can be managed either by using a high-distance concatenated code or by applying state distillation~\cite{Gottesman2014,Yamasaki2024,Tamiya2024,Nguyen2024}. While high-distance concatenated codes incur substantial spacetime overhead, constant-overhead state distillation requires the quantum memory to be encoded in a good qLTC. Since our objective is a general method of reducing spacetime overhead for qLDPC codes, an alternative approach is necessary. Furthermore, even if the memory were encoded in a good qLTC, constructing a deformed code that also satisfies local testability is a highly nontrivial task, further limiting the practical applicability of constant-overhead state distillation in our context. 

Our approach, LTSP, offers a generic method to control measurement errors in qLDPC-code quantum error correction, while requiring only almost constant overhead; see Fig.~\ref{fig:LTSP}. Central to this approach is generating resource states using a classical LTC, which we refer to as the F code because it serves as the core component of the resource state factory (see Table~\ref{tab:codes}). In our scheme, each resource state is dedicated to measuring only $Z$ stabilizer operators---$X$ stabilizer operators can be measured by rotating the basis. When measuring $Z$ stabilizers, we simulate the corresponding measurement circuit by encoding logical qubits in the F code. Specifically, we treat the classical code as a quantum code that protects only $Z$ operators. Then, the measurement circuit is simulated with transversal operations. This simulation produces an encoded resource state, which can be efficiently transformed into the eventual unencoded resource state. Our method reliably produces resource states meeting the requirement that errors remain low-weight, as summarized in Lemma~\ref{lem:LTSP}. Importantly, this approach is broadly applicable to any qLDPC code. 

\begin{lemma}
Consider the generation of gate-teleportation resource states for measuring $Z$ stabilizer operators of a deformed code. Suppose the F code is an LDPC code with constant soundness. Let $\vert \bar{e}^{sp}_X \vert$ and $\vert \bar{e}^{sp}_Z \vert$ denote the weights of $X$ and $Z$ errors occurring during the state preparation circuit, respectively. Let $\vert e^{RS}_X \vert$ and $\vert e^{RS}_Z \vert$ denote the weights of $X$ and $Z$ errors present on one copy of the prepared resource state. If these errors are undetectable and satisfy $\vert \bar{e}^{sp}_X \vert < d_F / C_1$, then the following bounds hold: $\vert e^{RS}_X \vert \leq C_2 \vert \bar{e}^{sp}_X \vert$ and $\vert e^{RS}_Z \vert \leq C_3 \vert \bar{e}^{sp}_Z \vert$, where $C_1$, $C_2$, and $C_3$ are constants determined by the weight and soundness parameters of the involved codes. The state preparation circuit uses $O(n_D n_F d_S^2)$ physical qubits, has depth $O(d_S)$, and produces $n_F$ copies of the resource state. While $d_S$ can be arbitrary positive integer, in practice it is chosen to scale as $\mathrm{polylog}(n_D n_F)$. These conclusions naturally generalize from deformed codes to arbitrary CSS-type qLDPC codes. 
\label{lem:LTSP}
\end{lemma}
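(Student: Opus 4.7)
\medskip
\noindent\textbf{Proof plan.}
The plan is to analyze the two–level encoded state-preparation circuit of Fig.~\ref{fig:LTSP} by decomposing it into three stages: (i) the transversal preparation of F-code-encoded Bell pairs, (ii) the transversal simulation of the deformed code's $Z$-stabilizer measurement circuit through F-code-encoded ancillas, and (iii) a single round of F-code parity-check measurements followed by decoding of the F code to extract the $k_F$ unencoded resource-state copies. Because every gate in stages (i)-(ii) is transversal at both the F-code and surface-code levels, any error at a circuit location can be pushed, up to a constant expansion factor determined by the gate weights, to an equivalent Pauli error acting on the ``input'' (the F-code logical qubits) or ``output'' (the unencoded resource state after decoding). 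The inner surface code of distance $d_S$ is used only to suppress physical errors to the effective $O(p_{\mathrm{phy}}^{d_S/2})$ regime before they reach the F-code level, so throughout the analysis I treat the F-code block as the relevant ``physical'' layer and bound errors there.

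For the $X$-error bound, I would exploit local testability of the F code directly. After stage (iii), the $X$ syndrome consists of the bits produced by the (single-shot) F-code parity-check measurements, which themselves can be corrupted by measurement errors. The central observation is that a measurement error on a parity check is indistinguishable from an $X$ data error along an edge incident to that check, so the combined ``data$+$measurement'' error lives in an extended classical code whose soundness is still a positive constant, degraded only by the maximum check weight of the F code. Invoking constant soundness $s=\Theta(1)$ then yields an inequality of the form $|\text{syndrome}(\bar{e}^{sp}_X)| \geq s\,\mathrm{dist}(\bar{e}^{sp}_X, \text{codewords})$; combined with the hypothesis that $\bar{e}^{sp}_X$ is undetectable (empty residual syndrome after decoding) and $|\bar{e}^{sp}_X|<d_F/C_1$, this forces $\bar{e}^{sp}_X$ to be within a constant factor of a low-weight correctable error. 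Pushing the corrected error through the transversal F-code decoder (which consists of measuring a subset of qubits and applying Pauli corrections determined by the generator matrix in standard form) then multiplies the weight by at most the maximum row weight of the generator, giving $|e^{RS}_X|\leq C_2|\bar{e}^{sp}_X|$ for some constant $C_2$.

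The $Z$-error bound is easier because $Z$ errors do not need to be corrected: they commute with all $Z$ stabilizer measurements and the F code only protects $Z$ operators. I would simply trace $Z$ errors through the transversal CNOT and single-qubit operations of stages (i)-(iii), noting that each CNOT and each ancilla-based stabilizer measurement propagates a $Z$ error to at most a constant number of output qubits determined by the row/column weights of the deformed-code check matrix and of $H_F$. The decoding step then again multiplies by at most a constant (the maximum row weight of the F-code generator matrix). Collecting these constants yields $|e^{RS}_Z|\leq C_3|\bar{e}^{sp}_Z|$, independent of any soundness argument.

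Finally, the resource count follows from direct bookkeeping. The circuit acts on $n_D$ logical qubits of the deformed code, each encoded in an F-code block of $n_F$ physical qubits, each of which is further encoded in a distance-$d_S$ surface code using $O(d_S^2)$ physical qubits, giving $O(n_D n_F d_S^2)$ physical qubits in total. The depth is dominated by the surface-code-level implementation of each transversal two-qubit gate, which takes $O(d_S)$ time; the number of logical layers (Bell preparation, ancilla coupling, PCM, decoding) is $O(1)$, so the total depth is $O(d_S)$. The number of output copies is $k_F$ because each F-code block encodes $k_F$ logical qubits and decoding is done transversally across the block. The main obstacle is the $X$-error step: verifying that the ``measurement error $\leftrightarrow$ data error'' reformulation preserves constant soundness of the augmented classical code and tracking the precise dependence of $C_1,C_2$ on the F-code's soundness, check weight, and generator-matrix row weight. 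Extending from deformed codes to arbitrary CSS qLDPC codes is then immediate, since the argument never used any feature of the deformed-code construction beyond its $Z$-check matrix being sparse.
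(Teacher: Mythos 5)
Your overall decomposition matches the paper's: the same three-stage circuit, the same observation that $Z$ errors accumulate linearly (the paper's Lemma~\ref{lem:spX} gives $\vert e^{RS}_Z\vert\leq\vert e^{sp}_Z\vert$ already at the F-code level, with the final constant coming only from the inner surface code and the fan-out of $H^D_Z$), and the same resource count. The gap is in the $X$-error step. You write that undetectability means an ``empty residual syndrome'' and that soundness, in the form $\vert\mathrm{syndrome}(\bar e^{sp}_X)\vert\geq s\cdot\mathrm{dist}(\bar e^{sp}_X,\text{codewords})$, then forces $\bar e^{sp}_X$ to be within a constant factor of a low-weight correctable error. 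Applied to an error with zero syndrome, that inequality only says the error is a codeword---a distance statement, not a soundness statement---and it does not bound the weight of the residual error on the output state. Moreover, the object that actually threatens to become high-weight never appears in your plan: the outcome-dependent feedback Pauli gates. The measurement outcomes (of the deformed-code $Z$ checks read out through the $D$ blocks, and of the F-code parity checks on the $B,C$ blocks) are generically \emph{nonzero}; they determine a feedback Pauli of potentially large support, so a low-weight flip of these outcomes can translate into a high-weight discrepancy between the applied and the ideal feedback. That is precisely the failure mode LTSP is designed to eliminate.

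The paper's argument (Lemma~\ref{lem:spZ}) applies soundness to the \emph{observed outcome} $V$, not to the spacetime error: undetectability means $V$ passes the meta-checks $H^F_M$, hence lies in $\mathrm{colsp}\,H_F$, and Lemma~\ref{lem:LTC} (where constant soundness enters) produces a preimage $W$ with $\vert W_{a,\bullet}\vert\leq\frac{n_F}{r_F s}\vert V_{a,\bullet}\vert$. The hypothesis $\vert\bar e^{sp}_X\vert<d_F/C_1$ is then combined with the F-code \emph{distance} to show that the accumulated data error $U$ equals $W$ (since $U+W$ lies in $\ker H_F$ row-wise and has row weight below $d_F$), so the feedback computed from $V$ differs from the ideal one by an operator equivalent to a weight-$O(\vert\bar e^{sp}_X\vert)$ error on the output. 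Your ``extended code with data$+$measurement errors'' framing can in principle be developed into this single-shot argument, but as written the plan neither identifies the feedback gates as the quantity to be controlled nor uses the soundness and distance hypotheses in the roles they actually play, so the central step would not go through without substantial reworking.
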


See Appendix~\ref{app:LTSP} for a detailed explanation of the protocol and the corresponding proof. {Explicit expressions for constants $C_1$, $C_2$, and $C_3$ are provided in Lemma~\ref{lem:sp}.}

{Now, we explain why errors on the resource state remain low-weight. Recall that our goal is to measure stabilizer operators of a qLDPC code (the deformed code). Consequently, Pauli errors generated directly by circuit operations remain low-weight. The challenge arises in the feedback operations: to prepare the resource state within the logical subspace of the deformed code, we apply Pauli gates conditioned on measurement outcomes identifying the subspace sectors. However, these measurements can be faulty. Since these feedback Pauli gates may have large weight, even a low-weight measurement error can propagate into a high-weight Pauli error on the output resource state. 

We address this vulnerability by protecting the stabilizer-measurement circuit using the F code. We treat the F code as a quantum CSS code that only protects against $X$ errors (assuming the circuit measures $Z$ stabilizers). Then, each qubit in the circuit is encoded as a logical qubit of the F code. Operations within the circuit are then implemented as transversal logical operations on the F code. This protection ensures that measurement errors---and the subsequent propagation into high-weight Pauli errors---are suppressed. As established in Lemma~\ref{lem:LTSP}, as long as the weight of the physical errors in the circuit is bounded by $\Theta(d_F)$ (specifically, $|\bar{e}_X^{sp}| < d_F / C_1$), measurement errors do not occur. This prevents the triggering of dense feedback and ensures that the errors on the output resource state remain bounded and proportional to the weight of the physical errors. 

However, introducing an additional level of encoding could potentially increase the circuit depth. Standard implementations of error correction for the F code would typically require $O(d_F)$ rounds of stabilizer measurements, thereby amplifying the circuit depth by a factor proportional to the distance. We circumvent this bottleneck by selecting a constant-soundness LTC as the F code. Such codes are characterized by their \textit{single-shot} property and have linear soundness functions~\cite{Campbell2019}. Due to this single-shot nature, a single round of stabilizer measurement is sufficient to achieve the required error suppression, ensuring that the circuit depth remains constant. }


Next, we analyze the qubit costs. To reduce the qubit overhead for generating resource states, we propose using a double-constant classical LTC with constant encoding rate and constant soundness~\cite{Leverrier2022LTC,Panteleev2022,lin2022c3LTC}. Similar to the R code, we require the F code to have distance $d_F = \Omega\left(n_F^{1/a_F}\right)$ for some finite constant $a_F$, ensuring that its block length satisfies $n_F = \mathrm{poly}(d)$ when $d_F = \Theta(d)$. Because the encoding rate is constant, we can produce $\Theta(n_F)$ copies of the resource state simultaneously, ensuring constant qubit overhead. 

Finally, we address a critical issue neglected in the above discussion. Because the F code is classical, it can only protect against one type of error, allowing the other type to accumulate. In fact, it cannot correct either error type to the fault-tolerant level up to our circuit design. This error accumulation typically results in a vanishing fault-tolerant threshold. To overcome this, we further encode each physical qubit of the F code into a logical qubit of a low-distance surface code (or any other low-distance code supporting a certain logical operation set). The measurement circuit has size $O(n_D)$, where $n_D$ is the length of the deformed code. When encoding this circuit with the F code, the circuit size becomes $O(n_D n_F)$. To control error accumulation, it suffices to choose a surface code distance $d_S = \mathrm{polylog}(n_D n_F)$. Incorporating the surface code increases the circuit size to $O(n_D n_F \mathrm{poly}(d_S))$, thus contributing only polylogarithmic overhead factors to the qubit overhead and circuit depth. 

\section{Discussions}

We have presented a general scheme for FTQC on qLDPC codes that combines code surgery with gate teleportation and introduces optimizations in deformed-code construction and resource-state preparation. Our approach achieves constant qubit overhead while significantly reducing time overhead. For all constant-rate qLDPC codes with the code distance grows faster than HGP codes, the time overhead is lower than that of the GM+BFB protocol; for good qLDPC codes, it reaches the $O(d^{1+o(1)})$ scaling. These results improve the asymptotic resource efficiency of FTQC.

{While the primary objective of this work is to optimize scaling behavior in the limit of large code distance, the techniques developed for this purpose---specifically PCS and LTSP---are applicable for lowering the time overhead in near-term, intermediate-distance implementations, offering promising avenues for future investigation (see Appendix~\ref{app:performance}). PCS is uniquely suited for implementing identical logical operations across multiple code blocks, which arise frequently in practical applications such as the simulation of many-body quantum systems with translational invariance. In these scenarios, PCS enables the simultaneous manipulation of all logical qubits with constant qubit overhead, even for code families like HGP codes, where the asymptotic advantage of our protocol over the GM+BFB method for general heterogeneous circuits vanishes. The resource efficiency of LTSP is governed by the parameters of the chosen LTC, underscoring the importance of identifying code instances with optimized parameters at moderate distances to fully realize the performance gains. Notably, LTSP is a versatile framework; for example, utilizing a repetition code with an expander-graph-based check matrix as the F code provides a mechanism for trading qubit overhead for enhanced temporal efficiency. Overall, compared with existing approaches, our protocol offers distinct advantages in scenarios involving large code distances, highly symmetric gate patterns, or time-critical applications.
}

\begin{acknowledgments}
This work is supported by the National Natural Science Foundation of China (Grant Nos. 12225507, 12088101) and NSAF (Grant No. U1930403). 
\end{acknowledgments}

\appendix

\begin{widetext}

\section{Preliminaries}

\subsection{Notations}
\label{app:notations}

We use $\vert \bullet \vert$ and $\Vert \bullet \Vert$ to denote the Hamming weight and the matrix norm induced by Hamming weight, respectively, i.e.~the norm of a matrix $A \in \mathbb{F}_2^{m \times n}$ is 
\begin{eqnarray}
\Vert A \Vert = \max\left\{ \frac{\vert uA \vert}{\vert u \vert} \st u \in \mathbb{F}_2^{m}-\{0\} \right\}.
\end{eqnarray}

We denote by $A_{j,\bullet}$ and $A_{\bullet,j}$ the $j$-th row and $j$-th column of the matrix $A$, respectively, and we denote by $A^\mathrm{r}$ the right inverse. 

Let $b = (b_1,b_2,\ldots,b_n) \in \mathbb{F}_2^n$. We use 
\begin{eqnarray}
\sigma(b) &=& \sigma^{b_1} \otimes \sigma^{b_2} \otimes \cdots \otimes \sigma^{b_n}
\end{eqnarray}
to represent $n$-qubit $\sigma = X,Z$ Pauli operators. When $B \in \mathbb{F}_2^{m \times n}$, 
\begin{eqnarray}
\sigma(B) &=& \left(\begin{array}{c}
\sigma(B_{1,\bullet}) \\
\sigma(B_{2,\bullet}) \\
\vdots \\
\sigma(B_{m,\bullet})
\end{array}\right)
\end{eqnarray}
is a list of Pauli operators. 

An $[[n,k,d]]$ CSS code is represented by four matrices $H_X \in \mathbb{F}_2^{r_X \times n}$, $H_Z \in \mathbb{F}_2^{r_Z \times n}$ and $J_X,J_Z \in \mathbb{F}_2^{k \times n}$ that satisfy conditions $H_X H_Z^\mathrm{T} = H_X J_Z^\mathrm{T} = J_X H_Z^\mathrm{T} = 0$, $J_X J_Z^\mathrm{T} = E_k$. The stabilizer of the code is generated by operators $X(H_X)$ and $Z(H_Z)$. The $X$ and $Z$ logical operators of the $j$-th logical qubit are $X((J_X)_{j,\bullet})$ and $Z((J_Z)_{j,\bullet})$, respectively. 

We use $(H^c_X,H^c_Z,J^c_X,J^c_Z)$ with $c = M,D,S,C$ to represent the memory, deformed, surface, and color codes, respectively. Regarding parameters of a quantum code, we use notations $n_c$, $k_c$, $d_c$, $r^c_X$, and $r^c_Z$. Check and generator matrices of the R and F codes are $H_c \in \mathbb{F}_2^{r_c \times n_c}$ and $G_c \in \mathbb{F}_2^{k_c \times n_c}$ with $c = R,F$, respectively, which satisfy $H_c G_c^\mathrm{T} = 0$; and we also use $d_c$ to denote the distance of a classical linear code. 

\subsection{Primitive operations, spacetime locations, and error models}

\begin{figure}[htbp]
\centering
\includegraphics[width=\linewidth]{./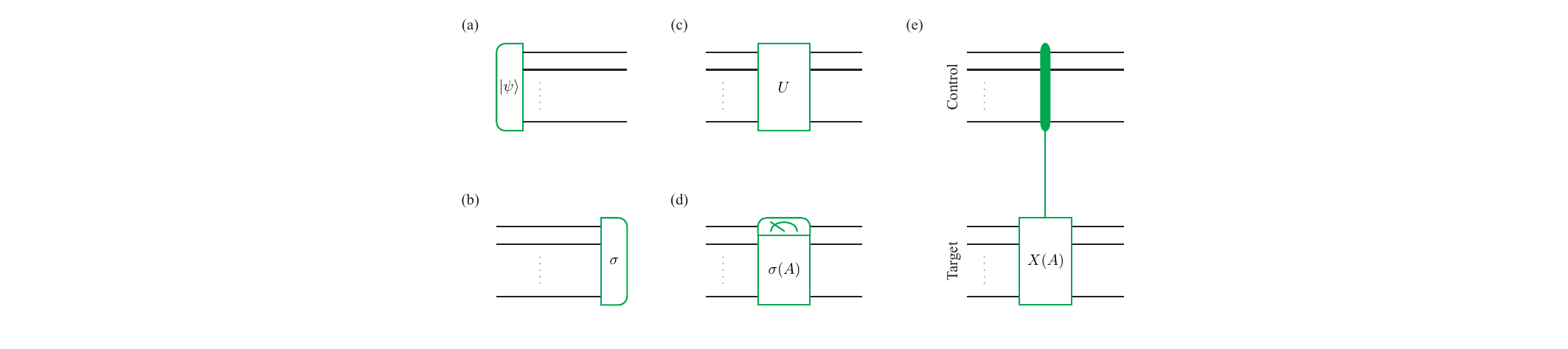}
\caption{
(a) Transversal initialization that prepares each qubit in the state $\ket{\psi}$. 
(b) Transversal measurement that measures each qubit in the basis $\sigma = X,Z$. 
(c) Gate $U$. Here, $U$ is a layer of single-qubit gates, i.e.~$U = U_1 \otimes U_2 \otimes \cdots$, where $U_1,U_2,\ldots$ are single-qubit gates on qubits $1,2,\ldots$, respectively. 
(d) Projective measurement on $n$ qubits that measures Pauli operators $\sigma(A)$, where $A \in \mathbb{F}_2^{m \times n}$. 
(e) Generalized transversal controlled-NOT gate. The control (target) is a set of $m$ ($n$) qubits. The transformation $X(A_{j,\bullet})$ is (not) applied on the target when the $j$-th control qubit is in the state $\ket{1}$ ($\ket{0}$). 
We assume that the matrix $A$ has rows and columns of weight at most $\omega_{max}$, where $\omega_{max}$ is a positive constant. 
}
\label{fig:primitive_operations}
\end{figure}

The primitive operations are illustrated in Fig.~\ref{fig:primitive_operations}. Given a quantum circuit, there are two sets of spacetime locations. Quantum locations are the qubits during time intervals between primitive operations; see Figs.~\ref{fig:state_preparation}, \ref{fig:measurement}, and \ref{fig:lattice_surgery} for examples, in which red boxes represent the quantum locations. Classical locations are outcomes of projective measurements: In the measurement of $\sigma(A)$, the outcome is a vector $\nu \in \mathbb{F}_2^m$, and $(-1)^{\nu_j}$ is the eigenvalue of $\sigma\left(A_{j,\bullet}\right)$, i.e.~each element corresponds to an operator; such a measurement introduces $m$ locations, and each location corresponds to one element in the outcome. 

We assume that errors can occur independently at any location. We now justify the error model. For initializations and gates with errors, each of them can be modeled as an ideal operation with errors occurring at the locations immediately after the operation. For qubit measurements with errors, they can be modeled as ideal measurements with errors occurring at locations immediately before the measurements: these errors effectively flip measurement outcomes. For a projective measurement with errors, it can be modeled as a measurement that projects the state onto the correct subspace with errors occurring in the measurement outcome and at locations immediately after the measurement: the errors at locations after the measurement may induce an effective incorrect projection. Compared with taking spacetime locations in circuits that projective measurements and generalized transversal controlled-NOT gates are decomposed into standard single-qubit and two-qubit operations (see Lemma~\ref{lem:depth_error}), our choice of spacetime locations could overestimate effective circuit distances and the fault-tolerance threshold by a constant factor, which does not change our conclusions. 

\begin{lemma}
Projective measurements and generalized transversal controlled-NOT gates specified by a matrix $A$ can be implemented using a circuit composed of single-qubit and two-qubit operations. If the row and column weights of $A$ are bounded above by a positive constant $\omega_{max}$, then the circuit has depth at most $\omega_{max}+2$. Moreover, an error occurring at a single location in this circuit may propagate to at most $\omega_{\max}$ locations. 
\label{lem:depth_error}
\end{lemma}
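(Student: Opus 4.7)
My plan is to reduce the circuit construction to a classical edge-coloring problem on a bipartite graph extracted from $A$, and then read off the error-propagation bound directly from the resulting CNOT geometry.

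First I would implement the generalized transversal controlled-NOT specified by $A \in \mathbb{F}_2^{m\times n}$ as a set of ordinary two-qubit CNOTs, placing a CNOT with control qubit $j$ and target qubit $k$ whenever $A_{j,k}=1$. These CNOTs form the edge set of a bipartite graph $G$ on the disjoint vertex classes of $m$ control qubits and $n$ target qubits. The row-weight bound gives every control vertex degree at most $\omega_{\max}$, and the column-weight bound gives every target vertex degree at most $\omega_{\max}$, hence $\Delta(G)\le \omega_{\max}$. By König's edge-coloring theorem for bipartite graphs, $G$ admits a proper edge coloring with $\Delta(G)\le\omega_{\max}$ colors; each color class is a matching, so all CNOTs of one class act on pairwise disjoint qubit pairs and can be executed in a single parallel layer. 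Stacking the $\omega_{\max}$ color classes yields a depth-$\omega_{\max}$ circuit for the generalized CNOT. The projective measurement of $\sigma(A)$ is then realized by sandwiching this CNOT block between a transversal ancilla-preparation layer (in $\ket{+}$ for $\sigma = Z$, or $\ket{0}$ with the basis rotation absorbed for $\sigma = X$) and a transversal ancilla-measurement layer, giving total depth $\omega_{\max}+2$, matching the lemma.

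For the error-propagation claim, I would exploit the bipartite structure just constructed. The synthesized circuit uses only single-qubit operations and CNOTs, so a single Pauli fault can be tracked by commuting it through the subsequent gates. Within the CNOT block an $X$ error spreads only from a control to its target and a $Z$ error only from a target to its control, while $X$ on a target and $Z$ on a control do not spread. Since control qubits are never targets in this construction (and vice versa), a copy of the error created on a partner qubit cannot be further copied through additional CNOTs of the same block. A Pauli fault on a single control (resp.\ target) therefore spawns copies on at most the $\omega_{\max}$ adjacent target (resp.\ control) qubits; faults originating in the ancilla-preparation or ancilla-measurement layers either reduce to this case or are absorbed into the classical measurement outcome. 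In every scenario the fault propagates to at most $\omega_{\max}$ additional locations, as claimed.

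The main subtlety---and the only reason any argument is needed at all---is that a naive serialization of the CNOTs could let one qubit act as both a ``spreader'' and a ``spreadee'' in overlapping propagation chains, potentially inflating the bound beyond the degree of $G$. The bipartite König schedule rules this out at the structural level, which is why I would lean on it rather than on an ad hoc ordering of the CNOTs. Everything else is routine commutation bookkeeping, with no substantive obstacle.
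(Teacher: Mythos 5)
Your proposal is correct and follows essentially the same route as the paper's proof: decompose the generalized transversal controlled-NOT into CNOTs indexed by the nonzero entries of $A$, schedule them via bipartite edge coloring (the paper cites this as an edge-coloring problem; you name K\H{o}nig's theorem explicitly) to get depth $\omega_{\max}$, reduce projective measurements to this gate plus preparation and measurement layers, and bound error propagation by the degree of the bipartite graph using the standard $X$-forward/$Z$-backward CNOT propagation rules. The only cosmetic difference is that you attribute the non-cascading of errors to the coloring schedule, whereas it already follows from the control and target qubit sets being disjoint by definition of the gate; this does not affect correctness.
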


\begin{proof}
We only need to consider generalized transversal controlled-NOT gates because projective measurements can be implemented using them. Specifically, to measure the operators $X(A)$, we initialize the control qubits in the $\ket{+}$ state, apply the generalized transversal controlled-NOT gate defined by the matrix $A$, and then measure the control qubits in the $X$ basis. Similarly, to measure $Z(A^\mathrm{T})$, we initialize the target qubits in the $\ket{0}$ state, apply the same gate, and then measure the target qubits in the $Z$ basis. 

A generalized transversal controlled-NOT gate can be implemented by a circuit of depth at most $\omega_{\mathrm{max}}$, since scheduling the controlled-NOT gates corresponds to an edge-coloring problem on a bipartite graph~\cite{West2001,Tremblay2022}. 

Each qubit in the circuit is involved in at most $\omega_{\max}$ controlled-NOT gates. On a control (target) qubit, an $X$ ($Z$) error can propagate to the corresponding target (control) qubits, while $Z$ ($X$) errors do not propagate. Therefore, a single-qubit error may only spread to qubits within the corresponding support set. As a result, the weight of the propagated error is upper bounded by $\omega_{\max}$. 
\end{proof}

\subsection{Locally testable codes}

\begin{definition}
{\bf $(\omega,s)$-locally testable codes~\cite{Leverrier2022LTC}.} A linear code $C \subset \mathbb{F}_2^n$ is called $(\omega,s)$-locally testable if it has a check matrix $H \in \mathbb{F}_2^{r \times n}$ with rows of weight at most $\omega$ such that for any vector $u \in \mathbb{F}_2^n$ we have 
\begin{eqnarray}
\frac{1}{r} \vert H u^\mathrm{T} \vert &\geq& \frac{s}{n} \min_{c \in C} \vert u-c \vert.
\end{eqnarray}
The parameters $\omega$ and $s$ are positive real numbers called the locality and soundness, respectively. 
\end{definition}

\begin{lemma}
If $H \in \mathbb{F}_2^{r \times n}$ is the check matrix of an $(\omega,s)$-locally testable code, for every vector $v^\mathrm{T} \in \mathrm{colsp} H$, there exists a vector $u_\star \in \mathbb{F}_2^n$ such that $v^\mathrm{T} = H u_\star^\mathrm{T}$ and 
\begin{eqnarray}
\vert u_\star \vert &\leq& \frac{n}{rs} \vert v \vert.
\end{eqnarray}
\label{lem:LTC}
\end{lemma}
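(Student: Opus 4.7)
The plan is to derive the bound directly from the defining inequality of $(\omega,s)$-local testability, by taking an arbitrary preimage of $v^\mathrm{T}$ under $H$ and then shifting it by a well-chosen codeword.

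First, since $v^\mathrm{T} \in \mathrm{colsp}\, H$, I would pick any $u \in \mathbb{F}_2^n$ with $v^\mathrm{T} = H u^\mathrm{T}$; such a $u$ exists by the definition of column space. Applying the local-testability inequality to this particular $u$ yields
\begin{eqnarray}
\frac{\vert v \vert}{r} \;=\; \frac{\vert H u^\mathrm{T} \vert}{r} \;\geq\; \frac{s}{n}\min_{c\in C}\vert u - c\vert ,
\end{eqnarray}
so if $c_\star \in C$ achieves the minimum on the right-hand side (which exists because $C$ is finite), then $\vert u - c_\star \vert \leq \frac{n}{rs}\vert v \vert$.

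Next, I would set $u_\star := u + c_\star$, with addition taken in $\mathbb{F}_2$. Because $c_\star$ belongs to $C = \ker H$, we have $H c_\star^\mathrm{T} = 0$, and hence $H u_\star^\mathrm{T} = H u^\mathrm{T} + H c_\star^\mathrm{T} = v^\mathrm{T}$, giving the first required property. The weight bound $\vert u_\star \vert = \vert u - c_\star \vert \leq \frac{n}{rs}\vert v \vert$ is then immediate from the previous step, establishing the second property.

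The main (and essentially only) obstacle is conceptual rather than technical: one must recognize that the ambiguity in preimages of $v^\mathrm{T}$ under $H$ is precisely the kernel $C$, so that selecting the lightest preimage is equivalent to minimizing Hamming distance from $u$ to the code $C$. The $(\omega,s)$-LTC inequality then upgrades the trivial existence of some preimage into a quantitative weight bound controlled linearly by $\vert v \vert$, which is exactly what the lemma asserts. Note that the locality parameter $\omega$ plays no role in the argument---only the soundness $s$ enters the bound---which is consistent with the form of the stated inequality.
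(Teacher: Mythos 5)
Your proposal is correct and follows essentially the same argument as the paper's proof: take an arbitrary preimage of $v^\mathrm{T}$, invoke the soundness inequality to find a nearby codeword $c_\star \in \ker H$, and set $u_\star = u + c_\star$. The only difference is cosmetic notation (the paper writes $u_\star = w - c_\star$, which is the same thing over $\mathbb{F}_2$).
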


\begin{proof}
Because $v^\mathrm{T} \in \mathrm{colsp} H$, there exists $w \in \mathbb{F}_2^n$ such that $v^\mathrm{T} = H w^\mathrm{T}$. Then, there exists $c_\star \in \mathrm{ker} H$ such that 
\begin{eqnarray}
\frac{1}{r} \vert v \vert &\geq& \frac{s}{n} \vert w-c_\star \vert.
\end{eqnarray}
The lemma is proved by taking $u_\star = w-c_\star$. 
\end{proof}

\section{Logical operations}
\label{app:operations}

The circuits implementing logical operations are shown in Fig.~\ref{fig:operations}, while the circuits for magic-state preparation are shown in Fig.~\ref{fig:injections}. 

\begin{figure}[htbp]
\centering
\includegraphics[width=\linewidth]{./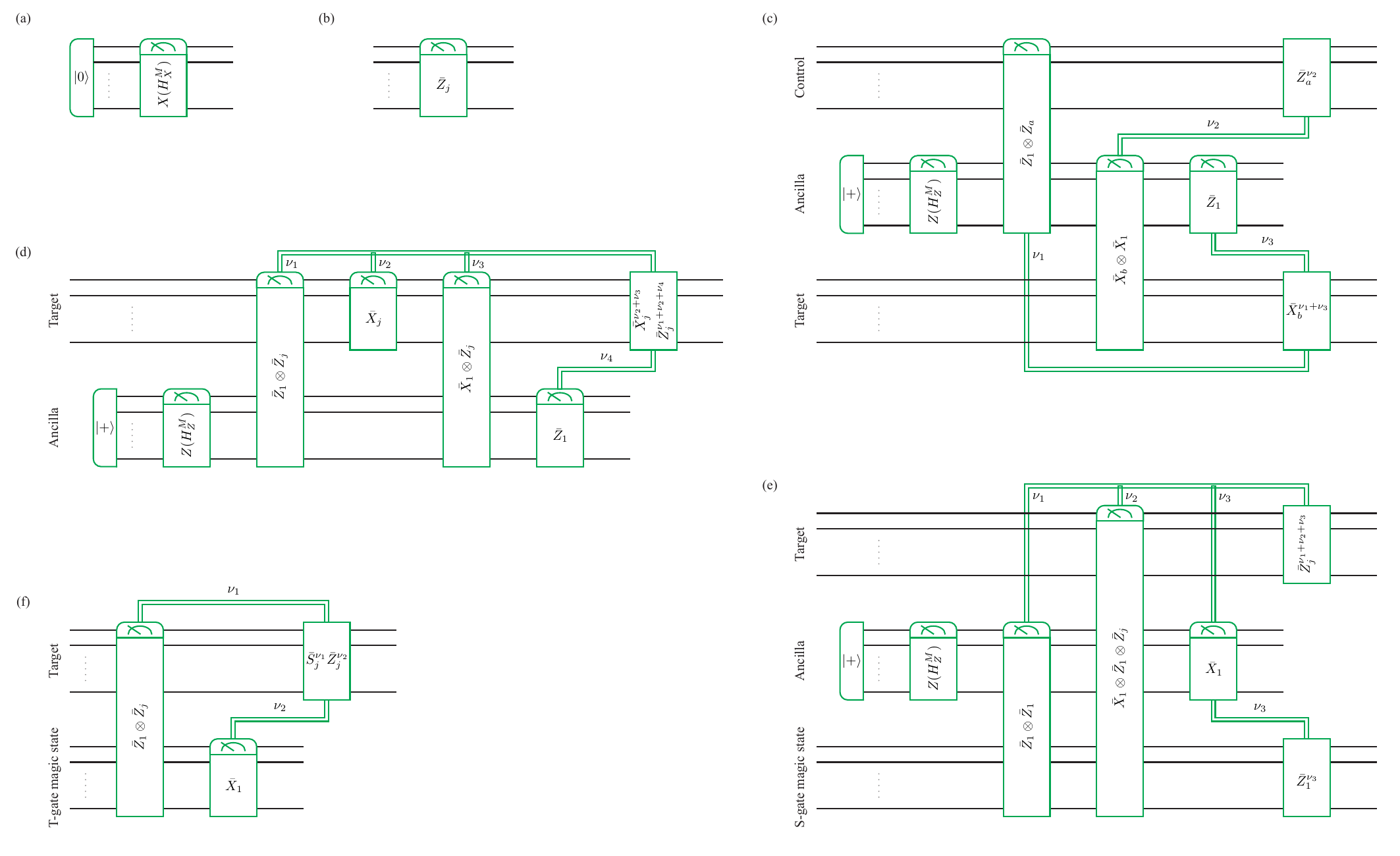}
\caption{
Logical operations. (a) Initialization. (b) Measurement $Mea_j$. (c) Controlled-NOT gate $CNOT_{a,b}$. (d) Hadamard gate $H_j$. (e) $S$ gate $S_j$. (f) $T$ gate $T_j$. Each group of horizontal lines represents a memory-code block. Single-qubit gates are applied to the target memory-code block. Each of the gates $CNOT_{a,b}$, $H_j$, and $S_j$ requires an ancilla memory-code block initialized in the $\ket{+}$ state. Magic states are stored in the first logical qubit of their respective memory-code blocks. 
}
\label{fig:operations}
\end{figure}

\begin{figure}[htbp]
\centering
\includegraphics[width=\linewidth]{./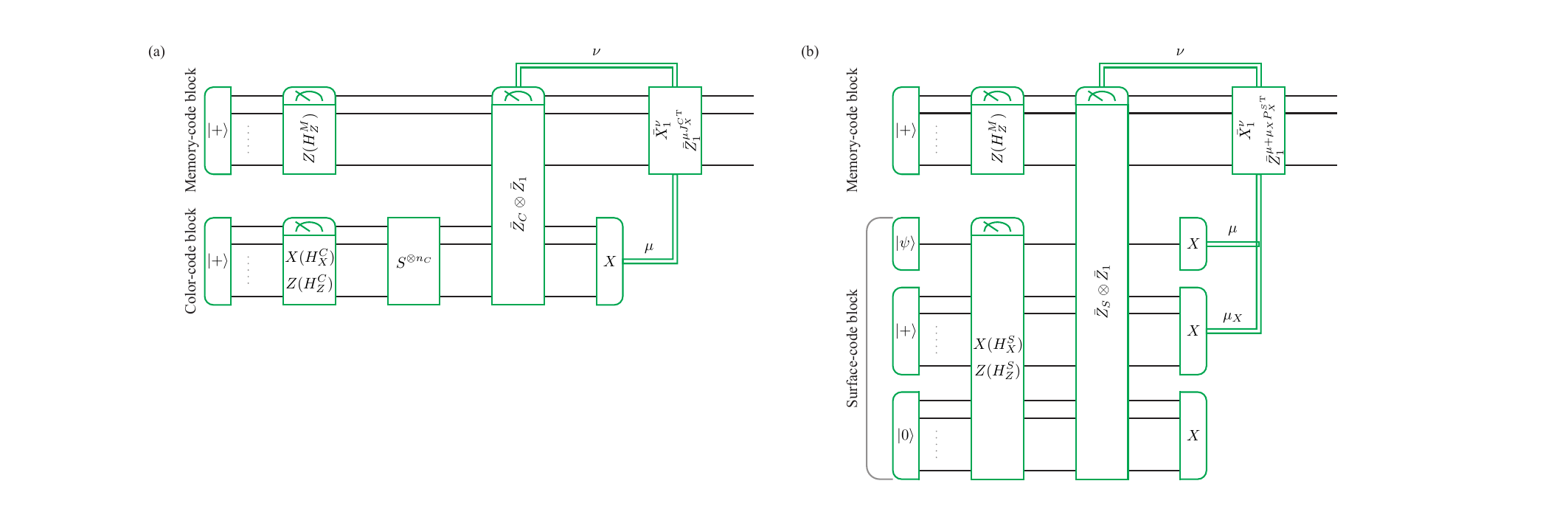}
\caption{
(a) Preparation of the $S$-gate magic state: a fault-tolerant copy is prepared on the first logical qubit of the memory-code block. (b) Preparation of the $T$-gate magic state: a noisy copy is prepared on the first logical qubit of the memory-code block. 
}
\label{fig:injections}
\end{figure}

\section{Serialization}
\label{app:serialization}

We use a pair $(B,F)$ to denote a logical operation: the operation $F \in \{Mea_j, H_j, S_j, T_j, CNOT_{a,b}\}$ applied on the block(s) $B$. Specifically, if $F \in \{Mea_j, H_j, S_j, T_j\}$ is applied on the $u$-th block, then $B = u$; if $F = CNOT_{a,b}$ is applied on the $a$-th logical qubit in the $u$-th block and the $b$-th logical qubit in the $v$-th block, then $B = (u,v)$. 

We define the \emph{qubit support} of the operation $(B,F)$ as 
\begin{eqnarray}
\mathrm{supp}(B,F) =
\begin{cases}
\{(u,j)\}, & \text{if } B = u,\ F \in \{Mea_j, H_j, S_j, T_j\},\\
\{(u,a),(v,b)\}, & \text{if } B = (u,v),\ F = CNOT_{a,b}.
\end{cases}
\end{eqnarray}
Similarly, we define the \emph{block support} of $(B,F)$ as
\begin{eqnarray}
\mathrm{supp}(B) =
\begin{cases}
\{u\}, & \text{if } B = u,\\
\{u,v\}, & \text{if } B = (u,v).
\end{cases}
\end{eqnarray}

Consider an operation set $OS = \{(B_l,F_l) \mid l = 1,2,\ldots,L\}$. We say that $OS$ is \emph{qubit-disjoint} if and only if the qubit supports are disjoint, i.e., 
\begin{eqnarray}
\mathrm{supp}(B_{l_1},F_{l_1}) \cap \mathrm{supp}(B_{l_2},F_{l_2}) = \emptyset
\quad \forall\, l_1 \neq l_2.
\end{eqnarray}
Similarly, $OS$ is \emph{block-disjoint} if and only if the block supports are disjoint, i.e., 
\begin{eqnarray}
\mathrm{supp}(B_{l_1}) \cap \mathrm{supp}(B_{l_2}) = \emptyset
\quad \forall\, l_1 \neq l_2.
\end{eqnarray}

\begin{lemma}
For any qubit-disjoint operation set $OS = \{(B_l,F_l) \st l = 1,2,\ldots,L\}$ (an operation layer) acting on code blocks, each encoding $k$ logical qubits, there exists a partition $P$ of $OS$ such that $\vert P \vert = O(k)$, and every element of $P$ (a sub-layer) is block-disjoint. 
\end{lemma}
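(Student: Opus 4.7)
The plan is to cast the problem as a vertex-coloring problem on a conflict graph. Let $G$ be the graph whose vertex set is $OS$ and whose edges connect pairs of operations whose block supports overlap. By the definition of block-disjointness, a block-disjoint sub-layer is exactly an independent set of $G$, so partitioning $OS$ into block-disjoint sub-layers is equivalent to properly vertex-coloring $G$; in particular, the minimum $|P|$ equals the chromatic number $\chi(G)$. It therefore suffices to exhibit a proper coloring of $G$ using $O(k)$ colors.

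The key step is to bound the maximum degree $\Delta(G)$ using the qubit-disjointness hypothesis. For each block $u$, let $d_u$ denote the number of operations in $OS$ whose block support contains $u$. Every such operation occupies at least one logical qubit of block $u$ (a CNOT with $B=(u,u)$ occupies two), and by qubit-disjointness these occupied qubits are pairwise distinct; since block $u$ has only $k$ logical qubits, it follows that $d_u \leq k$. Now, a single-block operation with block support $\{u\}$ conflicts only with other operations touching $u$, so its degree in $G$ is at most $d_u - 1 \leq k - 1$. A CNOT with block support $\{u,v\}$ ($u \neq v$) conflicts only with operations touching $u$ or $v$, so its degree is at most $(d_u - 1) + (d_v - 1) \leq 2k - 2$. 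Hence $\Delta(G) \leq 2k - 2$.

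Finally, applying the elementary bound $\chi(G) \leq \Delta(G) + 1$ via greedy coloring produces a proper coloring of $G$ with at most $2k - 1 = O(k)$ color classes, which directly yields the desired partition $P$ with $|P| = O(k)$ and every element block-disjoint. I expect the only mild obstacle to be the bookkeeping for CNOTs of the form $B=(u,u)$, whose block support is $\{u\}$ yet whose qubit cost at $u$ is $2$; this is fully absorbed into the inequality $d_u \leq k$ and does not affect the coloring argument. No sophisticated edge-coloring tool such as Vizing's theorem is required for the $O(k)$ scaling, although one could alternatively view $G$ as the line graph of the multigraph on blocks whose edges are the operations (single-block operations being self-loops) and invoke a multigraph chromatic index bound to tighten the constant.
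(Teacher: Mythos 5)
Your proposal is correct. It is, at bottom, the same combinatorial idea as the paper's proof, but executed through a different lens: the paper maps $OS$ to a multigraph (with loops) whose vertices are blocks and whose edges are operations, observes that qubit-disjointness bounds the maximum degree by $k$, and then invokes cited chromatic-index results to get a proper edge coloring with $O(k)$ colors, each color class being a block-disjoint sub-layer. You instead work directly with the conflict graph on operations --- which, as you note, is the line graph of that multigraph --- bound its maximum degree by $2k-2$ via the per-block count $d_u \le k$, and apply the elementary greedy bound $\chi(G) \le \Delta(G)+1$ to obtain $2k-1 = O(k)$ sub-layers. What your route buys is self-containedness: no appeal to edge-coloring theorems or algorithms is needed, and the self-loop case $B=(u,u)$ is handled transparently by the qubit count rather than by a degree convention for loops. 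What the paper's route buys is a tighter constant (chromatic-index bounds for multigraphs beat the greedy line-graph bound) and an explicit efficient coloring algorithm via the cited reference, which matters since the lemma is used inside a compilation procedure that must be carried out constructively. Both arguments establish the stated $O(k)$ scaling, and your degree bookkeeping (including the observation that an operation touching both $u$ and $v$ is only overcounted, never undercounted, in $(d_u-1)+(d_v-1)$) is sound.
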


\begin{proof}
We construct the partition as follows (see also Ref.~\cite{Tamiya2024,Nguyen2024}). 

First, we map the operation set $OS$ to a multigraph with loops $G = (V, E, r)$. Suppose the operation set acts on $M$ code blocks; then the vertex set is $V = \{1, 2, \ldots, M\}$. The edge set $E = \{1, 2, \ldots, L\}$ has the same cardinality as the operation set, and $r(l) = \mathrm{supp}(B_l)$ specifies the endpoint vertices of the $l$-th edge. 

Next, we find a proper edge coloring of $G$, i.e., a mapping $f: E \to C$, where $C$ is the set of colors, such that any two edges incident to the same vertex receive different colors. Since $OS$ is qubit-disjoint, the maximum degree of $G$ is $k$. Then, the minimum number of colors required for the coloring, the chromatic index of $G$, is $O(k)$~\cite{Nguyen2024,Arjomandi1982}. 

Finally, we construct the desired partition as $P = \{U_c \st c\in C\}$, where $U_c = \{(B_l,F_l) \st f(l) = c\}$. Each $U_c$ is block-disjoint by construction, and $\vert P \vert = O(k)$, completing the proof. 
\end{proof}

\section{Detailed cost analysis}
\label{app:cost}

In this section, we analyze the time overhead under the constraint of a constant qubit overhead. We assume that the number of logical qubits $k$ is small compared to $M$. Specifically, we assume that the memory code distance scales as $d = O(\mathrm{polylog}(Mk))$, which determines both the block length $n$ and the corresponding logical qubit number $k$. This distance is sufficient to achieve a logical error rate per operation $p_L$ satisfying $1/p_L = O(\mathrm{poly}(Mk))$. If the logical circuit size is polynomial, i.e.~$\lvert C\rvert = O(\mathrm{poly}(Mk))$, then this choice of code distance is also sufficient to achieve any target total logical error rate $\epsilon$ satisfying $1/\epsilon = O(\mathrm{poly}(Mk))$. 

\begin{lemma}
Suppose $d,d_R,d_F = O(\mathrm{polylog}(Mk))$. Assuming a sufficient supply of distilled magic states, any block-disjoint operation set (sub-layer) $OS$ can be implemented in time $O(d_S^3)$, requiring $O(Mk)$ physical qubits in the resource-state factory. 
\label{lem:operations}
\end{lemma}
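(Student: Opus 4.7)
The plan is to reduce the claim to a counting and scheduling argument for resource-state production, since the physical depth of the gate-teleportation circuits implementing the logical operations themselves is $O(1)$ once the resource states and distilled magic states are in hand.

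First I would enumerate the resource states consumed by the sub-layer. Because $OS$ is block-disjoint, each of the $M$ memory-code blocks participates in at most one logical operation, so $|OS| \leq M$. Table~\ref{tab:operations} shows that every logical operation (including $T_j$, which is assumed to draw on already-distilled magic states) is compiled into a constant number of logical measurements from Table~\ref{tab:measurements}. Consequently the sub-layer requires $O(M)$ copies of memory-code resource states (of types $H^M_X$ and $H^M_Z$) and $O(M)$ copies of deformed-code resource states $H^D_Z(\sigma)$. The crucial point is that PCS (Lemma~\ref{lem:PCS}) implements the \emph{same} logical measurement on up to $k_R$ target blocks with a single deformed-code ancilla system; grouping the $O(M)$ measurements by their deformed code, the number of deformed-code resource states actually needed is $O(M/k_R)$.

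Next I would apply LTSP (Lemma~\ref{lem:LTSP}) to count the circuit invocations. Each run of a memory-code factory circuit uses $O(k k_F d_S^2)$ physical qubits, has depth $O(d_S)$, and yields $k_F$ identical copies, so producing the $O(M)$ memory-code states needs $O(M/k_F)$ runs. Likewise, each deformed-code factory circuit uses $O(k k_R k_F d_S^2)$ qubits, has depth $O(d_S)$, and outputs $k_F$ copies, so we need $O(M/(k_R k_F))$ runs. I would then schedule these runs in $O(d_S^2)$ sequential batches within the $O(d_S^3)$ time window. In each batch we execute $O(M/(k_F d_S^2))$ memory-code instances and $O(M/(k_R k_F d_S^2))$ deformed-code instances in parallel; multiplying instance count by per-instance qubit cost gives $O(Mk)$ physical qubits per batch in each family, matching the factory budget.

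The assumption $d, d_R, d_F = O(\mathrm{polylog}(Mk))$ enters only to guarantee that the per-run qubit and depth estimates from Lemma~\ref{lem:LTSP} hold with constant-rate R and F codes (so that $n_R, n_F$ are polylogarithmic and the constants in $O(k k_F d_S^2)$ etc.\ are genuinely benign). Once production finishes, the actual logical operations are executed by the PCS circuit of Fig.~\ref{fig:PCS} composed with the gate-teleportation circuit of Fig.~\ref{fig:teleportation}, both of which are $O(1)$ depth when all resource states are present; this adds only an $O(1)$ contribution on top of the $O(d_S^3)$ factory time.

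The main obstacle I anticipate is the bookkeeping in the second step: verifying that the batched scheduling simultaneously (i) never exceeds the $O(Mk)$ factory qubit budget for either resource-state family, (ii) delivers every needed state before the corresponding gate-teleportation step, and (iii) does not inflate the depth beyond $O(d_S^3)$. This is a routine but careful accounting, and it is where the PCS factor-of-$k_R$ speedup for deformed-code states is essential---without it, the deformed-code family alone would require $O(Mk d_S^2)$ parallel qubits per batch, exceeding the budget.
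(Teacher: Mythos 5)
Your overall strategy is the same as the paper's: bound the number of resource states consumed by the sub-layer, batch the LTSP circuit invocations into $d_S^2$ sequential steps of depth $O(d_S)$ each, and check that each step fits in the $O(Mk)$ factory qubit budget. However, there is a concrete gap in the accounting, and it sits exactly where you wave it off as ``routine but careful.'' Your aggregate count --- $O(M)$ operations, hence $O(M/k_R)$ deformed-code states, hence $O(M/(k_R k_F))$ circuit runs --- implicitly assumes that the $O(M)$ logical measurements can be freely pooled. They cannot: PCS parallelizes only \emph{identical} measurements across $k_R$ target blocks, and the operation alphabet $\{Mea_j, H_j, S_j, T_j, CNOT_{a,b}\}$ has $\Theta(k^2)$ distinct types, each with its own deformed code and its own LTSP circuit. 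If each type occurs only a handful of times, every occurring type still forces at least one deformed-code circuit run (cost $O(n k_R k_F d_S^2)$ qubits) in at least one step, so the per-step qubit count carries an additive term of order $k^2 \times O(n k_R k_F d_S^2) = O(k^3 k_R k_F d_S^2)$ on top of the leading $O(Mk)$. The paper's proof makes this explicit by defining $\mathrm{batch}(F) = \lceil \frac{1}{d_S^2}\lceil\frac{1}{k_F}\lceil\frac{1}{k_R}\mathrm{num}(F)\rceil\rceil\rceil$ per type $F$ and summing the ceiling losses over the $\Theta(k^2)$ types.

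This also means your claim that the hypothesis $d, d_R, d_F = O(\mathrm{polylog}(Mk))$ ``enters only to guarantee that the per-run qubit and depth estimates hold with benign constants'' is wrong. That hypothesis (which forces $k$, $k_R$, $k_F$, and $d_S$ to all be polylogarithmic in $Mk$) is precisely what is needed to absorb the additive rounding overhead $O(k^3 k_R k_F d_S^2)$ into $O(Mk)$; without it the lemma's qubit bound fails for sub-layers in which many operation types each appear a sub-$k_R k_F d_S^2$ number of times. To repair your argument, replace the aggregate count with a per-type count, carry the ceilings, and invoke the polylog hypothesis to kill the resulting $O(k^3 k_R k_F d_S^2)$ term --- at which point you recover the paper's proof.
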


\begin{proof}
Let $\mathrm{num}(F)$ denote the number of occurrences of operation $F \in {Mea_j, H_j, S_j, T_j, CNOT_{a,b}}$ in $OS$. Because $OS$ is block-disjoint, $\sum_F \mathrm{num}(F) = \vert OS \vert \leq M$. We define 
\begin{eqnarray}
\mathrm{batch}(F) = \left\lceil\frac{1}{d_S^2} \left\lceil\frac{1}{k_F} \left\lceil\frac{1}{k_R} \mathrm{num}(F) \right\rceil\right\rceil\right\rceil.
\end{eqnarray}
Since 
\begin{eqnarray}
\mathrm{batch}(F) \leq \frac{1}{d_S^2} \left[\frac{1}{k_F} \left(\frac{1}{k_R} \mathrm{num}(F) + 1 \right) + 1 \right] + 1,
\end{eqnarray}
we obtain 
\begin{eqnarray}
\sum_F \mathrm{batch}(F) \leq \frac{M + \Theta(k^2)(k_R + k_Rk_F + k_Rk_Fd_S^2)}{k_Rk_Fd_S^2},
\end{eqnarray}
where we have used that $\vert {Mea_j, H_j, S_j, T_j, CNOT_{a,b}} \vert = \Theta(k^2)$. 

The operation set $OS$ is implemented in $d_S^2$ steps, where each step contains at most $\mathrm{batch}(F)k_Rk_F$ copies of operation $F$ acting on different code blocks. 

To illustrate, consider $F = Mea_j$. This operation requires two types of resource states: $H^D_Z(\bar{Z}_j)$ and $H^M_X$. For each step, at most $\mathrm{batch}(F)k_F$ copies of $H^D_Z(\bar{Z}_j)$ are needed. These can be generated by running $\mathrm{batch}(F)$ copies of the corresponding LTSP circuit in parallel, requiring at most $\mathrm{batch}(F) \times O(nk_Rk_Fd_S^2)$ physical qubits. Similarly, up to $\mathrm{batch}(F)k_Rk_F$ copies of $H^M_X$ are needed, which can be produced by running $\mathrm{batch}(F)k_R$ copies of the corresponding LTSP circuit in parallel, requiring at most $\mathrm{batch}(F)k_R \times O(nk_Fd_S^2)$ physical qubits. Therefore, the total number of physical qubits required for operation $F$ is $\mathrm{batch}(F)O(nk_Rk_Fd_S^2)$. 

The same scaling applies to all operations, yielding 
\begin{eqnarray}
N_{RSF,1} &\leq & \sum_F \mathrm{batch}(F) O(nk_Rk_Fd_S^2) \\
&\leq & O(Mk) + O(k^3k_Rk_Fd_S^2).
\end{eqnarray}
Assuming $d,d_R,d_F = O(\mathrm{polylog}(Mk))$, we have $k^3k_Rk_Fd_S^2 = O(\mathrm{polylog}(Mk))$, under assumptions on code rate and distance. Consequently, $N_{RSF,1} = O(Mk)$, i.e.,~the total number of physical qubits required in the resource-state factory to implement a block-disjoint operation set scales linearly with the logical qubit count. 

Each step involves running LTSP circuits in parallel, which requires a time overhead of $O(d_S)$. Since there are $d_S^2$ steps, the total time overhead for implementing a block-disjoint operation set is $O(d_S^3)$. 
\end{proof}

\begin{lemma}
Suppose $d,d_R,d_F = O(\mathrm{polylog}(Mk))$ and $d_S = O(\mathrm{polylog}(d))$. Then $M$ copies of distilled $T$-gate magic states can be generated in time $O(d^{o(1)}d_S^3)$, requiring $O(Mk)$ physical qubits in the resource-state and magic-state factories. 
\end{lemma}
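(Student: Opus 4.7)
The plan is to split the preparation into the two stages outlined in the scheme and apply Lemma~\ref{lem:operations} to each one. In stage~1 I prepare $O(M)$ noisy copies of the $T$-gate magic state on independent distance-$d_S$ surface-code blocks and inject them into memory-code blocks; in stage~2 I distill these noisy copies down to $M$ high-fidelity copies using the constant-spacetime-overhead protocol of Ref.~\cite{Nguyen2024}. The key observation is that both stages ultimately reduce to executing block-disjoint sub-layers of logical operations on memory-code blocks, so the qubit budget and per-sub-layer time bound of Lemma~\ref{lem:operations} apply directly.

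For stage~1, I would use the injection protocol of Ref.~\cite{Li2014AMS} on a distance-$d_S$ surface-code block, which consumes $O(d_S^2)$ physical qubits and has $O(1)$ logical depth. Running $O(M)$ injections in parallel costs $O(Md_S^2)$ physical qubits in the magic-state factory, and since $d_S = O(\mathrm{polylog}(d))$ this is absorbed into $O(Mk)$. Each noisy state is then transferred to the first logical qubit of an ancilla memory-code block via the logical measurement $\bar{Z}_S\otimes \bar{Z}_1$ (Table~\ref{tab:measurements}). These transfers form a block-disjoint operation set of size at most $M$, so Lemma~\ref{lem:operations} implies that the associated resource states ($H^D_Z(\bar{Z}_S\otimes\bar{Z}_1)$, $H^S_X$, and $H^M_X$) are generated in time $O(d_S^3)$ using $O(Mk)$ physical qubits in the resource-state factory.

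For stage~2, I would invoke the distillation protocol of Ref.~\cite{Nguyen2024}, which at the logical level is a Clifford-plus-measurement circuit of depth $d^{o(1)}$ operating on a constant-rate number of memory-code blocks, and which outputs $M$ copies of magic states at the target logical error rate $p_L$. I compile each layer of this circuit into $O(1)$ block-disjoint sub-layers on memory-code blocks (the third ancilla sector in Fig.~\ref{fig:QC}) by the graph-coloring argument of Appendix~\ref{app:serialization}. Lemma~\ref{lem:operations} bounds each sub-layer's cost by $O(d_S^3)$ time and $O(Mk)$ resource-state-factory qubits; multiplying by the $d^{o(1)}$ depth gives a total time of $O(d^{o(1)}d_S^3)$. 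The magic-state factory holds $O(M)$ surface-code blocks plus $O(M)$ ancilla memory-code blocks used during distillation, summing to $O(Mk)$ physical qubits.

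The main subtlety I anticipate is the interface between the black-box distillation of Ref.~\cite{Nguyen2024} and our PCS+LTSP substrate. One must verify (i) that the noisy magic states produced by surface-code injection have error rate below the constant input threshold of the distillation protocol, which follows from $d_S = O(\mathrm{polylog}(d))$ together with $p_{\mathrm{phy}}\le \epsilon_*$; (ii) that compiling the distillation's logical Clifford operations onto memory-code blocks preserves the $d^{o(1)}$ depth, which reduces to re-running the serialization lemma in Appendix~\ref{app:serialization}; and (iii) that each parity-check measurement invoked during distillation is rendered effectively free of measurement errors by LTSP (Lemma~\ref{lem:LTSP}), so that the per-round logical failure probability is $O(p_L)$ and the overall distillation output meets the target fidelity. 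Summing the two stages yields time $O(d^{o(1)} d_S^3)$ and a peak physical-qubit usage of $O(Mk)$ across the resource-state and magic-state factories, as claimed.
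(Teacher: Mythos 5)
Your proposal matches the paper's proof: the same two-stage decomposition (surface-code injection of noisy $T$-states followed by the constant-overhead distillation of Ref.~\cite{Nguyen2024}), the same reduction of each stage to block-disjoint operation sets handled by Lemma~\ref{lem:operations}, and the same qubit/time accounting yielding $O(d^{o(1)}d_S^3)$ time and $O(Mk)$ qubits. The one thing you leave implicit is the design choice that makes each distillation layer cost $O(1)$ sub-layers rather than $O(k)$: the paper explicitly restricts distillation to the first logical qubit of each memory-code block, so the maximum degree in the serialization graph is $O(1)$; without that restriction the graph-coloring bound gives $O(k)$ sub-layers per layer and the claimed time bound would not follow.
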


\begin{proof}
The generation of distilled magic states consists of two stages. First, we prepare noisy magic states, and then we distill them. 

The circuit for noisy magic-state preparation is shown in Fig.~\ref{fig:injections}(b). We begin by preparing a noisy magic state on a surface-code block, which is subsequently transferred to the first logical qubit of a memory-code block. Using the constant spacetime-overhead distillation protocol, to obtain $M$ copies of distilled magic states, we require $O(M)$ copies of noisy magic states~\cite{Li2014AMS}. Consequently, $O(M)$ surface-code blocks and $O(M)$ memory-code blocks are needed to store these states, corresponding to $O(Md_S^2)$ and $O(Mk)$ physical qubits in the magic-state factory, respectively. 

The $M' = O(M)$ copies of noisy magic states are prepared over $d_S^2$ steps, where each step produces at most $m'k_Rk_F$ copies, with 
\begin{eqnarray}
m' = \left\lceil\frac{1}{d_S^2} \left\lceil\frac{1}{k_F} \left\lceil\frac{1}{k_R} M' \right\rceil\right\rceil\right\rceil.
\end{eqnarray}
We require four types of gate-teleportation resource states: $H^M_Z$, $H^S_X$, $H^S_Z$, and $H^D_Z(\bar{Z}_S \otimes \bar{Z}_1)$. In each step, up to $m'k_Rk_F$ ($m'k_Rk_F$, $m'k_Rk_F$, and $m'k_F$) copies of $H^M_Z$ ($H^S_X$, $H^S_Z$, and $H^D_Z(\bar{Z}_S \otimes \bar{Z}_1)$) are required. These resource states can be generated by running $m'k_R$ ($m'k_R$, $m'k_R$, and $m'$) copies of the corresponding LTSP circuits in parallel, which demand at most
$m'k_R \times O(nk_Fd_S^2)$ ($m'k_R \times O(d_S^2k_Fd_S^2)$, $m'k_R \times O(d_S^2k_Fd_S^2)$, and $m' \times O(nk_Rk_Fd_S^2)$) physical qubits. Since $d_S = O(\mathrm{polylog}(n))$, the total number of physical qubits is $O(m'nk_Rk_Fd_S^2) = O(M'n) = O(Mk)$. The time cost of each step is $O(d_S)$, therefore, the overall time overhead of preparing noisy magic states is $O(d_S^3)$. 

To prepare $M$ copies, the distillation circuit requires $O(M)$ logical qubits, has a depth of $O(\mathrm{polylog}(d))$, and is repeated $O(d^{o(1)})$ times~\cite{Nguyen2024}. We propose to utilize only the first logical qubit of each memory-code block for distillation. Consequently, $O(M)$ memory-code blocks are required in the magic-state factory, corresponding to $O(Mk)$ physical qubits. 

Each layer of the distillation circuit constitutes a block-disjoint operation set, as operations are restricted to the first logical qubit of each block. Therefore, the entire circuit can be executed in time $O(d^{o(1)}d_S^3)$, requiring $O(Mk)$ physical qubits in the resource-state factory (Lemma~\ref{lem:operations}). 

Note that the implementation of the distillation circuits also requires ancillary memory-code blocks; however, this contributes only a constant factor to the overall qubit overhead. 
\end{proof}

\begin{lemma}
Suppose $d,d_R,d_F = O(\mathrm{polylog}(Mk))$, $d_C = O(d)$, and $d_S = O(\mathrm{polylog}(d))$. The quantum computer can be initialized in time $O(kd_S^3)$, requiring $O(Mk)$ physical qubits. 
\end{lemma}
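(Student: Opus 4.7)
The plan is to decompose the initialization procedure into two block-disjoint sub-tasks---transversal $\ket{0}$ preparation and $X$-stabilizer projection of the $M$ data memory-code blocks, and preparation of $M$ copies of the $S$-gate magic state stored in the designated ancilla memory-code blocks---and to bound the cost of each using Lemma~\ref{lem:operations} together with a qubit-budget argument for the magic-state factory.

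First I would handle the data blocks. Each is prepared transversally in $\ket{0}^{\otimes n}$ in $O(1)$ depth, occupying $Mn = O(Mk)$ physical qubits in memory. A single LTSP-aided round of $X$-type parity-check measurements then projects each block into the memory code's codespace; one round is sufficient by Lemma~\ref{lem:LTSP}, and the step consumes $H^M_X$ resource states as listed in Table~\ref{tab:operations}. Since the $M$ block initializations act on disjoint blocks, they form a block-disjoint operation set, and the analysis of Lemma~\ref{lem:operations} bounds this step by $O(d_S^3)$ time using $O(Mk)$ qubits in the resource-state factory.

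Next I would turn to the $M$ copies of the $S$-gate magic state, produced via the circuit of Fig.~\ref{fig:injections}(a): fault-tolerantly prepare a distance-$d_C$ color-code block, apply the transversal $S$ gate, and transfer the encoded magic state to the first logical qubit of an ancilla memory-code block through the logical measurement $\bar{Z}_C \otimes \bar{Z}_1$ from Table~\ref{tab:measurements}. Because LTSP eliminates measurement errors, a single round of each relevant parity check suffices, so one isolated preparation finishes in $O(d_S^3)$ time while drawing $H^C_X$, $H^C_Z$, $H^M_X$, and $H^D_Z(\bar{Z}_C \otimes \bar{Z}_1)$ from the resource-state factory.

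The hard part will be respecting the $O(Mk)$ qubit ceiling of the magic-state factory, because each color-code block occupies $O(d_C^2) = O(d^2)$ physical qubits while $k = \Theta(d^a)$ may be smaller than $d^2$ when $a < 2$. My plan is to serialize the $M$ preparations into batches sized by the factory budget. When $a \geq 2$ we have $d^2 = O(d^a) = O(k)$, so all $M$ copies fit in parallel and finish in $O(d_S^3)$ time. When $1 \leq a < 2$, only $O(Mk/d_C^2)$ preparations fit simultaneously, so the $M$ copies split into $O(d_C^2/k) = O(d^{2-a})$ batches, each a block-disjoint operation set processed by Lemma~\ref{lem:operations} in $O(d_S^3)$ time; since $k = \Theta(d^a)$ with $a \geq 1$ for every memory code listed in Table~\ref{tab:codes}, the batch count satisfies $d^{2-a} = O(d^a) = O(k)$, giving a total magic-state-preparation time of $O(k d_S^3)$. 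Combined with the $O(d_S^3)$ data-block cost, the whole initialization fits in $O(k d_S^3)$ time within the $O(Mk)$ qubit budget, as claimed.
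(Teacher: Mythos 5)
Your proposal is correct and follows essentially the same route as the paper: the same two-part decomposition (transversal $\ket{0}$ initialization of the data blocks as one block-disjoint sub-layer costing $O(d_S^3)$, plus $S$-gate magic-state preparation via the color code and the $\bar{Z}_C\otimes\bar{Z}_1$ measurement), with the same key observation that the $O(d_C^2)$ footprint of the color-code blocks forces an $O(k)$-fold serialization under the $O(Mk)$ qubit budget. The only cosmetic difference is the batching parameterization — the paper fixes $k$ steps of $\lceil M/k\rceil$ copies and uses $d_C=O(k)$, while you size batches as $O(Mk/d_C^2)$ copies over $O(d_C^2/k)=O(k)$ batches — which yields the identical $O(kd_S^3)$ bound.
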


\begin{proof}
There are two tasks required to initialize the quantum computer: preparing $M$ copies of high-fidelity $S$-gate magic states and storing them in $M$ memory-code blocks---each stored in the first logical qubit of a block; and initializing the $Mk$ logical qubits in the data memory-code blocks in the state $\ket{0}$. 

As shown in Fig.~\ref{fig:injections}(a), a high-fidelity $S$-gate magic state is first prepared on a color-code block, which is subsequently transferred to the first logical qubit of a memory-code block. We prepare the $M$ copies in $k$ steps, with at most $M'' = \lceil M/k \rceil$ copies prepared in each step. Thus, we require $M''$ color-code blocks, corresponding to at most $M'' \times O(d_C^2) = O(Mk)$ physical qubits, noting that $d_C = O(k)$. 

As shown in Fig.~\ref{fig:injections}(a), a high-fidelity $S$-gate magic state is first prepared on a color code block, which is subsequently transferred to the first logical qubit of a memory-code block. We prepare the $M$ copies in $k$ steps with at most $M'' = \lceil M/k \rceil\}$ copies prepared in each step; then, we need $M''$ color-code blocks, i.e.~at most $M'' \times O(d_C^2) = O(Mk)$ physical qubits. Note that $d_C = O(k)$. 

In each step, we prepare the $M''$ copies over $d_S^2$ sub-steps, where each sub-step produces at most $m''k_Rk_F$ copies, with 
\begin{eqnarray}
m'' = \left\lceil\frac{1}{d_S^2} \left\lceil\frac{1}{k_F} \left\lceil\frac{1}{k_R} M'' \right\rceil\right\rceil\right\rceil.
\end{eqnarray}
Similar to the preparation of noisy $T$-gate magic states, we require four types of gate-teleportation resource states: $H^M_Z$, $H^C_X$, $H^C_Z$, and $H^D_Z(\bar{Z}_C \otimes \bar{Z}_1)$. In each step, up to $m''k_Rk_F$ ($m''k_Rk_F$, $m''k_Rk_F$, and $m''k_F$) copies of $H^M_Z$ ($H^C_X$, $H^C_Z$, and $H^D_Z(\bar{Z}_C \otimes \bar{Z}_1)$) are required. These resource states can be generated by running $m''k_R$ ($m''k_R$, $m''k_R$, and $m''$) copies of the corresponding LTSP circuits in parallel, requiring at most
$m''k_R \times O(nk_Fd_S^2)$ ($m''k_R \times O(d_C^2k_Fd_S^2)$, $m''k_R \times O(d_C^2k_Fd_S^2)$, and $m'' \times O((n + d_C^2)k_Rk_Fd_S^2)$) physical qubits. Since $d_S = O(\mathrm{polylog}(n))$, the total number of physical qubits is $O(m''n^2k_Rk_Fd_S^2) = O(M''n^2) = O(Mk)$, noting that $d_C = O(n)$. 

According to the above analysis, the overall time overhead for preparing high-fidelity $S$-gate magic states is $O(kd_S^3)$. 

The initialization of the data memory-code blocks is similar to a block-disjoint operation set and can therefore be accomplished in time $O(d_S^3)$ with $O(Mk)$ physical qubits. 
\end{proof}

{\section{Implementation on almost-constant-degree qubit networks}}
\label{app:almost_constant_degree}

In this section, we adapt our protocol to a physical qubit network characterized by an almost constant vertex degree:
\begin{itemize}
    \item \textbf{Physical qubit network:} The physical network is modeled as a graph where each vertex represents a physical qubit. Two-qubit physical gates can be implemented between any pair of qubits connected by an edge. The maximum vertex degree of this graph is $D_{P} = O(d^{o(1)})$, meaning that each physical qubit is coupled to at most an almost constant number of neighbors.
    
    \item \textbf{Logical qubit network:} On this physical network, we realize a fault-tolerant quantum computer defined by a \textit{logical coupling graph}. On this graph, a memory-code block is assigned to each vertex. Inter-block logical operations are supported between any two blocks connected by an edge. The maximum vertex degree of the logical graph is strictly bounded by a constant, $D_L = \Theta(1)$.
    
    \item \textbf{Qubit overhead:} Under these connectivity constraints, the protocol maintains an almost constant qubit overhead of $O(d^{o(1)})$.
    
    \item \textbf{Time overhead and logical operations:} The time overhead remains $O(d^{a+o(1)})$ for the implementation of logical circuits consisting of the following operations:
    \begin{enumerate}
        \item \textbf{Single-qubit operations:} These include state initialization and measurement in the $X$ or $Z$ basis, single-qubit Clifford gates, and the preparation of noisy magic states for each logical qubit.
        \item \textbf{Intra-block operations:} Two-qubit gates can be implemented between any arbitrary pair of logical qubits residing within the same memory-code block.
        \item \textbf{Inter-block operations:} Two-qubit gates can be implemented between logical qubits belonging to distinct blocks, provided those blocks are adjacent on the logical graph.
    \end{enumerate}
\end{itemize}

\subsection{Operation diversity}
\label{app:challenge}

\begin{figure}
    \centering
    \includegraphics[width=\linewidth]{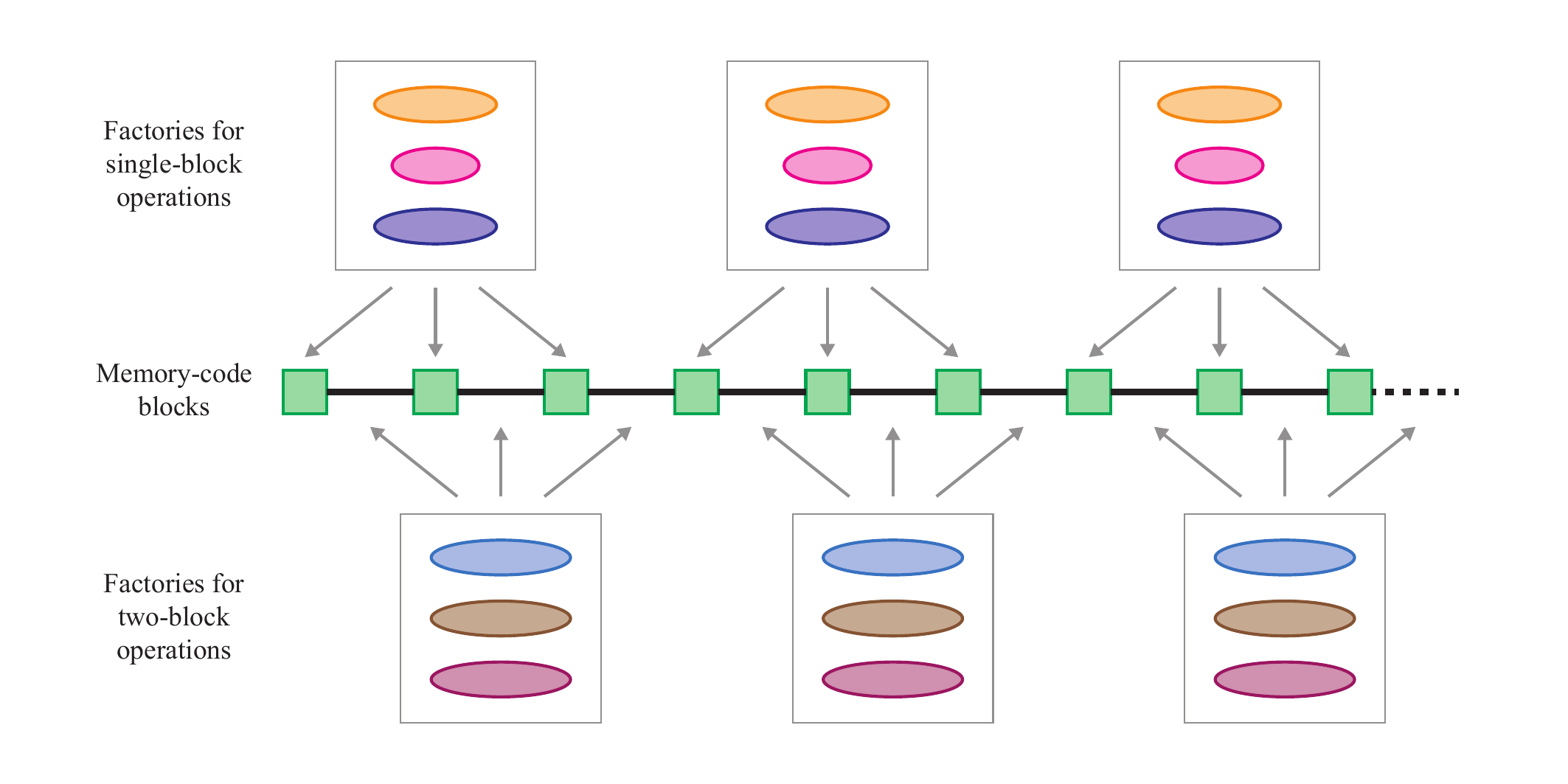}
    \caption{
    Network of memory-code blocks and resource-state factories. The architecture is mapped onto an almost-constant-degree physical qubit network where two-block measurements are restricted to pairs connected by black edges. Although illustrated as a one-dimensional chain, the network can be generalized to any graph with a constant vertex degree. Each distinct single-block and two-block measurement type is assigned a dedicated resource-state factory. Consequently, each physical qubit within a data block must potentially interface with every relevant specialized factory, leading to a physical vertex degree that scales linearly with the number of logical measurement types.
    }
    \label{fig:blocks_factories}
\end{figure}

Since our protocol exclusively utilizes (q)LDPC codes, one might intuitively expect it to require only a constant-degree physical qubit network, i.e.,~$D_P = O(1)$. In practice, however, a single physical qubit often participates in multiple distinct logical operations across successive circuit layers. This necessitates interactions with various resource states at different stages of the protocol. Consequently, the total set of required physical couplings for a given qubit can accumulate beyond $O(1)$, leading to a total vertex degree that exceeds a strict constant.

As detailed in Table~\ref{tab:measurements}, the number of distinct logical measurement types, denoted by $\sigma$, scales as $O(k)$. A naive implementation would allocate dedicated ancilla qubits and resource-state preparation circuits for each measurement type, referred to as a resource-state factory in what follows. As illustrated in Fig.~\ref{fig:blocks_factories}, each code block would require multiple dedicated factories to support single-block measurement types. Similarly, inter-block measurements would require additional specialized factories. This leads to two major challenges: 
\begin{itemize}
    \item \textbf{Qubit overhead:} Supporting $O(k)$ distinct measurement types with dedicated factories would incur a total qubit overhead of $O(k)\times O(Mkd_S^2)$, assuming no reuse of factory qubits. This scaling fundamentally violates the target of constant qubit overhead.
    \item \textbf{Degree explosion:} Since each code block must interface with every specialized factory (as indicated by the arrows in Fig.~\ref{fig:blocks_factories}), the physical vertex degree $D_P$ scales as $O(k)$, preventing the realization on a constant-degree qubit network.
\end{itemize}
Next, we show how to address these issues. 

\subsection{Rotation of logical qubit array}

\begin{figure}
    \centering
    \includegraphics[width=\linewidth]{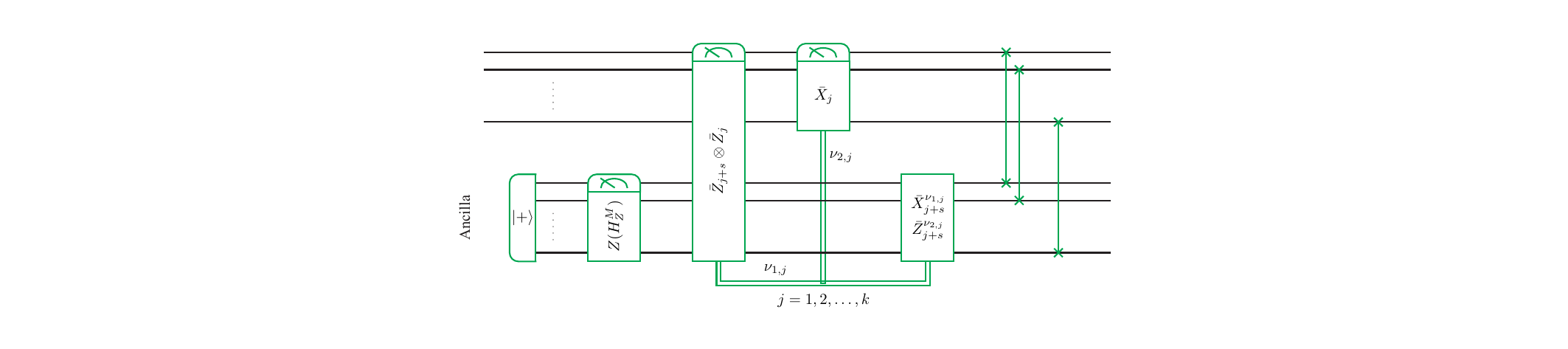}
    \caption{
    Logical rotation operation. The state of each $j$-th logical qubit is transferred to the $(j + s \pmod k)$th logical qubit.
    }
    \label{fig:Rot}
\end{figure}

We can significantly reduce the number of operation types by introducing a \textit{logical qubit array rotation} scheme within each memory-code block, which is proposed in Ref.~\cite{Nguyen2024}. Initially, $O(k)$ distinct operation types are required to access every logical qubit in a block. However, the same functionality can be achieved using only $O(\mathrm{polylog}(k))$ operation types. 

To implement this, we model the logical qubits within a block as a one-dimensional ring with indices from $1$ to $k$. In this configuration, logical measurements are restricted to the logical qubit at the first position. To operate on any other logical qubit, we rotate the ring until the target qubit reaches the first position. By utilizing only $O(\mathrm{polylog}(k))$ elementary rotation types (e.g.,~rotations by powers of two), we can realize any arbitrary rotation of the ring in at most $O(\mathrm{polylog}(k))$ steps. This introduces an additional $O(d^{o(1)})$ factor to the time overhead but reduces the variety of required physical resource-state factories. As illustrated in Fig.~\ref{fig:Rot}, these logical rotations can be implemented efficiently via code surgery.

By adopting this rotation-based protocol, we reduce the total number of operation types to $O(\mathrm{polylog}(k)) = O(d^{o(1)})$, which is almost constant with respect to the code distance.

\subsection{Overhead and Connectivity Scaling}

We define a logical measurement as the pair $(\sigma, B)$, where $\sigma$ denotes the measurement type and $B$ specifies the participating code blocks. Let $LM_\sigma = \{(\sigma, B)\}$ be the set of logical measurements of type $\sigma$. Since the logical coupling graph is characterized by a constant degree, it follows that $|LM_\sigma| = O(M)$. Then, we allocate physical qubits as follows:
\begin{itemize}
    \item \textbf{Data and auxiliary blocks:} We utilize $M$ data blocks alongside $O(M)$ auxiliary blocks. Because the underlying memory code has a constant encoding rate, the total number of physical qubits required for these blocks is $O(Mk)$.
    
    \item \textbf{Ancilla qubits for deformed codes:} For each measurement type $\sigma$, we partition the set $LM_\sigma$ into $O(|LM_\sigma|/k_R)$ groups, where each group contains up to $k_R$ logical measurements. Each group is assigned a dedicated set of physical qubits to construct the corresponding ancilla system for the deformed code. The number of physical qubits per group is $O(k_R k)$. Given that the rotation scheme reduces the number of distinct measurement types to $O(\mathrm{polylog}(k))$, the total number of physical qubits for all ancilla systems is $O(Mk\mathrm{polylog}(k))$.
    
    \item \textbf{Resource-state factories:} For each measurement type $\sigma$, we allocate a specialized factory to every set of $k_F$ groups. This requires $O(|LM_\sigma|/(k_F k_R))$ factories to serve all measurements of a given type. Including the qubit cost of surface-code encoding ($d_S^2$), each factory requires $O(k_F k_R k d_S^2)$ physical qubits. Summing over all types, the total qubit requirement for the factories is $O(Mkd_S^2 \mathrm{polylog}(k))$.
\end{itemize}
As a result, the total qubit overhead for the protocol is $O(d_S^2 \mathrm{polylog}(k)) = O(d^{o(1)})$, which is almost constant with respect to the code distance. 

The time overhead incurs an additional $O(\mathrm{polylog}(k))$ factor from the rotation operations, maintaining an overall complexity of $O(d^{a+o(1)})$. 

The connectivity is summarized below:
\begin{itemize}
    \item \textbf{Data and auxiliary blocks:} These physical qubits are coupled to their respective factory qubits. Since the number of distinct measurement types is $O(\mathrm{polylog}(k))$, the vertex degree for these qubits is bounded by $O(\mathrm{polylog}(k))$.
    
    \item \textbf{Ancilla qubits for deformed codes:} Each ancilla qubit is coupled to exactly one resource-state factory; thus, its vertex degree is $O(1)$.
    
    \item \textbf{Resource-state factories:} Each factory is dedicated to its corresponding code blocks, resulting in a constant vertex degree of $O(1)$.
\end{itemize}
Consequently, the maximum vertex degree of the physical qubit network is $D_P = O(\mathrm{polylog}(k)) = O(d^{o(1)})$, satisfying the requirement for an almost constant-degree architecture.

\section{Parallelized code surgery - Proof of Lemma~\ref{lem:PCS}}
\label{app:PCS}

\begin{lemma}{\cite{Zhang2025a}}
Consider a CSS code $(H_X, H_Z, J_X, J_Z)$. Let $Z(\alpha J_Z)$ be the set of logical operators to be measured, where $\alpha \in \mathbb{F}_2^{q \times k}$ is a full-rank matrix, and $q\leq k$. We can realize the measurement using a deformed code, which is a CSS code with check matrices 
\begin{eqnarray}
H^D_X &=& \left(\begin{array}{cc}
H_X & T \\
0 & H_M
\end{array}\right)
\label{eq:HDX}
\end{eqnarray}
and 
\begin{eqnarray}
H^D_Z &=& \left(\begin{array}{cc}
H_Z & 0 \\
S & H_G^\mathrm{T}
\end{array}\right),
\label{eq:HDZ}
\end{eqnarray}
where $S \in \mathbb{F}_2^{n_G \times n}$, $T \in \mathbb{F}_2^{r_X \times r_G}$, $H_G \in \mathbb{F}_2^{r_G \times n_G}$ and $H_G \in \mathbb{F}_2^{r_M \times r_G}$ satisfy the following conditions: 
\begin{itemize}
\item[i)] $H_X S^\mathrm{T} = T H_G$; 
\item[ii)] There exists a matrix $R \in \mathbb{F}_2^{n \times n_G}$ such that $\alpha J_Z R S = \alpha J_Z$ and $H_G (\alpha J_Z R)^\mathrm{T} = 0$; 
\item[iii)] There exists a matrix $\beta \in \mathbb{F}_2^{(k-q) \times r_G}$ such that $\alpha_\perp J_X S^\mathrm{T} = \beta H_G$, where $\alpha_\perp \in \mathbb{F}_2^{(k-q) \times k}$ is a full-rank matrix satisfying $\alpha_\perp \alpha^\mathrm{T} = 0$; and 
\item[iv)] $H_M H_G = 0$. 
\end{itemize}
The unmeasured $X$ and $Z$ logical operators are $X(\alpha_\perp J_X)$ and $Z({\alpha_\perp^\mathrm{r}}^\mathrm{T} J_Z)$, respectively. 
\label{lem:homological_measurement}
\end{lemma}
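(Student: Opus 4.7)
The plan is to verify the lemma through a sequence of explicit linear-algebra checks: that $(H^D_X, H^D_Z)$ defines a valid CSS code, that the operators $Z(\alpha J_Z)$ (padded with zeros on the ancilla) lie in its $Z$-stabilizer group, that $(\alpha_\perp J_X \mid \beta)$ and $({\alpha_\perp^{\mathrm{r}}}^{\mathrm{T}} J_Z \mid 0)$ represent the $k-q$ surviving logical qubits, and that the standard protocol---initialize the ancilla in $\ket{+}^{\otimes n_G}$, perform deformed-code parity-check measurements, then measure the ancilla in the $X$ basis---extracts the eigenvalues of $Z(\alpha J_Z)$ while restoring the target code's $X$-stabilizers.

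I would first establish CSS validity by computing $H^D_X(H^D_Z)^{\mathrm{T}}$ blockwise: the $(1,1)$ block is $H_X H_Z^{\mathrm{T}}=0$, the $(2,2)$ block is $H_M H_G = 0$ by condition~(iv), and the off-diagonal block is $H_X S^{\mathrm{T}} + T H_G = 0$ by condition~(i). Next, for the measured operators, the row combination $(0 \mid \alpha J_Z R) H^D_Z$ evaluates to $(\alpha J_Z R S \mid \alpha J_Z R H_G^{\mathrm{T}}) = (\alpha J_Z \mid 0)$ by condition~(ii), placing $Z(\alpha J_Z \mid 0)$ explicitly in the stabilizer group. For the surviving logicals, I would verify that $(\alpha_\perp J_X \mid \beta)$ commutes with every row of $H^D_Z$: with the first row group because $J_X H_Z^{\mathrm{T}} = 0$, and with the second because $\alpha_\perp J_X S^{\mathrm{T}} + \beta H_G = 0$ by condition~(iii). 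Dually, $({\alpha_\perp^{\mathrm{r}}}^{\mathrm{T}} J_Z \mid 0)$ commutes with $H^D_X$ by $H_X J_Z^{\mathrm{T}} = 0$ together with the triviality of its ancilla component. Commutation with the measured stabilizer $(\alpha J_Z \mid 0)$ holds because $\alpha_\perp \alpha^{\mathrm{T}} = 0$, and the identity $\alpha_\perp J_X J_Z^{\mathrm{T}} \alpha_\perp^{\mathrm{r}} = \alpha_\perp \alpha_\perp^{\mathrm{r}} = E_{k-q}$ provides the canonical anticommutation structure on the $k-q$ surviving logical qubits.

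For the protocol, initializing the ancilla in $\ket{+}^{\otimes n_G}$ fixes $+1$ eigenvalues for every ancilla-supported $X$-stabilizer of the deformed code; the parity-check measurements then yield the eigenvalue of $(0 \mid \alpha J_Z R)H^D_Z = (\alpha J_Z \mid 0)$, which is precisely the target logical measurement. The final $X$-basis measurement on the ancilla absorbs the newly introduced ancilla $X$-checks and, combined with the known initialization, reconstitutes the original target-code $X$-stabilizer eigenvalues. Throughout, condition~(i) guarantees that no $X$-check of the deformed code anticommutes with the transported $Z$-operators, so no sign corrections beyond a standard Pauli frame update are required.

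The main obstacle is bookkeeping rather than any conceptual difficulty: one must confirm that no combination of the $r_M$ new $X$-checks projected through $T$ can reproduce the target action $\alpha_\perp J_X$ of the surviving $X$-logicals, so that $(\alpha_\perp J_X \mid \beta)$ is genuinely independent of $X$-stabilizers of the deformed code. Establishing this independence while simultaneously accounting for redundancies among the rows of $H^D_X$ and $H^D_Z$ (which govern the exact logical count of the deformed code) is where all four conditions~(i)--(iv) must be used simultaneously; I would package these into a short dimension-count lemma rather than try to read off the count naively from row counts.
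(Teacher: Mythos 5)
The paper does not actually prove this lemma: it is imported verbatim from Ref.~[Zhang2025a] (the citation attached to the statement), and the present paper only verifies that its PCS construction satisfies hypotheses i)--iv). So there is no in-paper proof to compare against; your proposal has to be judged on its own. The algebraic checks you carry out are all correct: $H^D_X (H^D_Z)^{\mathrm{T}}=0$ follows blockwise from $H_XH_Z^{\mathrm{T}}=0$, condition~i) and condition~iv); the identity $\left(\begin{smallmatrix}0 & \alpha J_Z R\end{smallmatrix}\right)H^D_Z = \left(\begin{smallmatrix}\alpha J_Z & 0\end{smallmatrix}\right)$ from condition~ii) is exactly the relation the paper itself quotes in the main text to justify eigenvalue extraction; and the commutation checks for $(\alpha_\perp J_X \mid \beta)$ and $({\alpha_\perp^{\mathrm{r}}}^{\mathrm{T}}J_Z\mid 0)$ together with the pairing $\alpha_\perp J_XJ_Z^{\mathrm{T}}\alpha_\perp^{\mathrm{r}}=E_{k-q}$ correctly establish that these are $k-q$ independent logical pairs of the deformed code. (Minor slip: the ancilla block has $r_G$ qubits, the column count of $T$ and $H_G^{\mathrm{T}}$, so the initialization is $\ket{+}^{\otimes r_G}$, not $\ket{+}^{\otimes n_G}$; also note the statement itself has a typo, the fourth matrix should read $H_M\in\mathbb{F}_2^{r_M\times r_G}$.)

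The one place where I would push back is your characterization of the remaining step as "bookkeeping rather than any conceptual difficulty." The dimension count you defer is the load-bearing part of the lemma. Exhibiting $k-q$ canonically paired logical operators only shows the deformed code has \emph{at least} $k-q$ logical qubits; to conclude that the protocol measures $Z(\alpha J_Z)$ and \emph{nothing more} (and introduces no gauge degrees of freedom that would scramble the unmeasured logical information), you must show the logical dimension is exactly $k-q$, i.e.\ $n+r_G-\mathrm{rank}(H^D_X)-\mathrm{rank}(H^D_Z)=k-q$. That rank computation is where conditions i)--iv) interact nontrivially with the ranks of $S$, $H_G$ and $H_M$, and it is precisely the content delegated to Ref.~[Zhang2025a]. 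Likewise, your final step --- that the ancilla $X$-basis readout plus a Pauli frame update returns the system to the original code space with the unmeasured logicals intact --- is asserted rather than derived; it needs the observation that $X(\beta)$ acts only on ancilla qubits being measured in the $X$ basis, so $(\alpha_\perp J_X\mid\beta)$ collapses to $\alpha_\perp J_X$ up to a known sign, and that the random outcomes of the $(S\mid H_G^{\mathrm{T}})$ checks are undone by a $Z$-type correction on the data supported by condition~ii). Your outline is the right one and nothing in it would fail, but as written it proves containment of the claimed structure in the deformed code, not the full equivalence the lemma asserts.
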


\begin{lemma}{\cite{Zhang2025a}}
Given an arbitrary set of $Z$ logical operators $Z(\alpha J_Z)$ to be measured, we can generate matrices $S$, $T$, and $H_G$ such that code surgery conditions i), ii), and iii) are satisfied, following the devised sticking approach. Furthermore, row and column weights of the matrices $H_G$, $S$, and $T$ are bounded above by a constant, and $\Vert S \Vert = 1$. When the generator matrices $J_X$ and $J_Z$ are expressed in standard form, and assuming that the logical operators selected for measurement act on mutually disjoint sets of logical qubits (i.e.~the logical thickness is one), the number of rows and columns in $H_G$ scales as $O(k)$. 
\label{lem:devised_sticking}
\end{lemma}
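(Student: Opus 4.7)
The plan is to give an explicit construction of $S$, $T$, and $H_G$ tailored to $\alpha J_Z$, then verify code-surgery conditions i)--iii) of Lemma~\ref{lem:homological_measurement} and read off the claimed weight and size bounds. The construction proceeds in four blocks, with one technical subtlety concentrated in the glue-code step.

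First I would fix the ``scaffolding'' that comes from standard form. After Gaussian elimination of the check matrices, there is an index set of distinguished columns $\{p_1,\dots,p_k\}$ such that the $j$-th row of $J_Z$ has a $1$ in column $p_j$ and $0$ in columns $p_{j'}$ for $j'\neq j$, and symmetrically for $J_X$. This gives an injective logical-to-physical map $j\mapsto p_j$. Next, for each measured operator indexed by $i=1,\dots,q$ I set $\mathcal{L}_i=\{j:\alpha_{i,j}=1\}$ and attach one ancilla qubit per pair $(i,j)$ with $j\in\mathcal{L}_i$, so $n_G=\sum_i|\mathcal{L}_i|$. The logical-thickness-one hypothesis makes the $\mathcal{L}_i$ pairwise disjoint, giving $n_G\le k$. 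The coupling $S$ is defined by making the row indexed by $(i,j)$ equal to $e_{p_j}^{\mathrm T}$. By construction, $S$ has exactly one $1$ per row; by disjointness, at most one $1$ per column; hence $\|S\|=1$.

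For the glue-code check matrix $H_G$, I would place, within each group $G_i=\{(i,j):j\in\mathcal{L}_i\}$ of ancillas, a set of $(|\mathcal{L}_i|-1)$ weight-two parity checks forming a spanning tree of a repetition code on $G_i$, and let $H_G$ be the direct sum of these blocks. Then every row of $H_G$ has weight $2$, every column has weight at most $2$, and the row count satisfies $r_G=\sum_i(|\mathcal{L}_i|-1)=O(k)$. I would choose $R$ so that the $i$-th row of $\alpha J_Z R$ is the indicator vector of $G_i$; since each indicator is the all-ones codeword of its repetition block, $H_G(\alpha J_Z R)^{\mathrm T}=0$ is immediate. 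The equality $\alpha J_Z R S=\alpha J_Z$ then follows from the standard-form identity $(J_Z)_{j,p_{j'}}=\delta_{jj'}$, provided the remaining entries of $\alpha J_Z$ outside the distinguished columns are already absorbed into the standard-form representative of each $\bar Z_j$; where this fails I would add auxiliary ancillas coupled through additional column-weight-one entries of $S$ and extend $H_G$ to keep those auxiliaries synchronized with $G_i$, still with constant weight per row.

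Finally I would solve for $T$ and check conditions i) and iii) by commutation. A direct calculation shows that the $x$-th row of $H_X S^{\mathrm T}$ has, on group $G_i$, total parity $\langle x,(\alpha J_Z)_i\rangle_{\mathbb{F}_2}=0$ because every $X$-stabilizer commutes with every $Z$-logical. Hence each row of $H_X S^{\mathrm T}$ lies in the row span of the repetition blocks that make up $H_G$, so a $T$ exists; since the row weights of $H_X$ are bounded (qLDPC) and $H_G$ is a direct sum of repetition blocks of weight two, each row of $T$ uses only a constant number of $H_G$-rows and therefore $T$ is sparse. Condition iii) follows by the same commutation argument applied to the unmeasured $X$-logicals $\alpha_\perp J_X$, yielding the required $\beta$.

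The main obstacle is the glue-code step: one must simultaneously (a) realize $\alpha J_Z$ literally as $\alpha J_Z R S$, not merely modulo $H_Z$, (b) keep $n_G$ and $r_G$ at $O(k)$, and (c) retain bounded row/column weights in $H_G$, $S$, $T$. The first requirement is exactly what forces the use of a genuine standard form, and it is also what constrains how much of $J_Z$'s non-distinguished support can be absorbed into extra ancillas before $n_G$ leaves the $O(k)$ regime. Balancing these three requirements---while using only weight-two repetition checks inside each $G_i$ so that the qLDPC property of the deformed code survives---is the delicate part of the argument, and it is ultimately what justifies the explicit choices sketched above.
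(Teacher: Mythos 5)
First, a point of comparison: the paper does not prove this statement at all --- it is imported verbatim from Ref.~\cite{Zhang2025a} and used as a black box in Appendix~\ref{app:PCS} --- so your proposal cannot be checked against an in-paper argument and has to stand on its own. Judged that way, it has genuine gaps.

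The central one is condition ii). Because $\Vert S \Vert = 1$ forces every row of $S$ to be a unit vector, $\alpha J_Z R S$ is supported only on the columns hit by $S$; with your choice $S_{(i,j),\bullet} = e_{p_j}^\mathrm{T}$ that is the set of distinguished columns, whereas $\alpha J_Z$ in standard form generically has support on the non-distinguished (check-block) columns as well. So $\alpha J_Z R S = \alpha J_Z$ fails as an exact identity, and Lemma~\ref{lem:homological_measurement} requires it exactly, not merely modulo $\mathrm{rowsp}\,H_Z$. You flag this and propose ``auxiliary ancillas'' for the residual support, but that patch is where the whole content of the lemma lives: it changes $n_G$ from $\sum_i \vert \mathcal{L}_i \vert \leq k$ to $\Theta(\vert \mathrm{supp}(\alpha J_Z)\vert)$, and you give no argument that this stays $O(k)$ or that the extended $H_G$ keeps the auxiliaries ``synchronized'' while preserving conditions i)--iii). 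Second, your verification of condition i) is circular in the same way: the parity of row $x$ of $H_X S^\mathrm{T}$ over the group $G_i$ equals $\sum_{j \in \mathcal{L}_i} (H_X)_{x,p_j}$, which coincides with the inner product of $(H_X)_{x,\bullet}$ and $(\alpha J_Z)_{i,\bullet}$ only if $(\alpha J_Z)_{i,\bullet} = \sum_{j\in\mathcal{L}_i} e_{p_j}$ --- i.e., only under the very assumption that broke condition ii). Third, even granting even parity on each $G_i$, sparsity of $T$ does not follow from a spanning-tree repetition code on $G_i$: expressing the weight-$O(1)$ vector $e_a + e_b$ as a sum of weight-two tree checks requires every edge on the tree path from $a$ to $b$, which can be $\Theta(\vert G_i\vert)$ of them, so rows of $T$ need not have constant weight without a more careful (ordering- or expander-based) choice of glue code. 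These are exactly the difficulties the devised-sticking construction of Ref.~\cite{Zhang2025a} is engineered to resolve, and they are not resolved here.
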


According to Lemma~\ref{lem:devised_sticking}, the check matrices $H^D_X$ and $H^D_Z$ given by Eqs.~(\ref{eq:HDX2})~and~(\ref{eq:HDZ2}) has the following properties: 
\begin{itemize}
\item[i)] Row and column weights are bounded above by a constant, i.e.~the deformed code is a qLDPC code; 
\item[ii)] The number of rows and columns scales as $O(k_R k)$. According to Eqs.~\eqref{eq:HDX2} and~\eqref{eq:HDZ2}, the row numbers of $H_X^D$ and $H_X^D$ are $k_R r_X+r_R r_G$ and $k_R r_Z+n_R n_G$, respectively, and their column numbers are both $k_R n+r_R n_G+n_R r_G$. Since $n_G,r_G\leq n$, $r_R\leq n_R$, and $r_X,r_Z\leq n$ (assuming the check matrices are full-rank), all these numbers are upper bounded by $(k_R+2n_R)n$. Hence, using $n=O(k)$ and $n_R=O(k_R)$, we conclude that the numbers of rows and columns are $O(k_R k)$.
\end{itemize}

\subsection{Logical measurements associated with the deformed codes}

Now, we verify that the deformed code given by Eqs.~(\ref{eq:HDX2})~and~(\ref{eq:HDZ2}) is valid for realizing the desired measurement. Let's define matrices 
\begin{eqnarray}
\tilde{H}_X &=& E_{k_R} \otimes H_X, \\
\tilde{H}_Z &=& E_{k_R} \otimes H_Z, \\
\tilde{J}_X &=& E_{k_R} \otimes J_X, \\
\tilde{J}_Z &=& E_{k_R} \otimes J_Z, \\
\tilde{\alpha} &=& E_{k_R} \otimes \alpha, \\
\tilde{\alpha}_\perp &=& E_{k_R} \otimes \alpha_\perp, \\
\end{eqnarray}
and 
\begin{eqnarray}
\tilde{S} &=& G_R^\mathrm{r} \otimes S, \\
\tilde{T} &=& \left(\begin{array}{cc}
0 & {G_R^\mathrm{r}}^\mathrm{T} \otimes T
\end{array}\right), \\
\tilde{H}_G &=& \left(\begin{array}{c}
H_R \otimes E_{n_G} \\
E_{n_R} \otimes H_G
\end{array}\right), \\
\tilde{H}_M &=& \left(\begin{array}{cc}
E_{r_R} \otimes H_G & H_R \otimes E_{r_X}
\end{array}\right), \\
\tilde{R} &=& G_R \otimes R, \\
\tilde{\beta} &=& \left(\begin{array}{cc}
0 & {G_R^\mathrm{r}}^\mathrm{T} \otimes \beta
\end{array}\right).
\end{eqnarray}
Then, we can find that check matrices in Eqs.~(\ref{eq:HDX2})~and~(\ref{eq:HDZ2}) are consistent with Eqs.~(\ref{eq:HDX})~and~(\ref{eq:HDZ}), and the matrices satisfy the code surgery conditions i), ii), iii), and iv), i.e. 
\begin{eqnarray}
\tilde{H}_X \tilde{S}^\mathrm{T} &=& \tilde{T} \tilde{H}_G, \label{eq:HXS} \\
\tilde{\alpha} \tilde{J}_Z \tilde{R} \tilde{S} &=& \tilde{\alpha} \tilde{J}_Z, \\
\tilde{H}_G (\tilde{\alpha} \tilde{J}_Z \tilde{R})^\mathrm{T} &=& 0, \\
\tilde{\alpha}_\perp \tilde{J}_X \tilde{S}^\mathrm{T} &=& \tilde{\beta} \tilde{H}_G, \label{eq:JXS} \\
\tilde{H}_M \tilde{H}_G &=& 0.
\end{eqnarray}
Therefore, according to Lemma~\ref{lem:homological_measurement}, with this deformed code, we can simultaneously measure logical operators $Z(\tilde{\alpha} \tilde{J}_Z)$. 

\subsection{Distances of deformed codes}

Without loss of generality, we consider the R-code generator matrix in the form 
\begin{eqnarray}
G_R &=& \left(\begin{array}{cc}
E_{k_R} & P_R
\end{array}\right),
\end{eqnarray}
and then 
\begin{eqnarray}
{G_R^\mathrm{r}}^\mathrm{T} &=& \left(\begin{array}{cc}
E_{k_R} & 0
\end{array}\right).
\end{eqnarray}

\begin{lemma}
Suppose $H_R$ is full-rank. For every vector $v \in \mathbb{F}_2^{r_R n_G + n_R r_X}$ satisfying $\tilde{H}_M v^\mathrm{T} = 0$ and $\vert v \vert < d_R$, there exists a vector $u_\star \in \mathbb{F}_2^{k_R n_G}$ such that $v^\mathrm{T} = \tilde{H}_G u_\star^\mathrm{T}$ and 
\begin{eqnarray}
\left\vert u_\star [(G_R^\mathrm{r})_{\bullet,l} \otimes S] \right\vert \leq \Vert S \Vert \times \vert v \vert
\label{eq:u_star}
\end{eqnarray}
for all $l = 1,2,\ldots,k_R$. 
\end{lemma}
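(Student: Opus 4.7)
The plan is a three-step argument: recast the lemma in matrix form, establish existence of a lift via the $Z$-distance of the hypergraph-product ancilla, and then construct an explicit lift that satisfies the per-block weight bound.

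First, I would identify $v = (v_1, v_2)$ with matrices $V_1 \in \mathbb{F}_2^{r_R \times n_G}$ and $V_2 \in \mathbb{F}_2^{n_R \times r_X}$ (using $r_G = r_X$), and $u_\star$ with $U_\star \in \mathbb{F}_2^{n_R \times n_G}$. Under this identification, $\tilde{H}_M v^{\mathrm{T}} = 0$ becomes $V_1 H_G^{\mathrm{T}} + H_R V_2 = 0$, and $v^{\mathrm{T}} = \tilde{H}_G u_\star^{\mathrm{T}}$ splits into the pair $H_R U_\star = V_1$ together with $U_\star H_G^{\mathrm{T}} = V_2$. A direct computation yields $u_\star[(G_R^{\mathrm{r}})_{\bullet,l}\otimes S] = ((G_R^{\mathrm{r}})_{\bullet,l})^{\mathrm{T}} U_\star S$, so by $|xS| \leq \|S\||x|$ the target reduces to exhibiting a lift with $|((G_R^{\mathrm{r}})_{\bullet,l})^{\mathrm{T}} U_\star| \leq |v|$ for every $l \in \{1,\ldots,k_R\}$.

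For existence, I would observe that the ancilla is the hypergraph product of $H_R$ and $H_G^{\mathrm{T}}$. Since $H_R$ is full-rank, the transpose code $\ker H_R^{\mathrm{T}}$ is trivial, and a Künneth-type analysis of the associated chain complex forces every nontrivial $Z$-logical of the ancilla to contain a nonzero R-codeword as a tensor factor; such representatives have weight at least $d_R$. Consequently, any $v$ with $\tilde{H}_M v^{\mathrm{T}} = 0$ and $|v| < d_R$ must lie in $\operatorname{im}\tilde{H}_G$, producing a lift $U_\star$.

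For the weight bound, I would put $G_R = (E_{k_R}\,|\,P_R)$ into standard form so that $G_R^{\mathrm{r}} = \binom{E_{k_R}}{0}$ and $((G_R^{\mathrm{r}})_{\bullet,l})^{\mathrm{T}} U_\star$ is simply the $l$-th row of $U_\star$. An analysis of $\ker \tilde{H}_G$ shows that it consists of matrices $G_R^{\mathrm{T}} B$ with rows of $B$ in $\ker H_G$; for $l \in \{1,\ldots,k_R\}$ this gives independent freedom to shift row $l$ of $U_\star$ by any element of $\ker H_G$, while the syndrome $(U_\star)_{l,\bullet} H_G^{\mathrm{T}} = (V_2)_{l,\bullet}$ is intrinsic to the lift. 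I would then construct a specific lift by first using the right inverse of $H_R$ to absorb the $V_1$ contribution into the ``parity'' rows, and then correcting by an element of $\ker \tilde{H}_G$ so as to match each row-syndrome while keeping the added row weights small.

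The main obstacle is exactly this per-row weight control. Because $H_G$ is only an LDPC code and not locally testable, row-by-row minimum-weight decoding of the syndromes $(V_2)_{l,\bullet}$ cannot on its own guarantee preimages of weight at most $|v|$. The proof must therefore exploit the joint relation $V_1 H_G^{\mathrm{T}} = H_R V_2$ and the specific standard-form structure of $G_R^{\mathrm{r}}$ so that the contributions to row $l$---one from decoding the column syndromes $V_1$, the other from matching the row syndromes $V_2$---can be arranged not to overlap. The total row-weight is then bounded by $|V_1| + |V_2| = |v|$, and making this weight-tracking explicit---effectively exhibiting a canonical stabilizer representative whose per-block weight is locally bounded---is the technical heart of the argument.
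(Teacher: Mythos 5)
Your setup---recasting $v$ and $u_\star$ as matrices and splitting $v^\mathrm{T}=\tilde{H}_G u_\star^\mathrm{T}$ into the pair of equations $A=UH_R^\mathrm{T}$ and $B=H_GU$---matches the paper, and your K\"unneth argument for bare existence of a lift is sound (with $H_R$ surjective the only nontrivial homology classes come from $\ker H_R$, so any cycle of weight below $d_R$ is a boundary). But the actual content of the lemma is the per-block weight bound, and on that point your proposal stops exactly where the proof has to start: you yourself note that row-by-row decoding of the syndromes $(V_2)_{l,\bullet}$ gives no weight control (a coset of $\ker H_G$ can have minimum weight as large as the covering radius of the glue code), and you leave ``arranging the contributions not to overlap'' as an unresolved obstacle. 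Since that weight-tracking is the technical heart of the lemma, the proposal as written has a genuine gap.

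The paper closes it with one construction that delivers existence and the weight bound simultaneously, with no K\"unneth step and no posterior correction by kernel elements: set $U=A\,(H_R^\mathrm{r})^\mathrm{T}$. This satisfies $A=UH_R^\mathrm{T}$ automatically, and each relevant column $U_{\bullet,l}=A\,({H_R^\mathrm{r}}^\mathrm{T})_{\bullet,l}$ is a mod-2 sum of columns of $A$, whence $\vert S^\mathrm{T}U_{\bullet,l}\vert\le\Vert S\Vert\,\vert a\vert\le\Vert S\Vert\,\vert v\vert$ (only the first block of $v$ ever enters the bound). The remaining condition $B=H_GU$ reduces, via $H_GA=BH_R^\mathrm{T}$, to $B=BH_R^\mathrm{T}(H_R^\mathrm{r})^\mathrm{T}$, and this is precisely where $\vert v\vert<d_R$ is used: $B$ has fewer than $d_R$ nonzero columns, the corresponding columns of $H_R$ must be linearly independent (a dependence would yield a nonzero $y$ with $H_Ry^\mathrm{T}=0$ and $\vert y\vert<d_R$, contradicting the R-code distance), so they extend to a column basis of $H_R$ and one chooses the right inverse $H_R^\mathrm{r}$ supported on that basis, making $H_R^\mathrm{T}(H_R^\mathrm{r})^\mathrm{T}$ act as the identity on the support of $B$. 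The idea missing from your plan is exactly this: do not fix a right inverse in advance and then try to repair the lift afterwards; choose the right inverse adapted to the support of $B$, which the distance hypothesis makes possible.
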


\begin{proof}
Let $v = \left(\begin{array}{cc}a & b\end{array}\right)$. Then, the condition $\tilde{H}_M v^\mathrm{T} = 0$ becomes 
\begin{eqnarray}
(E_{r_R} \otimes H_G) a^\mathrm{T} &=& (H_R \otimes E_{r_X}) b^\mathrm{T}.
\end{eqnarray}
Let's introduce matrices $A$, $B$, and $U$ such that 
\begin{eqnarray}
\mathrm{vec}(A) &=& a^\mathrm{T}, \\
\mathrm{vec}(B) &=& b^\mathrm{T}, \\
\mathrm{vec}(U) &=& u_\star^\mathrm{T}.
\end{eqnarray}
We can rewrite the condition as 
\begin{eqnarray}
H_G A &=& B H_R^\mathrm{T}.
\label{eq:AB}
\end{eqnarray}
Similarly, we can rewrite $v^\mathrm{T} = \tilde{H}_G u_\star^\mathrm{T}$ as 
\begin{eqnarray}
A &=& U H_R^\mathrm{T}, \label{eq:A} \\
B &=& H_G U. \label{eq:B}
\end{eqnarray}

Since $H_R$ is full-rank, there exists $H_R^\mathrm{r} \in \mathbb{F}_2^{n_R \times r_R}$ such that $H_R H_R^\mathrm{r} = E_{r_R}$: We remark that $n_R > r_R$ when $H_R$ is full-rank and $k_R$ is not zero. Then, we can take 
\begin{eqnarray}
U &=& A {H_R^\mathrm{r}}^\mathrm{T}
\label{eq:U}
\end{eqnarray}
to satisfy Eq.~(\ref{eq:A}). Such a matrix $U$ also satisfies the inequality (\ref{eq:u_star}). Note that 
\begin{eqnarray}
[(G_R^\mathrm{r})_{\bullet,l} \otimes S]^\mathrm{T} u_\star^\mathrm{T} &=& \mathrm{vec}\left(S^\mathrm{T} U (G_R^\mathrm{r})_{\bullet,l}\right) = (S^\mathrm{T} U)_{\bullet,l} = S^\mathrm{T} U_{\bullet,l} = S^\mathrm{T} A ({H_R^\mathrm{r}}^\mathrm{T})_{\bullet,l}.
\end{eqnarray}
Therefore, 
\begin{eqnarray}
\left\vert [(G_R^\mathrm{r})_{\bullet,l} \otimes S]^\mathrm{T} u_\star^\mathrm{T} \right\vert \leq \Vert S \Vert \times \vert A ({H_R^\mathrm{r}}^\mathrm{T})_{\bullet,l} \vert \leq \Vert S \Vert \times \vert a \vert \leq \Vert S \Vert \times \vert v \vert.
\end{eqnarray}

Now, we prove that the matrix $U$ in Eq.~(\ref{eq:U}) also satisfies Eq.~(\ref{eq:B}) when $\vert v \vert < d_R$. Using the expression of $U$, we have 
\begin{eqnarray}
H_G U &=& H_G A {H_R^\mathrm{r}}^\mathrm{T} = B H_R^\mathrm{T} {H_R^\mathrm{r}}^\mathrm{T}.
\end{eqnarray}
Here, we have used Eq.~(\ref{eq:AB}). If 
\begin{eqnarray}
B = B H_R^\mathrm{T} {H_R^\mathrm{r}}^\mathrm{T},
\label{eq:B2}
\end{eqnarray}
Eq.~(\ref{eq:U}) holds. Note that the right inverse matrix of $H_R$ is not unique. Next, we will prove that there always exists a right inverse matrix of $H_R$ such that Eq.~(\ref{eq:B2}) is true. 

Because $\vert v \vert < d_R$, the matrix $B$ has at most $d_R-1$ non-zero columns. Suppose the indexes of the $d_R-1$ columns are $i_1,i_2,\ldots,i_{d_R-1}$ and all other columns are zero. We introduce the matrices 
\begin{eqnarray}
B^{(0)} &=& \left(\begin{array}{cccc}
B_{\bullet,i_1} & B_{\bullet,i_2} & \cdots & B_{\bullet,i_{d_R-1}}
\end{array}\right), \\
H_R^{(0)} &=& \left(\begin{array}{cccc}
(H_R)_{\bullet,i_1} & (H_R)_{\bullet,i_2} & \cdots & (H_R)_{\bullet,i_{d_R-1}}
\end{array}\right).
\end{eqnarray}
Later, we will prove that the $d_R-1$ columns in $H^{(0)}_R$ are linearly independent, and then $r_R \geq d_R-1$. Because $H_R$ is full-rank and has $r_R$ linearly-independent columns, we can find additional $r_R-(d_R-1)$ columns to constitute a complete set together with columns in $H^{(0)}_R$. Let $i_{d_R},i_{d_R+1},\ldots,i_{r_R}$ be indexes of the additional $r_R-(d_R-1)$ columns, we define 
\begin{eqnarray}
H_R^{(1)} &=& \left(\begin{array}{cccc}
(H_R)_{\bullet,i_{d_R}} & (H_R)_{\bullet,i_{d_R+1}} & \cdots & (H_R)_{\bullet,i_{r_R}}
\end{array}\right).
\end{eqnarray}
Then, there exists a permutation matrix $\pi \in \mathbb{F}_2^{n_R \times n_R}$ such that matrices $B$ and $H_R$ are in the forms 
\begin{eqnarray}
B &=& \left(\begin{array}{ccc}
B^{(0)} & 0 & 0
\end{array}\right) \pi, \\
H_R &=& \left(\begin{array}{ccc}
H^{(0)}_R & H^{(1)}_R & H^{(2)}_R
\end{array}\right) \pi.
\end{eqnarray}
Because $H^{(0,1)}_R = \left(\begin{array}{cc}H^{(0)}_R & H^{(1)}_R\end{array}\right)$ is invertible, a right inverse matrix of $H_R$ is 
\begin{eqnarray}
H_R^\mathrm{r} &=& \pi^{-1} \left(\begin{array}{c}
{H^{(0,1)}_R}^{-1} \\
0
\end{array}\right) = \pi^{-1} \left(\begin{array}{c}
({H^{(0,1)}_R}^{-1})_{1:d_R-1,\bullet} \\
({H^{(0,1)}_R}^{-1})_{d_R:r_R,\bullet} \\
0
\end{array}\right).
\end{eqnarray}
Note that $\pi^{-1} = \pi^\mathrm{T}$. Taking this right inverse matrix, we have 
\begin{eqnarray}
B H_R^\mathrm{T} {H_R^\mathrm{r}}^\mathrm{T} = \left(\begin{array}{ccc}
B^{(0)} & 0 & 0
\end{array}\right) \left(\begin{array}{ccc}
E_{d_R-1} & 0 & 0 \\
0 & E_{r_R-(d_R-1)} & 0 \\
{H^{(2)}_R}^\mathrm{T} [({H^{(0,1)}_R}^{-1})_{1:d_R-1,\bullet}]^\mathrm{T} & {H^{(2)}_R}^\mathrm{T} [({H^{(0,1)}_R}^{-1})_{d_R:r_R,\bullet}]^\mathrm{T} & 0
\end{array}\right) \pi = B.
\end{eqnarray}
Eq.~(\ref{eq:B2}) has been proved. 

Lastly, we prove that columns in $H^{(0)}_R$ are linearly independent. Suppose columns in $H^{(0)}_R$ are not linearly independent, there exists a vector $x \in \mathbb{F}_2^{d_R-1}-\{0\}$ such that $H^{(0)}_R x^\mathrm{T} = 0$. Then, we can construct a vector $y \in \mathbb{F}_2^{n_R}$, 
\begin{eqnarray}
y &=& \left(\begin{array}{ccc}
x & 0 & 0
\end{array}\right) \pi.
\end{eqnarray}
We have $H_R y^\mathrm{T} = 0$, $y \neq 0$, and $\vert y \vert < d_R$. This is in contradiction with the code distance of the R code. The lemma has been proved. 
\end{proof}

\begin{lemma}
For the deformed code with check matrices in Eqs.~(\ref{eq:HDX2})~and~(\ref{eq:HDZ2}), the generator matrices of $X$ and $Z$ logical operators are 
\begin{eqnarray}
J^D_X = \left(\begin{array}{cc}
\tilde{\alpha}_\perp \tilde{J}_X & \tilde{\beta}
\end{array}\right)
\end{eqnarray}
and 
\begin{eqnarray}
J^D_Z = \left(\begin{array}{cc}
\tilde{\alpha}_\perp^\mathrm{r}{}^\mathrm{T} \tilde{J}_Z & 0
\end{array}\right),
\end{eqnarray}
respectively. Let $d$ be the distance of the code $(H_X, H_Z, J_X, J_Z)$. If $H_R$ is full-rank, the distance of the deformed code satisfies 
\begin{eqnarray}
d_D &\geq& \min\{d / \Vert S \Vert, d_R\}.
\end{eqnarray}
\label{lem:lattice_surgery}
\end{lemma}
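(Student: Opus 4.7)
My plan is to dispose of the logical-operator claim by recognizing the check matrices in Eqs.~(\ref{eq:HDX2})--(\ref{eq:HDZ2}) as a special case of the general construction Eqs.~(\ref{eq:HDX})--(\ref{eq:HDZ}) from Lemma~\ref{lem:homological_measurement}, with the tilde substitutions already introduced in Appendix~\ref{app:PCS}. The four code-surgery conditions of that lemma transfer from their untilded versions via Kronecker-product identities, as already verified in the discussion leading to Eqs.~(\ref{eq:HXS})--(\ref{eq:JXS}). Noting that $\tilde{\alpha}_\perp = E_{k_R}\otimes \alpha_\perp$ is full-rank and satisfies $\tilde{\alpha}_\perp \tilde{\alpha}^\mathrm{T} = 0$, Lemma~\ref{lem:homological_measurement} then immediately yields the claimed $J^D_X$ and $J^D_Z$, so the logical-operator part reduces to a direct bookkeeping.

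For the $Z$-distance bound I would use a reduction-to-target-block argument. Take any non-trivial $Z$ logical representative $z^D = (z_1, z_2, z_3)$. If $|z^D| \geq d_R$ the bound is immediate, so assume $|z^D| < d_R$. The second block row of $H^D_X (z^D)^\mathrm{T} = 0$ says $v := (z_2, z_3) \in \ker \tilde{H}_M$ with $|v| \leq |z^D| < d_R$, so the preceding lemma leading to Eq.~(\ref{eq:u_star}) produces a $u_\star$ with $\tilde{H}_G u_\star^\mathrm{T} = v^\mathrm{T}$ and per-block bound $|u_\star((G_R^\mathrm{r})_{\bullet,l}\otimes S)| \leq \Vert S\Vert\,|v|$. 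Subtracting the $Z$ stabilizer obtained by combining the second block row of $H^D_Z$ with coefficients $u_\star$ converts $z^D$ into an equivalent representative $(z_1', 0, 0)$ satisfying $|z_1'^{(l)}| \leq |z_1^{(l)}| + \Vert S\Vert\,|v|$ on every target block. Non-triviality forces some $z_1'^{(l^*)}$ to act as a non-stabilizer logical of the target code, so $|z_1'^{(l^*)}| \geq d$, and hence $|z^D| \geq |z_1^{(l^*)}| + |v| \geq d - (\Vert S\Vert - 1)|v|$. A short case split on whether $|v| < d/\Vert S\Vert$ or $|v| \geq d/\Vert S\Vert$ gives $|z^D| \geq d/\Vert S\Vert$ in both subcases.

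The $X$-distance bound I expect to be the main technical obstacle, because $H^D_X$ and $H^D_Z$ are structurally asymmetric: $S$ enters through $H^D_Z$ while $T$ enters through $H^D_X$. I would mirror the $Z$ argument starting from the constraints $\tilde{H}_Z x_1^\mathrm{T} = 0$ and $\tilde{H}_G^\mathrm{T}(x_2, x_3)^\mathrm{T} = \tilde{S} x_1^\mathrm{T}$ derived from $H^D_Z(x^D)^\mathrm{T} = 0$, aiming to reduce the ancilla support $(x_2, x_3)$ modulo the $X$-stabilizer group to a canonical form whose per-block contribution is again controlled by $\Vert S\Vert\,|v|$. This requires a dual analogue of the lemma underlying Eq.~(\ref{eq:u_star}), effectively inverting $\tilde{H}_G^\mathrm{T}$ on the image of $\tilde{S}$ while using the full-rank assumption on $H_R$ to localize the preimage. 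Once available, the per-block amplification is again $\Vert S\Vert$, and the target-code distance yields $|x^D| \geq d/\Vert S\Vert$. Combining with the $Z$ case gives the stated bound $d_D \geq \min\{d/\Vert S\Vert, d_R\}$.
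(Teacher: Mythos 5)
Your handling of the generator matrices and of the $Z$-type logical errors (the case $H^D_X e^\mathrm{T}=0$, $J^D_X e^\mathrm{T}\neq 0$) is essentially the paper's own proof: assuming $\vert v\vert < d_R$, pull the ancilla support $v$ back through $\tilde{H}_G$ using the full-rank-$H_R$ lemma behind Eq.~(\ref{eq:u_star}), add the stabilizer row combination $u_\star\left(\begin{array}{cc}\tilde{S} & \tilde{H}_G^\mathrm{T}\end{array}\right)$ to clear the ancilla, invoke conditions (\ref{eq:HXS}) and (\ref{eq:JXS}) to see that the reduced representative is an undetectable, nontrivial operator on the target blocks, and extract $d/\Vert S\Vert$ from a single block column. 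That part, including your slightly different final arithmetic, is sound and matches the paper.

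The gap is in what you call the $X$-distance bound, which you flag as the main technical obstacle and for which you defer to an unconstructed ``dual analogue'' of the $u_\star$ lemma that would invert $\tilde{H}_G^\mathrm{T}$ on the image of $\tilde{S}$. No such lemma is needed, and as written your plan leaves this case unproved. The point you are missing is that for an $X$ error $e=(x_1,x_2,x_3)$ the relevant generator $J^D_Z=\left(\begin{array}{cc}\tilde{\alpha}_\perp^\mathrm{r}{}^\mathrm{T}\tilde{J}_Z & 0\end{array}\right)$ has no support on the ancilla columns, and the only block row of $H^D_Z$ with nonzero first column is $\left(\begin{array}{ccc}E_{k_R}\otimes H_Z & 0 & 0\end{array}\right)$. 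Undetectability therefore already gives $\tilde{H}_Z x_1^\mathrm{T}=0$ and nontriviality gives $\tilde{\alpha}_\perp^\mathrm{r}{}^\mathrm{T}\tilde{J}_Z x_1^\mathrm{T}\neq 0$, with no reference to $(x_2,x_3)$ or to the second block row at all; reshaping $x_1$ by target blocks, some column is an undetectable nontrivial $X$ logical of $(H_X,H_Z,J_X,J_Z)$, whence $\vert e\vert\geq\vert x_1\vert\geq d$ --- strictly stronger than the $d/\Vert S\Vert$ you anticipate, and with no role for $d_R$ in this case. The structural asymmetry you correctly observe ($S$ entering $H^D_Z$, $T$ entering $H^D_X$) is precisely why only the $Z$-error case needs the $u_\star$ machinery; mirroring it for $X$ errors sends you after a preimage statement the argument never requires.
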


\begin{proof}
The generator matrices satisfy conditions $H^D_X {J^D_Z}^\mathrm{T} = H^D_Z {J^D_X}^\mathrm{T} = 0$ and $J^D_X  {J^D_Z}^\mathrm{T} = E_{k_R(k-q)}$, i.e.~they are valid generator matrices of the deformed code. 

Let $e = \left(\begin{array}{cc}u & v\end{array}\right)$ be an $X$ logical error, i.e.~$H^D_Z e^\mathrm{T} = 0$ and $J^D_Z e^\mathrm{T} \neq 0$. Then, we have 
\begin{eqnarray}
\tilde{H}_Z u^\mathrm{T} &=& 0
\end{eqnarray}
and 
\begin{eqnarray}
\tilde{\alpha}_\perp^\mathrm{r}{}^\mathrm{T} \tilde{J}_Z u^\mathrm{T} &\neq& 0.
\end{eqnarray}
Let's introduce the matrix $U$ such that 
\begin{eqnarray}
\mathrm{vec}(U) &=& u^\mathrm{T}.
\end{eqnarray}
Then, 
\begin{eqnarray}
H_Z U &=& 0
\end{eqnarray}
and 
\begin{eqnarray}
{\alpha_\perp^\mathrm{r}}^\mathrm{T} J_Z U &\neq& 0.
\end{eqnarray}
Suppose the $m$-th column in ${\alpha_\perp^\mathrm{r}}^\mathrm{T} J_Z U$ is not zero, we have 
\begin{eqnarray}
H_Z U_{\bullet,m} &=& 0
\end{eqnarray}
and 
\begin{eqnarray}
{\alpha_\perp^\mathrm{r}}^\mathrm{T} J_Z U_{\bullet,m} &\neq& 0.
\end{eqnarray}
According to the code distance $d$, 
\begin{eqnarray}
\vert e\vert &\geq& \vert U_{\bullet,m} \vert \geq d.
\end{eqnarray}

Let $e = \left(\begin{array}{cc}u & v\end{array}\right)$ be a $Z$ logical error, i.e.~$H^D_X e^\mathrm{T} = 0$ and $J^D_X e^\mathrm{T} \neq 0$. Then, we have 
\begin{eqnarray}
\tilde{H}_X u^\mathrm{T} &=& \tilde{T} v^\mathrm{T}, \\
\tilde{H}_M v^\mathrm{T} &=& 0,
\end{eqnarray}
and 
\begin{eqnarray}
\tilde{\alpha}_\perp \tilde{J}_X u^\mathrm{T} + \tilde{\beta} v^\mathrm{T} &\neq& 0.
\end{eqnarray}
If $\vert v \vert < d_R$, there exists a vector $u_\star \in \mathbb{F}_2^{k_R n_G}$ such that $v^\mathrm{T} = \tilde{H}_G u_\star^\mathrm{T}$ and the inequality (\ref{eq:u_star}) holds. Then, 
\begin{eqnarray}
\tilde{H}_X u^\mathrm{T} &=& \tilde{T} \tilde{H}_G u_\star^\mathrm{T}
\end{eqnarray}
and 
\begin{eqnarray}
\tilde{\alpha}_\perp \tilde{J}_X u^\mathrm{T} + \tilde{\beta} \tilde{H}_G u_\star^\mathrm{T} &\neq& 0.
\end{eqnarray}
Furthermore, we have 
\begin{eqnarray}
\tilde{H}_X (u + u_\star \tilde{S})^\mathrm{T} &=& 0
\end{eqnarray}
and 
\begin{eqnarray}
\tilde{\alpha}_\perp \tilde{J}_X (u + u_\star \tilde{S})^\mathrm{T} &\neq& 0.
\end{eqnarray}
Here, we have used Eqs.~(\ref{eq:HXS})~and~(\ref{eq:JXS}). Let's introduce the matrix $U$ such that 
\begin{eqnarray}
\mathrm{vec}(U) &=& (u + u_\star \tilde{S})^\mathrm{T}.
\end{eqnarray}
Then, 
\begin{eqnarray}
H_X U &=& 0
\end{eqnarray}
and 
\begin{eqnarray}
\alpha_\perp J_X U &\neq& 0.
\end{eqnarray}
Suppose the $m$-th column in $\alpha_\perp J_X U$ is not zero, we have 
\begin{eqnarray}
H_X U_{\bullet,m} &=& 0
\end{eqnarray}
and 
\begin{eqnarray}
\alpha_\perp J_X U_{\bullet,m} &\neq& 0.
\end{eqnarray}
According to the code distance $d$, 
\begin{eqnarray}
\vert U_{\bullet,m} \vert \geq d.
\end{eqnarray}
Note that 
\begin{eqnarray}
U_{\bullet,m} &=& U (E_{k_R})_{\bullet,m} = [(E_{k_R})_{m,\bullet} \otimes E_n] (u + u_\star \tilde{S})^\mathrm{T} \notag \\
&=& [(E_{k_R})_{m,\bullet} \otimes E_n] u^\mathrm{T} + [(G_R^\mathrm{r})_{\bullet,m} \otimes S]^\mathrm{T} u_\star^\mathrm{T},
\end{eqnarray}
therefore, 
\begin{eqnarray}
\Vert S \Vert \times \vert e \vert &\geq& \vert u \vert + \Vert S \Vert \times \vert v \vert \geq \vert [(E_{k_R})_{m,\bullet} \otimes E_n] u^\mathrm{T} \vert + \vert [(G_R^\mathrm{r})_{\bullet,m} \otimes S]^\mathrm{T} u_\star^\mathrm{T} \vert \geq d.
\end{eqnarray}
\end{proof}

\section{Locally-testable state preparation - Proof of Lemma~\ref{lem:LTSP}}
\label{app:LTSP}

In LTSP, we generate resource states used to measure one type of stabilizer operator---either $X$ or $Z$---of a CSS code. Without loss of generality, we focus on the measurement of $Z$ stabilizer operators, specifically the operators $Z(H^D_Z)$ in the deformed code. The conclusions are straightforwardly applicable to $X$ stabilizer operators or to other CSS codes, such as the memory code. 

\subsection{Stabilizer of the resource state}

\begin{figure}[htbp]
\centering
\includegraphics[width=\linewidth]{./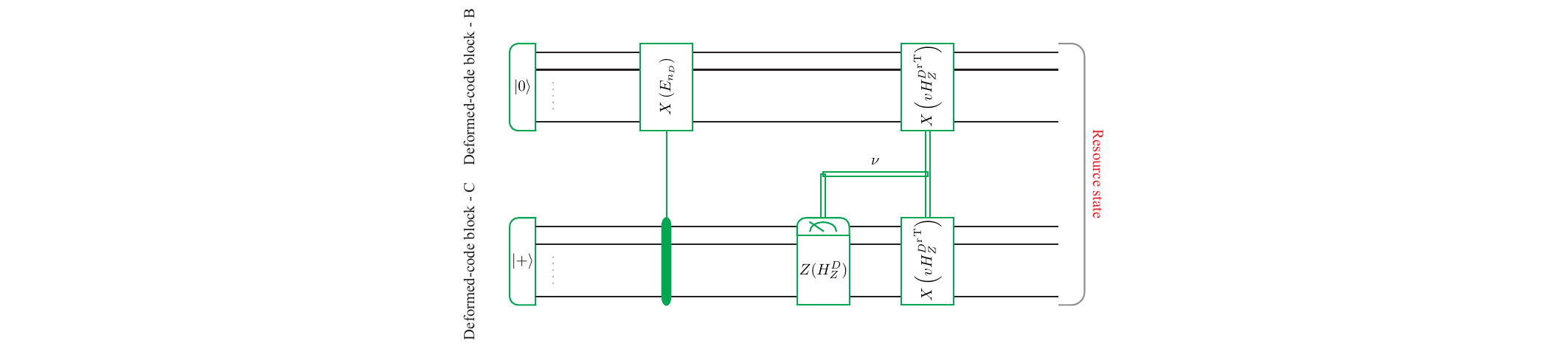}
\caption{
Circuit that defines the resource state for the $Z(H^D_Z)$ measurement. Outcome of the $Z(H^D_Z)$ measurement in the circuit is $\nu \in \mathbb{F}_2^{r^D_Z}$. 
}
\label{fig:resource_state}
\end{figure}

To implement the measurement of $Z(H^D_Z)$ using gate teleportation, we need to prepare a resource state encoded in two blocks of the deformed code, denoted by $B$ and $C$. Each block contains $n_D$ qubits. Ideally, the resource state can be prepared as follows; see Fig.~\ref{fig:resource_state}.  First, we generate $n_D$ copies of the Bell state between the two blocks, which constitute a $2n_D$-qubit stabilizer state, and the stabilizer is generated by operators $X(H^{BS}_X)$ and $Z(H^{BS}_Z)$, where 
\begin{eqnarray}
H^{BS}_X &=& H^{BS}_Z = \left(\begin{array}{cc}
E_{n_D} & E_{n_D}
\end{array}\right).
\end{eqnarray}
We can rewrite the matrices as 
\begin{eqnarray}
\tilde{H}^{BS}_X &=& \left(\begin{array}{cc}
H^D_X & H^D_X \\
J^D_X & J^D_X \\
{{H^D_Z}^\mathrm{r}}^\mathrm{T} & {{H^D_Z}^\mathrm{r}}^\mathrm{T}
\end{array}\right)
\end{eqnarray}
and 
\begin{eqnarray}
\tilde{H}^{BS}_Z &=& \left(\begin{array}{cc}
{{H^D_X}^\mathrm{r}}^\mathrm{T} & {{H^D_X}^\mathrm{r}}^\mathrm{T} \\
J^D_Z & J^D_Z \\
H^D_Z & H^D_Z
\end{array}\right),
\end{eqnarray}
such that $\mathrm{rowsp}H^{BS}_{X/Z} = \mathrm{rowsp}\tilde{H}^{BS}_{X/Z}$. Without loss generality, we can choose ${H^D_X}^\mathrm{r}$ and ${H^D_Z}^\mathrm{r}$ such that $J^D_X {H^D_X}^\mathrm{r} = J^D_Z {H^D_Z}^\mathrm{r} = {{H^D_X}^\mathrm{r}}^\mathrm{T} {H^D_Z}^\mathrm{r} = 0$. Then, we apply the measurement of operators $Z(H^D_Z)$ on block-$C$, followed by a feedback Pauli gate. Note that $X\left(\nu {{H^D_Z}^\mathrm{r}}^\mathrm{T}\right) Z\left((H^D_Z)_{j,\bullet}\right) X\left(\nu {{H^D_Z}^\mathrm{r}}^\mathrm{T}\right) = (-1)^{\nu_j} Z\left((H^D_Z)_{j,\bullet}\right)$. These operations transform Bell states into another stabilizer sate with the stabilizer generated by operators $X(H^{RS}_X)$ and $Z(H^{RS}_Z)$, where 
\begin{eqnarray}
H^{RS}_X &=& \left(\begin{array}{cc}
H^D_X & H^D_X \\
J^D_X & J^D_X
\end{array}\right)
\end{eqnarray}
and 
\begin{eqnarray}
H^{RS}_Z &=& \left(\begin{array}{cc}
E_{n_D} & E_{n_D} \\
0 & H^D_Z
\end{array}\right).
\end{eqnarray}
The resource state to be prepared is the stabilizer state defined by the generators $X(H^{RS}_X)$ and $Z(H^{RS}_Z)$. 

\subsection{State preparation using a classical locally testable code}

\begin{figure}[htbp]
\centering
\includegraphics[width=\linewidth]{./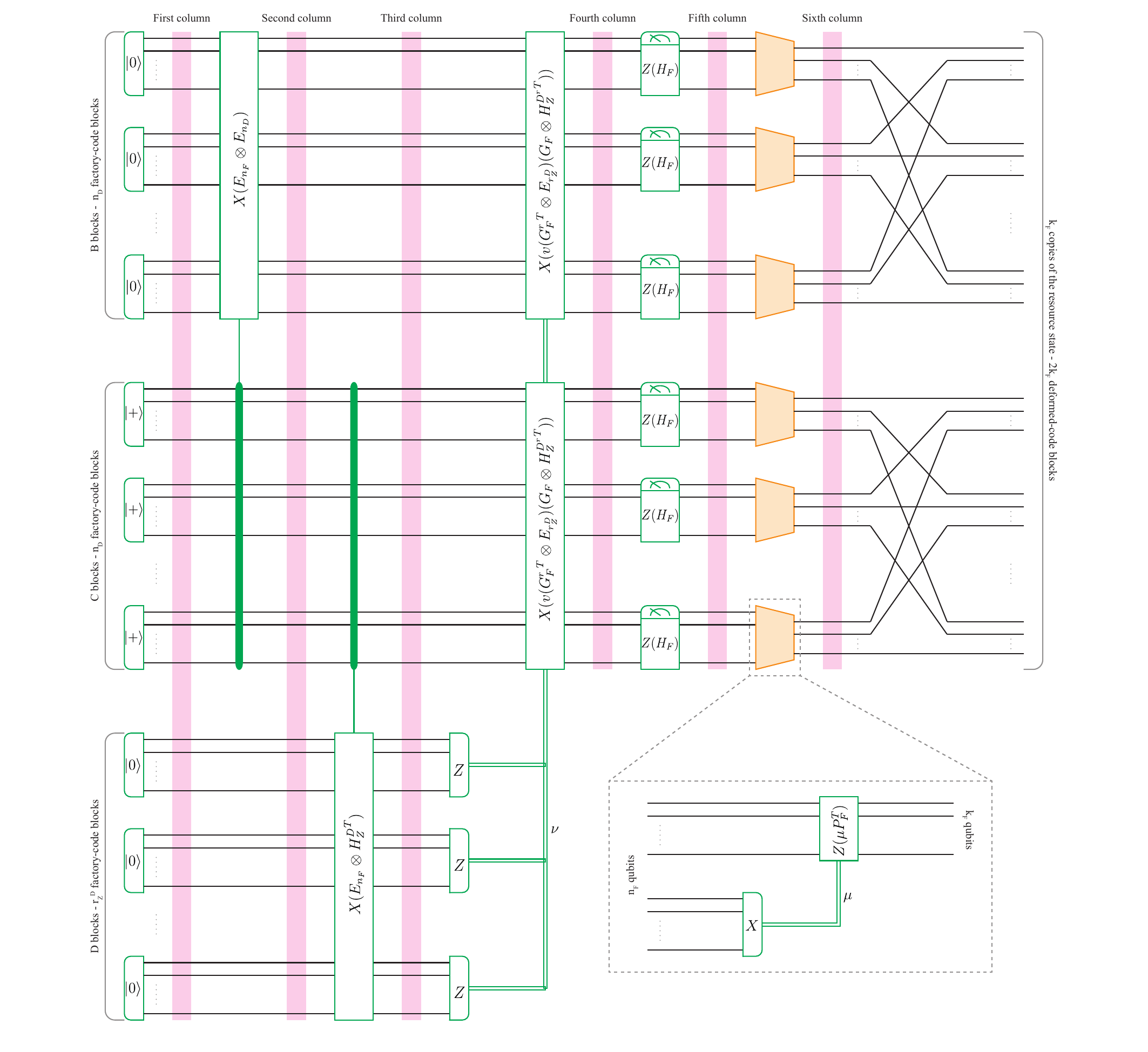}
\caption{
Locally-testable state preparation. The outcome of the transversal measurement on $D$ blocks is $\nu = \mathrm{vec}(V)^\mathrm{T}$, where $V \in \mathbb{F}_2^{r^D_Z \times n_F}$, and $V_{i,j}$ is the outcome of the measurement on the $j$-th qubit in the $i$-th block. The outcome of the partial transversal measurement on one of $B$ and $C$ blocks in decoding is $\mu \in \mathbb{F}_2^{n_F-k_F}$, and $\mu_j$ is the outcome of the measurement on the $(k_F+j)$-th qubit in the block. 
}
\label{fig:state_preparation}
\end{figure}

We prepare the resource state with the circuit shown in Fig.~\ref{fig:state_preparation}. The circuit constitutes of three sets of F-code blocks, denoted by $B$, $C$, and $D$, containing $n_D$, $n_D$, and $r^D_Z$ blocks, respectively. In each F-code block, we encode $k_F$ logical qubits according to the following CSS code: Stabilizer operators are $X(0)$ and $Z(H_F)$, and logical operators are $X(G_F)$ and $Z({G_F^\mathrm{r}}^\mathrm{T})$. Without loss of generality, we consider the F-code generator matrix in the form 
\begin{eqnarray}
G_F &=& \left(\begin{array}{cc}
E_{k_F} & P_F
\end{array}\right),
\end{eqnarray}
and then 
\begin{eqnarray}
{G_F^\mathrm{r}}^\mathrm{T} &=& \left(\begin{array}{cc}
E_{k_F} & 0
\end{array}\right).
\end{eqnarray}
This circuit yields $n_F$ copies of the resource state and is supported by the following lemmas. 

\begin{lemma}
Let $\vert e^{sp}_Z \vert$ denote the total weight of $Z$ errors occurring at the spacetime locations in the state preparation circuit shown in Fig.~\ref{fig:state_preparation}. Then, each copy of the prepared resource state contains $Z$ errors with total weight at most $\vert e^{RS}_Z \vert \leq \vert e^{sp}_Z \vert$. 
\label{lem:spX}
\end{lemma}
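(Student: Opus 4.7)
My plan is to propagate each $Z$ error forward through the circuit of Fig.~\ref{fig:state_preparation} in the Heisenberg picture and bound its contribution to each of the $k_F$ copies of the output resource state. Because every gate in the circuit is Clifford---transversal CNOTs between F-code blocks, transversal $|0\rangle$ and $|+\rangle$ preparations, and single-basis measurements followed by Pauli feedback---a $Z$ error remains a $Z$-type Pauli throughout; it can never become an $X$ on the output, which justifies bounding $|e^{RS}_Z|$ purely in terms of $|e^{sp}_Z|$.

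First I would enumerate the nontrivial $Z$-error locations. $Z$ errors on $|0\rangle$ initializations are trivial since $Z|0\rangle=|0\rangle$, and $Z$ errors immediately before a $Z$-basis measurement commute through the measurement without flipping its outcome and are discarded together with the measured qubit. The remaining locations---idle errors, post-CNOT errors, $|+\rangle$ initializations, and pre-$X$-basis-measurement errors at the decoding step---are the ones that can feed into $e^{RS}_Z$. Second I would track the propagation through the transversal CNOTs: with block $B$ as control and block $C$ as target, a $Z$ on column $m$ of $C$ becomes $Z_{B,m}Z_{C,m}$ and a $Z$ on the control passes through unchanged. Because every CNOT is transversal, the column index $m$ is preserved throughout the circuit, so a $Z$ error planted in column $m$ cannot spread to any other column.

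Third I would use stabilizer equivalence on the output resource state to reduce each propagated $Z$ pattern to its minimum-weight representative. The generator $H^{RS}_Z$ contains both the Bell-pair rows $(E_{n_D},E_{n_D})$, which give stabilizers $Z_{B,i}Z_{C,i}$ identifying errors on Bell partners, and the rows $(0,H^D_Z)$, which absorb the multi-qubit $Z$ strings produced by the $Z(H^D_Z)$-measurement CNOTs. Together with the column-preservation argument, this would show that each initial $Z$ error contributes at most a single-qubit $Z$ per copy. Finally, for the decoding step I would verify that the $Z$ feedbacks triggered by flipped $X$-basis outcomes on the F-code parity qubits, whose pattern is fixed by the standard form of $G_F$, also contribute weight at most one to each copy per initial error. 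Summing over all circuit errors then yields $|e^{RS}_Z|\le|e^{sp}_Z|$ on each copy.

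The main obstacle I expect is the stabilizer-equivalence bookkeeping during the $Z(H^D_Z)$-measurement phase: a $Z$ error injected on an ancilla qubit at an intermediate time step propagates forward into a $Z$ string on a subset of the data support, and one must use the matching row $(0,(H^D_Z)_{s,\bullet})$ together with the Bell-pair generators to exhibit a weight-one representative on each copy. A secondary subtlety is the decoding step, where flipped $X$-basis outcomes on parity qubits can inject $Z$ feedbacks into several copies simultaneously; verifying that the per-copy weight still remains bounded by the number of initial errors is where most of the care will be concentrated. The remaining pieces---Clifford tracking and the discard of $Z$ errors at $Z$-basis measurements---are then routine.
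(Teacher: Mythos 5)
Your proposal follows essentially the same route as the paper's proof: the paper tracks the spacetime propagation of the $X$-type generators of the resource state via the matrix $J^{sp}_{X,j}$, contracts it with the error vector $e^{sp}_Z$, and exhibits the equivalent residual error $e^{RS}_Z=\left(\begin{array}{cc}0 & u_{eff}\end{array}\right)$ with $u_{eff}^{\mathrm{T}}=U[(G_F)_{j,\bullet}]^{\mathrm{T}}+U'[(E_{k_F})_{j,\bullet}]^{\mathrm{T}}$, which is precisely the global, linear-algebraic version of your per-error forward propagation followed by reduction modulo the stabilizers of the output state. The weight bound $\vert e^{RS}_Z\vert\le\vert e^{sp}_Z\vert$ then follows because each faulty location contributes at most one nonzero row to $U$ (or $U'$), matching your ``at most one qubit per copy per error'' accounting, and the two subtleties you flag (the $Z(H^D_Z)$-measurement strings and the decoding feedback fixed by the standard form $G_F=\left(\begin{array}{cc}E_{k_F} & P_F\end{array}\right)$) are absorbed automatically by that contraction.
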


\begin{lemma}
Let $\vert e^{sp}_X \vert$ denote the total weight of $X$ errors (including incorrect outcomes in projective measurements of $Z$ operators) occurring at the spacetime locations in the state preparation circuit shown in Fig.~\ref{fig:state_preparation}. Suppose the F code is $(\omega,s)$-locally testable, and that $H^D_Z$ has rows and columns of weights at most $\omega^D_Z$. If the spacetime $X$ errors are undetectable by the two sets of checks illustrated in Fig.~\ref{fig:LTSP}, and 
\begin{eqnarray}
\vert e^{sp}_X \vert &<& \frac{d_F}{\omega^D_Z \max\left\{1,\frac{n_F}{r_Fs}\right\}},
\label{eq:espX}
\end{eqnarray}
then each copy of the prepared resource state contains $X$ errors with total weight at most 
\begin{eqnarray}
\vert e^{RS}_X \vert &\leq& \max\left\{1,\frac{n_F}{r_Fs}\right\} \times \vert e^{sp}_X \vert.
\label{eq:eRSX}
\end{eqnarray}
\label{lem:spZ}
\end{lemma}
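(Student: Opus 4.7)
The plan is to track the $X$-error pattern through the LTSP circuit and invoke the $(\omega,s)$-local testability of the F code via Lemma~\ref{lem:LTC} at the decoding stage. The argument naturally splits into three parts: propagation, LTC decoding of the residual error, and per-copy extraction.

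First I would classify the $X$-error propagation through the primitive operations in Fig.~\ref{fig:state_preparation}: transversal CNOTs between pairs of F-code blocks, ancilla-based measurements of $Z(H^D_Z)$ using the $D$-blocks, and the conditional feedback $X\bigl(\nu\,{{H^D_Z}^{\mathrm{r}}}^\T\bigr)$. The propagation rules are elementary: $X$ on the control of a CNOT copies to the target; $X$ on an ancilla immediately before an $X$-basis readout flips a single bit of the outcome $\nu$; and a flipped bit of $\nu$ alters the applied feedback by an $X$ operator whose F-logical weight is at most the column weight of ${{H^D_Z}^{\mathrm{r}}}^\T$, i.e.\ at most $\omega^D_Z$. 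Summing over spacetime locations then bounds the effective $X$ error $e$ on the output of each F-code block, together with the measurement-error pattern $\sigma_{\mathrm{meas}}$ affecting that block's PCMs, linearly in $|e^{sp}_X|$ with amplification at most $\omega^D_Z$.

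Next I would invoke local testability. Let $\sigma=H_F e^\T$ be the true PCM syndrome on a given F-code block; the observed syndrome is $\tilde\sigma=\sigma\oplus\sigma_{\mathrm{meas}}$, and the hypothesis that both sets of checks in Fig.~\ref{fig:LTSP} report trivial syndromes forces $\tilde\sigma=0$, hence $|\sigma|=|\sigma_{\mathrm{meas}}|\le|e^{sp}_X|$. Lemma~\ref{lem:LTC} produces a representative $u_\star$ satisfying $H_F u_\star^\T=\sigma$ and $|u_\star|\le(n_F/(r_Fs))|\sigma|\le(n_F/(r_Fs))|e^{sp}_X|$. Because $e$ and $u_\star$ share the same syndrome, they differ by an F-code codeword; the weight threshold $|e^{sp}_X|<d_F/(\omega^D_Z\max\{1,n_F/(r_Fs)\})$ is calibrated so that this codeword has weight strictly below $d_F$ and must therefore vanish, giving $e=u_\star$. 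This yields the per-block bound $|e|\le\max\{1,n_F/(r_Fs)\}|e^{sp}_X|$, where the ``$1$'' handles the regime in which direct propagation already dominates the LTC estimate.

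Finally, with $G_F=(E_{k_F}\,|\,P_F)$ in standard form, F-code decoding measures qubits $k_F+1,\dots,n_F$ of each block and applies $Z$-basis feedback on the retained $k_F$ qubits. Because $Z$-feedback is transparent to $X$ errors, the $X$ error on the $i$-th unencoded copy of the resource state coincides with the $X$ error on qubit $i$ of the underlying F-code block, so the per-copy bound inherits directly from the per-block bound above, giving $|e^{RS}_X|\le\max\{1,n_F/(r_Fs)\}|e^{sp}_X|$. The main obstacle will be the LTC-decoding step: making ``undetectable by the two sets of checks'' precise when the PCMs are themselves noisy, and verifying that the stated threshold on $|e^{sp}_X|$ is exactly what is required to force the difference $e\oplus u_\star$ below weight $d_F$. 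A secondary challenge is the bookkeeping in the propagation step, where data-qubit propagation, ancilla flips, and feedback through ${{H^D_Z}^{\mathrm{r}}}^\T$ combine into the single error pattern $e$; one must argue cleanly that these compound contributions aggregate to the $\omega^D_Z$ amplification factor used above.
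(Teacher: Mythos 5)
There is a genuine gap in your propagation step, and it sits exactly where the lemma does its real work. You claim that a flipped bit of the $Z(H^D_Z)$ outcome $\nu$ perturbs the feedback $X\bigl(\nu\,{{H^D_Z}^{\mathrm{r}}}^{\mathrm{T}}\bigr)$ by an operator of weight at most $\omega^D_Z$, ``the column weight of ${{H^D_Z}^{\mathrm{r}}}^{\mathrm{T}}$.'' But $\omega^D_Z$ bounds the row and column weights of $H^D_Z$ itself, not of its right inverse: the right inverse of a sparse matrix is generically dense, so a single measurement error can shift the feedback by an $X$ operator of weight up to $n_D$. If the feedback really were $\omega^D_Z$-local, the lemma would follow from constant-factor error propagation alone and the local testability of the F code would be superfluous; the entire point of LTSP is that this conditional Pauli is \emph{not} low weight, and the only way to control it is to certify that the outcome driving it is (equivalent to) the correct one.

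The paper never bounds the feedback error directly. Instead it writes down the two families of consistency checks explicitly --- $H^F_M$ applied to each block's $Z(H_F)$ outcome, and the cross-block relations among the $B$-, $C$-, and $D$-block outcomes that follow from the $\ket{0}$ initializations --- and shows that an undetectable $X$ fault pattern forces equations of the form $(U_B+U_C)H_F^{\mathrm{T}} = V_B+V_C$ and $H^D_Z U_C H_F^{\mathrm{T}} + U_D H_F^{\mathrm{T}} = H^D_Z V_C$. Local testability (Lemma~\ref{lem:LTC}) then supplies low-weight representatives $W_{B/C}$ with $V_{B/C}=W_{B/C}H_F^{\mathrm{T}}$, and the F-code distance forces $U_B+U_C = W_B+W_C$ and $H^D_Z U_C + U_D = H^D_Z W_C$ outright; the factor $\omega^D_Z$ in the hypothesis enters here, as the row-weight amplification needed to keep each row of $H^D_Z(U_C+W_C)$ below $d_F$, not as a bound on the feedback weight. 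Your LTC step is in the right spirit for the $B/C$-block parity checks, but it treats each block's syndrome in isolation and never brings the $D$-block outcomes --- the ones that actually drive the dangerous conditional Pauli --- under the control of local testability. Repairing the proof requires replacing your ``amplification at most $\omega^D_Z$'' propagation bound with this consistency-check argument.
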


\subsubsection{Proof of Lemma~\ref{lem:spX} - $X$ operators and $Z$ errors}

Focusing on the $j$-th copy of the prepared resource state, i.e.~$j$-th logical qubits in $B$ and $C$ F-code blocks, the propagation of $X$ operators in spacetime is described by the generator matrix 
\begin{eqnarray}
J^{sp}_{X,j} &=& \left(\begin{array}{ccc}
J^{sp}_{X,B,j} & J^{sp}_{X,C,j} & J^{sp}_{X,D,j}
\end{array}\right),
\end{eqnarray}
where 
\begin{eqnarray}
J^{sp}_{X,B,j} &=& \left(\begin{array}{cccccc}
0 & (G_F)_{j,\bullet} \otimes H^D_X & (G_F)_{j,\bullet} \otimes H^D_X & (G_F)_{j,\bullet} \otimes H^D_X & (G_F)_{j,\bullet} \otimes H^D_X & (E_{k_F})_{j,\bullet} \otimes H^D_X \\
0 & (G_F)_{j,\bullet} \otimes J^D_X & (G_F)_{j,\bullet} \otimes J^D_X & (G_F)_{j,\bullet} \otimes J^D_X & (G_F)_{j,\bullet} \otimes J^D_X & (E_{k_F})_{j,\bullet} \otimes J^D_X
\end{array}\right),
\end{eqnarray}
\begin{eqnarray}
J^{sp}_{X,C,j} &=& \left(\begin{array}{cccccc}
(G_F)_{j,\bullet} \otimes H^D_X & (G_F)_{j,\bullet} \otimes H^D_X & (G_F)_{j,\bullet} \otimes H^D_X & (G_F)_{j,\bullet} \otimes H^D_X & (G_F)_{j,\bullet} \otimes H^D_X & (E_{k_F})_{j,\bullet} \otimes H^D_X \\
(G_F)_{j,\bullet} \otimes J^D_X & (G_F)_{j,\bullet} \otimes J^D_X & (G_F)_{j,\bullet} \otimes J^D_X & (G_F)_{j,\bullet} \otimes J^D_X & (G_F)_{j,\bullet} \otimes J^D_X & (E_{k_F})_{j,\bullet} \otimes J^D_X
\end{array}\right),~~~
\end{eqnarray}
and 
\begin{eqnarray}
J^{sp}_{X,D,j} &=& \left(\begin{array}{ccc}
0 & 0 & 0 \\
0 & 0 & 0
\end{array}\right),
\end{eqnarray}
correspond to spacetime locations on $B$, $C$, and $D$ blocks, respectively: The six columns correspond to spacetime locations as illustrated in Fig.~\ref{fig:state_preparation}. We remark that operators on $C$ blocks do not propagate to $D$ blocks by the second transversal controlled-NOT gate because of $[(G_F)_{j,\bullet} \otimes H^D_X][E_{n_F} \otimes {H^D_Z}^\mathrm{T}] = [(G_F)_{j,\bullet} \otimes J^D_X][E_{n_F} \otimes {H^D_Z}^\mathrm{T}] = 0$. This generator matrix describes that operators $X\left((G_F)_{j,\bullet} \otimes H^D_X\right)$ and $X\left((G_F)_{j,\bullet} \otimes J^D_X\right)$ on $C$ blocks at the beginning are transformed into operators $X\left((G_F)_{j,\bullet} \otimes H^D_X\right) \otimes X\left((G_F)_{j,\bullet} \otimes H^D_X\right)$ and $X\left((G_F)_{j,\bullet} \otimes J^D_X\right) \otimes X\left((G_F)_{j,\bullet} \otimes J^D_X\right)$ on $B$ and $C$ blocks at the end, respectively. Because qubits in $C$ blocks are initialized in the state $\ket{+}$, the final state is an eigenstate of operators $X\left((G_F)_{j,\bullet} \otimes H^D_X\right) \otimes X\left((G_F)_{j,\bullet} \otimes H^D_X\right)$ and $X\left((G_F)_{j,\bullet} \otimes J^D_X\right) \otimes X\left((G_F)_{j,\bullet} \otimes J^D_X\right)$ on $B$ and $C$ blocks with the eigenvalue $+1$. 

We represent $Z$ errors in spacetime with a vector 
\begin{eqnarray}
e^{sp}_Z &=& \left(\begin{array}{ccc}
e^{sp}_{Z,B} & e^{sp}_{Z,C} & e^{sp}_{Z,D}
\end{array}\right),
\end{eqnarray}
where 
\begin{eqnarray}
e^{sp}_{Z,B/C} &=& \left(\begin{array}{cccccc}
u_{B/C,1} & u_{B/C,2} & u_{B/C,3} & u_{B/C,4} & u_{B/C,5} & u_{B/C,6}
\end{array}\right)
\end{eqnarray}
and 
\begin{eqnarray}
e^{sp}_{Z,D} &=& \left(\begin{array}{ccc}
u_{D,1} & u_{D,2} & u_{D,3}
\end{array}\right).
\end{eqnarray}
The errors flip operators during the propagation according to 
\begin{eqnarray}
J^{sp}_{X,j}{e^{sp}_Z}^\mathrm{T} &=& \left(\begin{array}{c}
\mathrm{vec}\left(H^D_X U [(G_F)_{j,\bullet}]^\mathrm{T}\right) \\
\mathrm{vec}\left(J^D_X U [(G_F)_{j,\bullet}]^\mathrm{T}\right)
\end{array}\right) + \left(\begin{array}{c}
\mathrm{vec}\left(H^D_X U' [(E_{k_F})_{j,\bullet}]^\mathrm{T}\right) \\
\mathrm{vec}\left(J^D_X U' [(E_{k_F})_{j,\bullet}]^\mathrm{T}\right)
\end{array}\right) \notag \\
&=& \left(\begin{array}{c}
H^D_X u_{eff}^\mathrm{T} \\
J^D_X u_{eff}^\mathrm{T}
\end{array}\right) = H^{RS}_X \left(\begin{array}{c}
0 \\
u_{eff}^\mathrm{T}
\end{array}\right),
\label{eq:JXeZ_sp}
\end{eqnarray}
where 
\begin{eqnarray}
\mathrm{vec}(U) &=& (u_{B,2} + u_{B,3} + u_{B,4} + u_{B,5} + u_{C,1} + u_{C,2} + u_{C,3} + u_{C,3} + u_{C,4} + u_{C,5})^\mathrm{T}, \\
\mathrm{vec}(U') &=& (u_{B,6} + u_{C,6})^\mathrm{T}, \\
u_{eff}^\mathrm{T} &=& U [(G_F)_{j,\bullet}]^\mathrm{T} + U' [(E_{k_F})_{j,\bullet}]^\mathrm{T}. \label{eq:ueff}
\end{eqnarray}

We can find that the vector 
\begin{eqnarray}
e^{RS}_Z &=& \left(\begin{array}{cc}
0 & u_{eff}
\end{array}\right) \in \mathbb{F}_2^{2n_D}
\end{eqnarray}
satisfies $H^{RS}_X {e^{RS}_Z}^\mathrm{T} = J^{sp}_X {e^{sp}_Z}^\mathrm{T}$ as illustrated in Eq.~(\ref{eq:JXeZ_sp}). Therefore, $e^{RS}_Z$ represents $Z$ errors on the prepared resource state. According to Eq.~(\ref{eq:ueff}), it satisfies the inequality $\vert e^{RS}_Z \vert \leq \vert e^{sp}_Z \vert$. 

\subsubsection{Proof of Lemma~\ref{lem:spZ} - $Z$ operators and $X$ errors}

Focusing on the $j$-th copy of the prepared resource state, i.e.~$j$-th logical qubits in $B$ and $C$ F-code blocks, the propagation of $Z$ operators in spacetime is described by the generator matrix 
\begin{eqnarray}
J^{sp}_{Z,j} &=& \left(\begin{array}{cccc}
J^{sp}_{Z,B,j} & J^{sp}_{Z,C,j} & J^{sp}_{Z,D,j} & J^{sp}_{Z,mea,j}
\end{array}\right),
\end{eqnarray}
where 
\begin{eqnarray}
J^{sp}_{Z,B,j} &=& \left(\begin{array}{ccc}
({G_F^\mathrm{r}}^\mathrm{T})_{j,\bullet} \otimes E_{n_D} & ({G_F^\mathrm{r}}^\mathrm{T})_{j,\bullet} \otimes E_{n_D} & ({G_F^\mathrm{r}}^\mathrm{T})_{j,\bullet} \otimes E_{n_D} \\
0 & 0 & 0
\end{array}\right. \notag \\
&& \left.\begin{array}{ccc}
({G_F^\mathrm{r}}^\mathrm{T})_{j,\bullet} \otimes E_{n_D} & ({G_F^\mathrm{r}}^\mathrm{T})_{j,\bullet} \otimes E_{n_D} & (E_{k_F})_{j,\bullet} \otimes E_{n_D} \\
0 & 0 & 0
\end{array}\right),
\end{eqnarray}
\begin{eqnarray}
J^{sp}_{Z,C,j} &=& \left(\begin{array}{cccccc}
0 & ({G_F^\mathrm{r}}^\mathrm{T})_{j,\bullet} \otimes E_{n_D} & ({G_F^\mathrm{r}}^\mathrm{T})_{j,\bullet} \otimes E_{n_D} & ({G_F^\mathrm{r}}^\mathrm{T})_{j,\bullet} \otimes E_{n_D} & ({G_F^\mathrm{r}}^\mathrm{T})_{j,\bullet} \otimes E_{n_D} & (E_{k_F})_{j,\bullet} \otimes E_{n_D} \\
0 & 0 & ({G_F^\mathrm{r}}^\mathrm{T})_{j,\bullet} \otimes H^D_Z & ({G_F^\mathrm{r}}^\mathrm{T})_{j,\bullet} \otimes H^D_Z & ({G_F^\mathrm{r}}^\mathrm{T})_{j,\bullet} \otimes H^D_Z & (E_{k_F})_{j,\bullet} \otimes H^D_Z
\end{array}\right),
\end{eqnarray}
and 
\begin{eqnarray}
J^{sp}_{Z,D,j} &=& \left(\begin{array}{ccc}
0 & 0 & 0 \\
({G_F^\mathrm{r}}^\mathrm{T})_{j,\bullet} \otimes E_{r^D_Z} & ({G_F^\mathrm{r}}^\mathrm{T})_{j,\bullet} \otimes E_{r^D_Z} & ({G_F^\mathrm{r}}^\mathrm{T})_{j,\bullet} \otimes E_{r^D_Z}
\end{array}\right),
\end{eqnarray}
correspond to spacetime locations on $B$, $C$, and $D$ blocks, respectively, and 
\begin{eqnarray}
J^{sp}_{Z,mea,j} &=& \left(\begin{array}{cc}
0 & 0 \\
0 & 0
\end{array}\right)
\end{eqnarray}
corresponds to measurements of $Z(H_F)$: The two columns correspond to measurements on $B$ and $C$ blocks, respectively. Note that $[({G_F^\mathrm{r}}^\mathrm{T})_{j,\bullet} \otimes H^D_Z][(G_F^\mathrm{r} \otimes E_{r^D_Z}) (G_F \otimes {{H^D_Z}^\mathrm{r}}^\mathrm{T})]^\mathrm{T} = ({G_F^\mathrm{r}}^\mathrm{T})_{j,\bullet} \otimes E_{r^D_Z}$, which justifies the propagation through the feedback gate depending on the outcome $\nu$. This generator matrix describes that operators $Z\left(({G_F^\mathrm{r}}^\mathrm{T})_{j,\bullet} \otimes E_{n_D}\right)$ on $B$ blocks and and $Z\left(({G_F^\mathrm{r}}^\mathrm{T})_{j,\bullet} \otimes E_{r^D_Z}\right)$ on $D$ blocks at the beginning are transformed into operators $Z\left((E_{k_F})_{j,\bullet} \otimes E_{n_D}\right) \otimes Z\left((E_{k_F})_{j,\bullet} \otimes E_{n_D}\right)$ on $B$ and $C$ blocks and $Z\left((E_{k_F})_{j,\bullet} \otimes H^D_Z\right)$ on $C$ blocks at the end, respectively. Because qubits in $C$ and $D$ blocks are initialized in the state $\ket{0}$, the final state is an eigenstate of operators $Z\left((E_{k_F})_{j,\bullet} \otimes E_{n_D}\right) \otimes Z\left((E_{k_F})_{j,\bullet} \otimes E_{n_D}\right)$ on $B$ and $C$ blocks and $Z\left((E_{k_F})_{j,\bullet} \otimes H^D_Z\right)$ on $C$ blocks with the eigenvalue $+1$. 

We detect errors with two sets of checks. First, let $\eta \in \mathbb{F}_2^{r_F}$ be the outcome of a $Z(H_F)$ measurement, it satisfies $\eta \in \mathrm{colsp} H_F$ if the circuit is error-free. Therefore, we can detect errors that flip the measurement outcome through $H^F_M \eta$, where $H^F_M$ is the check matrix of the code $\mathrm{colsp} H_F$. Second, because qubits in $C$ and $D$ blocks are initialized in the state $\ket{0}$, outcomes of $Z(H_F)$ measurements on $B$ and $C$ blocks as well as transversal $Z$ measurements on $D$ blocks satisfy certain equations if the circuit is error-free. Accordingly, the check matrix is 
\begin{eqnarray}
H^{sp}_Z &=& \left(\begin{array}{cccc}
H^{sp}_{Z,B} & H^{sp}_{Z,C} & H^{sp}_{Z,D} & H^{sp}_{Z,mea}
\end{array}\right),
\end{eqnarray}
where 
\begin{eqnarray}
H^{sp}_{Z,B} &=& \left(\begin{array}{cccccc}
0 & 0 & 0 & 0 & 0 & 0 \\
0 & 0 & 0 & 0 & 0 & 0 \\
H_F \otimes E_{n_D} & H_F \otimes E_{n_D} & H_F \otimes E_{n_D} & H_F \otimes E_{n_D} & 0 & 0 \\
0 & 0 & 0 & 0 & 0 & 0
\end{array}\right), \\
H^{sp}_{Z,C} &=& \left(\begin{array}{cccccc}
0 & 0 & 0 & 0 & 0 & 0 \\
0 & 0 & 0 & 0 & 0 & 0 \\
0 & H_F \otimes E_{n_D} & H_F \otimes E_{n_D} & H_F \otimes E_{n_D} & 0 & 0 \\
0 & 0 & H_F \otimes H^D_Z & H_F \otimes H^D_Z & 0 & 0
\end{array}\right), \\
H^{sp}_{Z,D} &=& \left(\begin{array}{ccc}
0 & 0 & 0 \\
0 & 0 & 0 \\
0 & 0 & 0 \\
H_F \otimes E_{r^D_Z} & H_F \otimes E_{r^D_Z} & H_F \otimes E_{r^D_Z}
\end{array}\right),
\end{eqnarray}
and
\begin{eqnarray}
H^{sp}_{Z,mea} &=& \left(\begin{array}{cc}
H^F_M \otimes E_{n_D} & 0 \\
0 & H^F_M \otimes E_{n_D} \\ 
E_{r_F} \otimes E_{n_D} & E_{r_F} \otimes E_{n_D} \\
0 & E_{r_F} \otimes H^D_Z
\end{array}\right).
\end{eqnarray}
We represent $X$ errors in spacetime with a vector 
\begin{eqnarray}
e^{sp}_X &=& \left(\begin{array}{cccc}
e^{sp}_{X,B} & e^{sp}_{X,C} & e^{sp}_{X,D} & e^{sp}_{X,mea}
\end{array}\right),
\end{eqnarray}
where 
\begin{eqnarray}
e^{sp}_{X,B/C} &=& \left(\begin{array}{cccccc}
u_{B/C,1} & u_{B/C,2} & u_{B/C,3} & u_{B/C,4} & u_{B/C,5} & u_{B/C,6}
\end{array}\right),
\end{eqnarray} 
\begin{eqnarray}
e^{sp}_{X,D} &=& \left(\begin{array}{ccc}
u_{D,1} & u_{D,2} & u_{D,3}
\end{array}\right).
\end{eqnarray}
and 
\begin{eqnarray}
e^{sp}_{X,mea} &=& \left(\begin{array}{cc}
v_B & v_C
\end{array}\right).
\end{eqnarray}
The errors flip operators during the propagation according to 
\begin{eqnarray}
J^{sp}_{Z,j}{e^{sp}_X}^\mathrm{T} &=& \left(\begin{array}{c}
\mathrm{vec}\left((U_B + U'_B + U_C + U'_C) [({G_F^\mathrm{r}}^\mathrm{T})_{j,\bullet}]^\mathrm{T}\right) \\
\mathrm{vec}\left((H^D_Z U_C + H^D_Z U'_C + U_D) [({G_F^\mathrm{r}}^\mathrm{T})_{j,\bullet}]^\mathrm{T}\right)
\end{array}\right) + \left(\begin{array}{c}
\mathrm{vec}\left((U''_B + U''_C) [(E_{k_F})_{j,\bullet}]^\mathrm{T}\right) \\
\mathrm{vec}\left(H^D_Z U''_C [(E_{k_F})_{j,\bullet}]^\mathrm{T}\right)
\end{array}\right),
\label{eq:jZeX}
\end{eqnarray}
where 
\begin{eqnarray}
\mathrm{vec}(U_B) &=& (u_{B,1} + u_{B,2} + u_{B,3} + u_{B,4} + u_{C,2})^\mathrm{T}, \\
\mathrm{vec}(U'_B) &=& u_{B,5}^\mathrm{T}, \\
\mathrm{vec}(U''_B) &=& u_{B,6}^\mathrm{T}, \\
\mathrm{vec}(U_C) &=& (u_{C,3} + u_{C,4})^\mathrm{T}, \\
\mathrm{vec}(U'_C) &=& u_{C,5}^\mathrm{T}, \\
\mathrm{vec}(U''_C) &=& u_{C,6}^\mathrm{T}, \\
\mathrm{vec}(U_D) &=& (u_{D,1} + u_{D,2} + u_{D,3})^\mathrm{T}. \\
\end{eqnarray}

If the spacetime error $e^{sp}_X$ is undetectable by $H^{sp}_Z$, it satisfies the condition $H^{sp}_Z {e^{sp}_X}^\mathrm{T} = 0$. Let 
\begin{eqnarray}
\mathrm{vec}(V_{B/C}) &=& v_{B/C}^\mathrm{T},
\end{eqnarray}
the condition can be rewritten as 
\begin{eqnarray}
V_{B/C} {H^F_M}^\mathrm{T} &=& 0, \label{eq:VBC} \\
(U_B + U_C) H_F^\mathrm{T} &=& V_B + V_C \\
H^D_Z U_C H_F^\mathrm{T} + U_D H_F^\mathrm{T} &=& H^D_Z V_C.
\end{eqnarray}

When Eq.~(\ref{eq:VBC}) holds, there exists $W_{B/C} \in \mathbb{F}_2^{n_D \times n_F}$ such that $V_{B/C} = W_{B/C} H_F^\mathrm{T}$, and 
\begin{eqnarray}
\vert (W_{B/C})_{a,\bullet} \vert &\leq& \frac{n_F}{r_Fs} \vert (V_{B/C})_{a,\bullet} \vert
\end{eqnarray}
for all $a = 1,2,\ldots,n_D$; see Lemma~\ref{lem:LTC}. Then, we have 
\begin{eqnarray}
(U_B + U_C) H_F^\mathrm{T} &=& (W_B + W_C) H_F^\mathrm{T}, \\
H^D_Z U_C H_F^\mathrm{T} + U_D H_F^\mathrm{T} &=& H^D_Z W_C H_F^\mathrm{T}.
\end{eqnarray}
If the inequality (\ref{eq:espX}) is true, 
\begin{eqnarray}
\vert (U_B + U_C)_{a,\bullet} + (W_B + W_C)_{a,\bullet} \vert &\leq& \max\left\{1,\frac{n_F}{r_Fs}\right\} \times \vert e^{sp}_X \vert \leq d_F
\end{eqnarray}
and 
\begin{eqnarray}
\vert (H^D_Z U_C + U_D)_{b,\bullet} + (H^D_Z W_C)_{a,\bullet} \vert &\leq& \omega^D_Z \max\left\{1,\frac{n_F}{r_Fs}\right\} \times \vert e^{sp}_X \vert \leq d_F
\end{eqnarray}
hold for all $a = 1,2,\ldots,n_D$ and $b = 1,2,\ldots,r^D_Z$. Then, according to the F-code distance, we have 
\begin{eqnarray}
U_B + U_C &=& W_B + W_C, \\
H^D_Z U_C + U_D &=& H^D_Z W_C.
\end{eqnarray}
Substitute the above equations into Eq.~(\ref{eq:jZeX}), we obtain 
\begin{eqnarray}
J^{sp}_{Z,j}{e^{sp}_X}^\mathrm{T} &=& \left(\begin{array}{c}
\mathrm{vec}\left((W_B + W_C + U'_B + U'_C) [({G_F^\mathrm{r}}^\mathrm{T})_{j,\bullet}]^\mathrm{T}\right) \\
\mathrm{vec}\left((H^D_Z W_C + H^D_Z U'_C) [({G_F^\mathrm{r}}^\mathrm{T})_{j,\bullet}]^\mathrm{T}\right)
\end{array}\right) + \left(\begin{array}{c}
\mathrm{vec}\left((U''_B + U''_C) [(E_{k_F})_{j,\bullet}]^\mathrm{T}\right) \\
\mathrm{vec}\left(H^D_Z U''_C [(E_{k_F})_{j,\bullet}]^\mathrm{T}\right)
\end{array}\right) \notag \\
&=& H^{RS}_Z \left(\begin{array}{c}
u_{eff,B}^\mathrm{T} \\
u_{eff,C}^\mathrm{T}
\end{array}\right),
\end{eqnarray}
where 
\begin{eqnarray}
u_{eff,B/C}^\mathrm{T} &=& W_{B/C} [({G_F^\mathrm{r}}^\mathrm{T})_{j,\bullet}]^\mathrm{T} + U'_{B/C} [({G_F^\mathrm{r}}^\mathrm{T})_{j,\bullet}]^\mathrm{T} + U''_{B/C} [(E_{k_F})_{j,\bullet}]^\mathrm{T}.
\end{eqnarray}
Accordingly, the $X$ error on the prepared resource state is 
\begin{eqnarray}
e^{RS}_X &=& \left(\begin{array}{cc}
u_{eff,B} & u_{eff,C}
\end{array}\right),
\end{eqnarray}
such that $H^{RS}_Z {e^{RS}_X}^\mathrm{T} = J^{sp}_{Z,j} {e^{sp}_X}^\mathrm{T}$. Note that 
\begin{eqnarray}
\vert u_{eff,B} \vert  + \vert u_{eff,C} \vert &\leq& \max\left\{1,\frac{n_F}{r_Fs}\right\} \times \vert e^{sp}_X \vert.
\label{equ:efficentXerror}
\end{eqnarray}
The lemma has been proved. 

\subsection{Surface code in the state preparation}

\begin{figure}[htbp]
\centering
\includegraphics[width=\linewidth]{./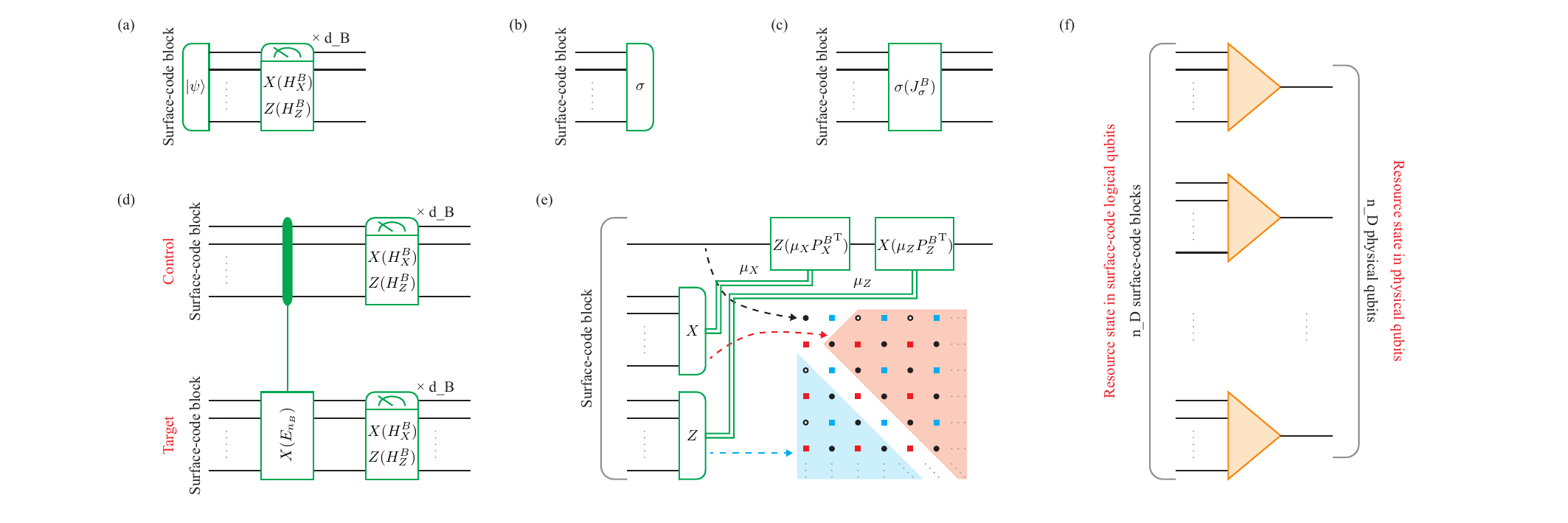}
\caption{
Operations on surface-code logical qubits. 
(a) Initialization in the state $\ket{\psi} = \ket{0},\ket{+}$. 
(b) Measurement in the basis $\sigma = X,Z$. 
(c) Logical Pauli gate $\sigma = X,Z$. Vectors $J^B_X$ and $J^B_Z$ represent $X$ and $Z$ logical operators of the surface-code logical qubit. 
(d) Logical controlled-NOT gate. 
(e) Decoding on a surface-code logical qubit. Black circles denote qubits. Red and blue squares denote $X$ and $Z$ checks, respectively; edges of the Tanner graph have been neglected. Qubits in the area marked in red (blue) are measured in the $X$ ($Z$) basis, and the measurement outcome is the vector $\mu_X$ ($\mu_Z$). In feedback gates, $P^B_X$ and $P^B_X$ are vectors, and their supports are marked with open circles: In the vector $P^B_X$ ($P^B_Z$), elements corresponding to open circles in the red (blue) area are ones, and other elements are zeros. 
(f) Decoding of the resource state. Triangles denote decoding operations on surface-code logical qubits, and each of them is realized with the circuit in (e). 
}
\label{fig:surface_code}
\end{figure}

If we prepare resource states with the circuit in Fig.~\ref{fig:state_preparation}, the F code can only correct $X$ errors, and $Z$ errors may accumulate and damage the prepared resource states. To solve this problem, we protect each qubit in the circuit with a surface code. 

We simulate the circuit in Fig.~\ref{fig:state_preparation} using the surface code. Now, each qubit in the circuit becomes a surface-code logical qubit. The circuit involves the following operations: initializations in states $\ket{0}$ and $\ket{+}$, measurements in the $X$ and $Z$ bases, Pauli gates, and controlled-NOT gate. We can implement these operations on surface-code logical qubits in the transversal way; see Figs.~\ref{fig:surface_code}(a-d). For initializations and the controlled-NOT gate, we measure surface-code stabilizer operators for $d_T = \Theta{(d_S)}$ rounds after each logical operation to correct the errors. With these logical operations, we can simulate the circuit in Fig.~\ref{fig:state_preparation} and prepare resource states encoded in surface-code logical qubits. 

Next, to obtain resource states in physical qubits, we have to decode the surface-code logical qubits. This operation can be realized by measuring physical qubits in a surface-code block in $X$ and $Z$ bases, leaving only one physical qubit unmeasured; see Fig.~\ref{fig:surface_code}(e). After decoding each surface-code logical qubit in the resource state, we obtain the eventual physical-qubit resource state; see Fig.~\ref{fig:surface_code}(f). 

Finally, we analyze the impact of using the surface code on conclusions in Lemmas~\ref{lem:spX}~and~\ref{lem:spZ}. On surface-code logical qubits, we implement a logical operation using a set of physical operations. The logical operation has errors only if physical operations have errors: Let $w_L$ and $w_P$ be the weights of logical errors and physical errors, respectively; then $w_L = 0$ if $w_P = 0$. For single-qubit logical operations, i.e.~initializations, measurements, and Pauli gates, we have $w_L \leq w_P$: Note that $w_L$ is either zero or one. For a logical controlled-NOT gate, because it acts on two surface-code logical qubits, $w_L \leq 2$, and we have $w_L \leq 2 w_P$. Multiple logical controlled-NOT gates constitute the generalized transversal controlled-NOT gates in Fig.~\ref{fig:state_preparation}. The error propagation in the second generalized transversal controlled-NOT gate may transform an error on one surface-code logical qubit into an error on multiple surface-code logical qubits, i.e.~increase the weight of logical errors. Because $H^D_Z$ has rows and columns of weight at most $\omega^D_Z$, the logical-error weight is amplified by a factor of at most $\omega^D_Z$. Therefore, for all these logical operations, we have $w_L \leq 2 \omega^D_Z w_P$. Decoding operations may also cause errors, which are the same as single-qubit logical operations; they cause extra errors, and we can take into account decoding errors in the sixth column in Fig.~\ref{fig:state_preparation}. 

\begin{lemma}[Formal version of Lemma~\ref{lem:LTSP}]
Suppose the F code is $(\omega,s)$-locally testable, and $H^D_Z$ has rows and columns of weight at most $\omega^D_Z$. Let $\bar{e}^{sp}_X$ and $\bar{e}^{sp}_Z$ be the $X$ and $Z$ spacetime (physical) errors in the state preparation circuit shown in Fig.~\ref{fig:state_preparation}, when simulated using a surface code of arbitrary code distance. If the spacetime errors are undetectable and satisfy 
\begin{eqnarray}
\vert \bar{e}^{sp}_X \vert &<& \frac{d_F}{2{\omega^D_Z}^2 \max\left\{1,\frac{n_F}{r_Fs}\right\}},
\end{eqnarray}
the following inequalities hold: 
\begin{eqnarray}
\vert e^{RS}_X \vert &\leq& 2\omega^D_Z \max\left\{1,\frac{n_F}{r_Fs}\right\} \times \vert \bar{e}^{sp}_X \vert, \\
\vert e^{RS}_Z \vert &\leq& 2\omega^D_Z \vert \bar{e}^{sp}_Z \vert.
\end{eqnarray}
Here, $\frac{n_F}{r_F s} \leq \frac{w}{s}$, i.e., $w r_F \geq n_F$; otherwise the distance reduces to $d_F = 1$.
\label{lem:sp}
\end{lemma}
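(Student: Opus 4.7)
The plan is to obtain Lemma~\ref{lem:sp} by composing the two ``one-level'' lemmas already established (Lemma~\ref{lem:spX} and Lemma~\ref{lem:spZ}) with the propagation analysis of the surface-code simulation sketched immediately before the lemma statement. The physical errors $\bar e^{sp}_{X/Z}$ live on surface-code qubits, while the two earlier lemmas bound the resource-state errors in terms of logical (F-code-level) errors $e^{sp}_{X/Z}$. So the central task is to control the ratio $|e^{sp}_{X/Z}|/|\bar e^{sp}_{X/Z}|$, plug the resulting bound on $|e^{sp}_X|$ into the hypothesis of Lemma~\ref{lem:spZ}, and read off the conclusion.

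First, I would package the surface-code propagation statement into a single inequality
\begin{equation}
|e^{sp}_\sigma| \;\le\; 2\omega^D_Z\, |\bar e^{sp}_\sigma| \qquad (\sigma=X,Z),
\end{equation}
obtained as follows. Each logical operation of the simulated circuit (transversal initialization, measurement, Pauli gate, or logical CNOT used inside the generalized transversal CNOT) is a fault-tolerant gadget on the surface code, so a weight-$w_P$ physical error there yields a logical error of weight at most $w_P$ on the involved surface-code logical qubits. For single-qubit gadgets this gives $w_L\le w_P$, while each logical CNOT has $w_L\le 2w_P$ because it touches two logical qubits. The only place where these logical errors are amplified \emph{by the circuit itself} is the second generalized transversal CNOT in Fig.~\ref{fig:state_preparation}, whose controlled pattern is $E_{n_F}\otimes H^D_Z$; since each row and column of $H^D_Z$ has weight at most $\omega^D_Z$, an $X$ ($Z$) error on one logical qubit propagates to at most $\omega^D_Z$ logical qubits on the other side. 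Composing these factors, together with the contribution from decoding gadgets (which are single-qubit and therefore absorbed into the last column of Fig.~\ref{fig:state_preparation}), gives the uniform $2\omega^D_Z$ multiplicative bound displayed above.

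Next, I would feed these bounds into the two earlier lemmas. For the $Z$ part this is immediate: Lemma~\ref{lem:spX} gives $|e^{RS}_Z|\le |e^{sp}_Z|\le 2\omega^D_Z |\bar e^{sp}_Z|$. For the $X$ part we first check that the hypothesis of Lemma~\ref{lem:spZ} is satisfied: the assumption
\begin{equation}
|\bar e^{sp}_X| \;<\; \frac{d_F}{2{\omega^D_Z}^2\,\max\!\bigl\{1,\tfrac{n_F}{r_F s}\bigr\}}
\end{equation}
together with $|e^{sp}_X|\le 2\omega^D_Z|\bar e^{sp}_X|$ yields $|e^{sp}_X|<d_F/(\omega^D_Z\max\{1,n_F/(r_Fs)\})$, which is exactly the hypothesis \eqref{eq:espX} of Lemma~\ref{lem:spZ}. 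Applying Lemma~\ref{lem:spZ} and chaining gives
\begin{equation}
|e^{RS}_X| \;\le\; \max\!\Bigl\{1,\tfrac{n_F}{r_F s}\Bigr\}\,|e^{sp}_X| \;\le\; 2\omega^D_Z\max\!\Bigl\{1,\tfrac{n_F}{r_F s}\Bigr\}\,|\bar e^{sp}_X|.
\end{equation}
One also has to verify that undetectability of $\bar e^{sp}_X$ at the physical level implies undetectability of the induced $e^{sp}_X$ at the F-code level, which follows because the F-code check measurements are themselves implemented transversally on surface-code blocks and any detection event would already appear in the surface-code syndromes captured by $\bar e^{sp}_X$.

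The main obstacle I anticipate is bookkeeping the $2\omega^D_Z$ amplification cleanly---in particular, making sure that the error contributed by \emph{decoding} the surface-code blocks at the end (Fig.~\ref{fig:surface_code}(e,f)) is accounted for without an additional $\omega^D_Z$ factor, and that the generalized transversal CNOT's logical-level propagation is not double-counted with its physical-level propagation inside the surface code. Once that is done, the rest is algebraic: verify the threshold inequality needed to invoke Lemma~\ref{lem:spZ}, and compose. The spacetime-volume and depth claims ($O(n_D n_F d_S^2)$ qubits, depth $O(d_S)$, $n_F$ copies) follow by inspection of the circuit, using that the F code and deformed code are both LDPC so each transversal layer has constant local complexity and the surface-code gadgets each run in depth $O(d_S)$.
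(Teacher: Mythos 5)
Your proposal is correct and follows essentially the same route as the paper: the paper also obtains the formal lemma by establishing the logical-versus-physical weight bound $w_L \le 2\omega^D_Z w_P$ for the surface-code-simulated gadgets (with the $\omega^D_Z$ factor coming from propagation through the second generalized transversal controlled-NOT and decoding errors absorbed into the final wait locations), and then composing this with Lemma~\ref{lem:spX} and Lemma~\ref{lem:spZ} exactly as you describe. The arithmetic check that $2\omega^D_Z|\bar e^{sp}_X|$ satisfies the hypothesis of Lemma~\ref{lem:spZ} is the same in both.
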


\subsection{Overhead in the state preparation}

The state preparation circuit shown in Fig.~\ref{fig:state_preparation} involves $2n_D+r^D_Z$ F-code blocks, so the number of data qubits is $(2n_D+r^D_Z)n_F$. Additionally, we require $r_F$ ancilla qubits to implement parity-check measurements on each F-code block, and such measurements are applied to $2n_D$ blocks. Thus, the total number of qubits is $(2n_D+r^D_Z)n_F+2n_Dr_F = O(n_Dn_F)$. In this circuit, all operation layers have depth one, except for two layers of parity-check measurements: the layer between the second and third columns, and the layer between the fourth and fifth columns. According to Lemma~\ref{lem:depth_error}, each of these layers has depth $O(1)$ because the deformed code is a qLDPC code and the F code is an LDPC code. Therefore, the overall circuit depth is $O(1)$. 

When encoding the state preparation circuit using a surface code, each qubit in the circuit is encoded in a surface code of distance $d_S$, requiring $(2d_S - 1)^2 = O(d_S^2)$ physical qubits. Since the circuit involves $O(n_D n_F)$ qubits, the total number of physical qubits is $O(n_D n_F d_S^2)$. Each operation on surface-code logical qubits is implemented using a physical circuit of depth $O(d_T)$. Therefore, the overall depth of the physical circuit for state preparation is $O(d_T) = O(d_S)$. 

\section{Parallelized code surgery realized through locally-testable state preparation}
\label{app:PCS+LTS}

\subsection{Parity-check measurements}

\begin{figure}[htbp]
\centering
\includegraphics[width=\linewidth]{./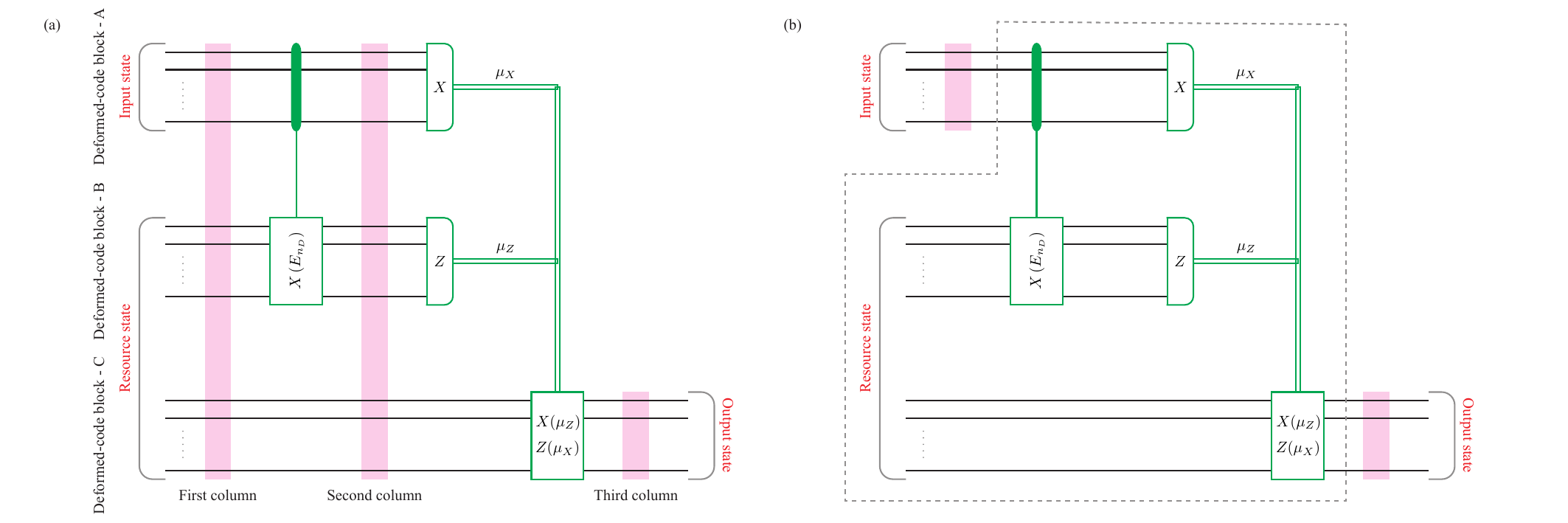}
\caption{
(a) Measurement of $Z(H^D_Z)$ based on the resource state. Outcomes of transversal measurements on blocks $A$ and $B$ are $\mu_X,\mu_Z \in \mathbb{F}_2^{n_D}$, respectively. Outcome of the $Z(H^D_Z)$ measurement is $H^D_Z \mu_Z$. 
(b) Effective error in the measurement of $Z(H^D_Z)$ based on the resource state. Operations in the dashed box are effectively error-free. 
}
\label{fig:measurement}
\end{figure}

With the resource state prepared on blocks $B$ and $C$, we can transfer the state of block-$A$ to block-$C$ and effectively realize the $Z(H^D_Z)$ measurement; see Fig.~\ref{fig:measurement}. 

\subsubsection{$X$ operators and $Z$ errors}

The propagation of $X$ operators in spacetime is described by the generator matrix 
\begin{eqnarray}
J^{pc}_X &=& \left(\begin{array}{ccc}
J^{pc}_{X,A} & J^{pc}_{X,B} & J^{pc}_{X,C}
\end{array}\right),
\end{eqnarray}
where 
\begin{eqnarray}
J^{pc}_{X,A} &=& \left(\begin{array}{cc}
H^D_X & H^D_X \\
J^D_X & J^D_X
\end{array}\right),
\end{eqnarray}
\begin{eqnarray}
J^{pc}_{X,B} &=& \left(\begin{array}{cc}
H^D_X & 0 \\
J^D_X & 0
\end{array}\right),
\end{eqnarray}
and 
\begin{eqnarray}
J^{pc}_{X,C} &=& \left(\begin{array}{ccc}
H^D_X & H^D_X & H^D_X \\
J^D_X & J^D_X & J^D_X
\end{array}\right),
\end{eqnarray}
correspond to spacetime locations on $A$, $B$, and $C$ blocks, respectively: The three columns correspond to spacetime locations as illustrated in Fig.~\ref{fig:measurement}. This generator matrix describes that operators $X(H^D_X) \otimes X(H^D_X) \otimes X(H^D_X)$ and $X(J^D_X) \otimes X(J^D_X) \otimes X(J^D_X)$ on three blocks at the beginning are transformed into operators $X(H^D_X)$ and $X(J^D_X)$ on block-$C$ at the end, respectively. Note that blocks $B$ and $C$ are prepared in the resource state, which is an eigenstate of operators $X(H^D_X) \otimes X(H^D_X)$ and $X(J^D_X) \otimes X(J^D_X)$ with the eigenvalue $+1$. Therefore, such a process effectively transforms operators $X(H^D_X)$ and $X(J^D_X)$ on block-$A$ into operators $X(H^D_X)$ and $X(J^D_X)$ on block-$C$, respectively. 

We represent $Z$ errors in spacetime with a vector 
\begin{eqnarray}
e^{pc}_Z &=& \left(\begin{array}{ccc}
e^{pc}_{Z,A} & e^{pc}_{Z,B} & e^{pc}_{Z,C}
\end{array}\right)
\end{eqnarray}
where 
\begin{eqnarray}
e^{pc}_{Z,A/B} &=& \left(\begin{array}{cc}
u_{A/B,1} & u_{A/B,2}
\end{array}\right)
\end{eqnarray}
and 
\begin{eqnarray}
e^{pc}_{Z,C} &=& \left(\begin{array}{ccc}
u_{C,1} & u_{C,2} & u_{C,3}
\end{array}\right).
\end{eqnarray}
The errors flip operators during the propagation according to 
\begin{eqnarray}
J^{pc}_X{e^{pc}_Z}^\mathrm{T} &=& \left(\begin{array}{c}
H^D_X \\
J^D_X
\end{array}\right) u_{eff}^\mathrm{T},
\end{eqnarray}
where 
\begin{eqnarray}
u_{eff} &=& u_{A,1} + u_{A,2} + u_{B,1} + u_{C,1} + u_{C,2} + u_{C,3}.
\label{eq:u_eff}
\end{eqnarray}

\begin{lemma}
Let $e^{pc}_Z$ be the $Z$ spacetime error in the $Z(H^D_Z)$ measurement based on the resource state. There exists an effective spacetime error in the form 
\begin{eqnarray}
e^{pc}_{Z,eff} &=& \left(\begin{array}{ccc}
0 & 0 & e^{pc}_{Z,C,eff}
\end{array}\right)
\end{eqnarray}
where 
\begin{eqnarray}
e^{pc}_{Z,C,eff} &=& \left(\begin{array}{ccc}
0 & 0 & u_{eff}
\end{array}\right),
\end{eqnarray}
such that the effective error $e^{pc}_{Z,eff}$ is equivalent to the error $e^{pc}_Z$, i.e.~$J^{pc}_X{e^{pc}_{Z,eff}}^\mathrm{T} = J^{pc}_X{e^{pc}_Z}^\mathrm{T}$, and $\vert e^{pc}_{Z,eff} \vert \leq \vert e^{pc}_Z \vert$. In the effective spacetime error, errors only occur on block-$C$ at the end. 
\label{lem:MeasurementZ}
\end{lemma}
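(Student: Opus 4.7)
The plan is to verify both claims in the lemma by direct construction, using only the explicit form of the propagation matrix $J^m_X$ given just before the lemma. The lemma essentially says that every $Z$ spacetime error in the measurement circuit can be pushed to the final output locations of block-$C$ without changing its effect on $X$ operators and without increasing its weight. This is the spacetime counterpart of the informal statement in the main text that "all errors can be interpreted as acting either on the input or output state."

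First I would verify equivalence. Observe that every column block of $J^m_X$---namely the two columns of $J^m_{X,A}$, the two columns of $J^m_{X,B}$, and all three columns of $J^m_{X,C}$---is literally the same column $\bigl(\begin{smallmatrix} H^D_X \\ J^D_X \end{smallmatrix}\bigr)$ (with the second column of $J^m_{X,B}$ being zero). Hence the product $J^m_X {e^m_Z}^\mathrm{T}$ is already given by the lemma's own formula $\bigl(\begin{smallmatrix} H^D_X \\ J^D_X \end{smallmatrix}\bigr) u_{eff}^\mathrm{T}$, where $u_{eff}$ is the sum of the nontrivially contributing error components. Substituting the proposed $e^m_{Z,eff} = (0,0,0,0,0,0,u_{eff})$ (i.e., placing $u_{eff}$ in the third column of block-$C$) into $J^m_X$ gives exactly the same expression, since only $J^m_{X,C}$ contributes and its third column is again $\bigl(\begin{smallmatrix} H^D_X \\ J^D_X \end{smallmatrix}\bigr)$. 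This establishes $J^m_X {e^m_{Z,eff}}^\mathrm{T} = J^m_X {e^m_Z}^\mathrm{T}$.

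Next I would establish the weight bound by the triangle inequality for Hamming weight. From Eq.~(\ref{eq:u_eff}),
\begin{eqnarray}
\vert u_{eff} \vert &\leq& \vert u_{A,1} \vert + \vert u_{A,2} \vert + \vert u_{B,1} \vert + \vert u_{C,1} \vert + \vert u_{C,2} \vert + \vert u_{C,3} \vert.
\end{eqnarray}
The right-hand side is bounded above by $\vert e^m_Z \vert$, which additionally includes $\vert u_{B,2} \vert$. Since $e^m_{Z,eff}$ is supported only on the block-$C$ output with value $u_{eff}$, we have $\vert e^m_{Z,eff} \vert = \vert u_{eff} \vert \leq \vert e^m_Z \vert$, as required.

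There is no real obstacle here; the lemma is essentially a bookkeeping identity made easy by the fact that all nonzero columns of $J^m_X$ are identical. The only point worth flagging in the write-up is the physical interpretation---namely, that the equivalence is possible because blocks $B$ and $C$ were initialized in the resource state stabilized by $X(H^D_X)\otimes X(H^D_X)$ and $X(J^D_X)\otimes X(J^D_X)$, so that any $Z$ error accumulated during the protocol can be transported to the output qubits of $C$ using those stabilizers. This lemma then feeds directly into the overall fault-tolerance argument by reducing gate-teleportation-based measurement to an effectively error-free measurement sandwiched between input and output errors, as depicted in Fig.~\ref{fig:measurement}(b).
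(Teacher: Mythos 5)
Your proposal is correct and takes essentially the same route as the paper's proof, which simply takes $u_{eff}$ as in Eq.~(\ref{eq:u_eff}) and observes that the equivalence and the weight bound $\vert e^{m}_{Z,eff}\vert \leq \vert e^{m}_Z\vert$ follow directly. Your added observations---that every nonzero column block of $J^{m}_X$ equals $\bigl(\begin{smallmatrix} H^D_X \\ J^D_X \end{smallmatrix}\bigr)$ (with $u_{B,2}$ correctly dropped because the second column of $J^{m}_{X,B}$ is zero) and that the Hamming-weight triangle inequality gives the bound---are exactly the bookkeeping the paper leaves implicit.
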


\begin{proof}
In the effective spacetime error, we take $u_{eff}$ according to Eq.~(\ref{eq:u_eff}). Then, $J^{pc}_X{e^{pc}_{Z,eff}}^\mathrm{T} = J^{pc}_X{e^{pc}_Z}^\mathrm{T}$ is true, and $\vert e^{pc}_{Z,eff} \vert \leq \vert e^{pc}_Z \vert$. 
\end{proof}

\subsubsection{$Z$ operators and $X$ errors}

The propagation of unmeasured $Z$ operators in spacetime is described by the generator matrix 
\begin{eqnarray}
J^{pc}_Z &=& \left(\begin{array}{ccc}
J^{pc}_{Z,A} & J^{pc}_{Z,B} & J^{pc}_{Z,C}
\end{array}\right),
\end{eqnarray}
where 
\begin{eqnarray}
J^{pc}_{Z,A} &=& \left(\begin{array}{cc}
{(H^D_X)^\mathrm{r}}^\mathrm{T} & 0 \\
J^D_Z & 0
\end{array}\right),
\end{eqnarray}
\begin{eqnarray}
J^{pc}_{Z,B} &=& \left(\begin{array}{cc}
{(H^D_X)^\mathrm{r}}^\mathrm{T} & {(H^D_X)^\mathrm{r}}^\mathrm{T} \\
J^D_Z & J^D_Z
\end{array}\right),
\end{eqnarray}
and 
\begin{eqnarray}
J^{pc}_{Z,C} &=& \left(\begin{array}{ccc}
{(H^D_X)^\mathrm{r}}^\mathrm{T} & {(H^D_X)^\mathrm{r}}^\mathrm{T} & {(H^D_X)^\mathrm{r}}^\mathrm{T} \\
J^D_Z & J^D_Z & J^D_Z
\end{array}\right).
\end{eqnarray}
Similar to $J^{pc}_X$, the generator matrix $J^{pc}_Z$ describes the transformation of operators $Z\left({(H^D_X)^\mathrm{r}}^\mathrm{T}\right)$ and $Z(J^D_Z)$ on block-$A$ into operators $Z\left({(H^D_X)^\mathrm{r}}^\mathrm{T}\right)$ and $Z(J^D_Z)$ on block-$C$, respectively. 

The propagation of measured $Z$ operators in spacetime is described by the generator matrix 
\begin{eqnarray}
J^{pc}_{mz} &=& \left(\begin{array}{ccc}
J^{pc}_{mz,A} & J^{pc}_{mz,B} & J^{pc}_{mz,C}
\end{array}\right)
\end{eqnarray}
where 
\begin{eqnarray}
J^{pc}_{mz,A} &=& \left(\begin{array}{cc}
H^D_Z & 0
\end{array}\right),
\end{eqnarray}
\begin{eqnarray}
J^{pc}_{mz,B} &=& \left(\begin{array}{cc}
H^D_Z & H^D_Z
\end{array}\right),
\end{eqnarray}
and 
\begin{eqnarray}
J^{pc}_{mz,C} &=& \left(\begin{array}{ccc}
H^D_Z & H^D_Z & H^D_Z
\end{array}\right).
\end{eqnarray}
This generator matrix describes the transformation of operators $Z(H^D_Z)$ on block-$A$ into operators $Z(H^D_Z)$ on block-$C$. 

Outcome of the $Z(H^D_Z)$ measurement is justified by the following operator propagation, 
\begin{eqnarray}
J^{pc}_{oc} &=& \left(\begin{array}{ccc}
J^{pc}_{oc} & J^{pc}_{oc,B} & J^{pc}_{oc,C}
\end{array}\right)
\end{eqnarray}
where 
\begin{eqnarray}
J^{pc}_{oc,A} &=& \left(\begin{array}{cc}
H^D_Z & 0
\end{array}\right),
\end{eqnarray}
\begin{eqnarray}
J^{pc}_{oc,B} &=& \left(\begin{array}{cc}
H^D_Z & H^D_Z
\end{array}\right),
\end{eqnarray}
and 
\begin{eqnarray}
J^{pc}_{oc,C} &=& \left(\begin{array}{ccc}
0 & 0 & 0
\end{array}\right).
\end{eqnarray}
Note that the resource state is an eigenstate of operators $Z(H^D_Z)$ on block-$B$. This generator matrix describes the transformation of operators $Z(H^D_Z)$ on block-$A$ into operators $Z(H^D_Z)$ on block-$B$, which is subsequently measured in the transversal measurement on block-$B$. 

We represent $X$ errors in spacetime with a vector 
\begin{eqnarray}
e^{pc}_X &=& \left(\begin{array}{ccc}
e^{pc}_{X,A} & e^{pc}_{X,B} & e^{pc}_{X,C}
\end{array}\right)
\end{eqnarray}
where 
\begin{eqnarray}
e^{pc}_{X,A/B} &=& \left(\begin{array}{cc}
u_{A/B,1} & u_{A/B,2}
\end{array}\right)
\end{eqnarray}
and 
\begin{eqnarray}
e^{pc}_{X,C} &=& \left(\begin{array}{ccc}
u_{C,1} & u_{C,2} & u_{C,3}
\end{array}\right).
\end{eqnarray}
The errors flip operators and measurement outcomes during the propagation according to 
\begin{eqnarray}
J^{pc}_Z{e^{pc}_X}^\mathrm{T} &=& \left(\begin{array}{c}
{(H^D_X)^\mathrm{r}}^\mathrm{T} \\
J^D_Z
\end{array}\right) (u_{A,eff} + u_{C,eff})^\mathrm{T}, \\
J^{pc}_{mz}{e^{pc}_X}^\mathrm{T} &=& H^D_Z (u_{A,eff} + u_{C,eff})^\mathrm{T}, \\
J^{pc}_{oc}{e^{pc}_X}^\mathrm{T} &=& H^D_Z u_{A,eff}^\mathrm{T},
\end{eqnarray}
where 
\begin{eqnarray}
u_{A,eff} &=& u_{A,1} + u_{B,1} + u_{B,2}, \label{eq:u_A_eff} \\
u_{C,eff} &=& u_{C,1} + u_{C,2} + u_{C,3}.
\label{eq:u_C_eff}
\end{eqnarray}

\begin{lemma}
\label{lem:MeasurementX}
Let $e^{pc}_X$ be the $X$ spacetime error in the $Z(H^D_Z)$ measurement based on the resource state. There exists an effective spacetime error in the form 
\begin{eqnarray}
e^{pc}_{X,eff} &=& \left(\begin{array}{ccc}
e^{pc}_{X,A,eff} & 0 & e^{pc}_{X,C,eff}
\end{array}\right)
\end{eqnarray}
where 
\begin{eqnarray}
e^{pc}_{X,A,eff} &=& \left(\begin{array}{ccc}
u_{A,eff} & 0
\end{array}\right), \\
e^{pc}_{X,C,eff} &=& \left(\begin{array}{ccc}
0 & 0 & u_{C,eff}
\end{array}\right),
\end{eqnarray}
such that the effective error $e^{pc}_{X,eff}$ is equivalent to the error $e^{pc}_X$, i.e. 
\begin{eqnarray}
\left(\begin{array}{c}
J^{pc}_Z \\
J^{pc}_{mz} \\
J^{pc}_{oc}
\end{array}\right) {e^{pc}_{X,eff}}^\mathrm{T} &=& \left(\begin{array}{c}
J^{pc}_Z \\
J^{pc}_{mz} \\
J^{pc}_{oc}
\end{array}\right) {e^{pc}_X}^\mathrm{T},
\label{eq:effective_X_error}
\end{eqnarray}
and $\vert e^{pc}_{X,eff} \vert \leq \vert e^{pc}_X \vert$. In the effective spacetime error, errors only occur on block-$A$ at the beginning and block-$C$ at the end. 
\label{lem:MZ}
\end{lemma}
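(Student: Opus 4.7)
The plan is to prove Lemma~\ref{lem:MeasurementX} by directly constructing the effective error and then verifying the three equalities in Eq.~(\ref{eq:effective_X_error}) together with the weight bound, following the same strategy used in Lemma~\ref{lem:MeasurementZ}. Specifically, I would define $e^{m}_{X,eff}$ by placing the combination $u_{A,eff}$ from Eq.~(\ref{eq:u_A_eff}) at the first sub-location of block-$A$ and the combination $u_{C,eff}$ from Eq.~(\ref{eq:u_C_eff}) at the last sub-location of block-$C$, leaving all other sub-locations zero. This immediately gives the form asserted in the statement; the remaining content is the verification of Eq.~(\ref{eq:effective_X_error}) and the weight inequality.

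The bulk of the proof is a direct block-matrix computation. Using the explicit expressions for $J^{m}_{Z,A},J^{m}_{Z,B},J^{m}_{Z,C}$, for $J^{m}_{mz,A},J^{m}_{mz,B},J^{m}_{mz,C}$, and for $J^{m}_{oc,A},J^{m}_{oc,B},J^{m}_{oc,C}$, I would evaluate each of $J^{m}_{Z}{e^{m}_{X,eff}}^\mathrm{T}$, $J^{m}_{mz}{e^{m}_{X,eff}}^\mathrm{T}$, and $J^{m}_{oc}{e^{m}_{X,eff}}^\mathrm{T}$, and compare them term-by-term with the already-derived expressions for the original error. For $J^{m}_Z$ and $J^{m}_{mz}$, the displayed formulas in the surrounding text show that the action on $e^{m}_X$ depends only on the sum $u_{A,eff}+u_{C,eff}$, which is exactly what $e^{m}_{X,eff}$ also produces, since $J^{m}_{Z,A}$ applied to $(u_{A,eff},0)$ yields $((H^D_X)^{\mathrm{r}\,\mathrm{T}}u_{A,eff}^\mathrm{T},\,J^D_Z u_{A,eff}^\mathrm{T})^\mathrm{T}$ and $J^{m}_{Z,C}$ applied to $(0,0,u_{C,eff})$ yields the analogue with $u_{C,eff}$, and the two sum correctly. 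For $J^{m}_{oc}$, the block-$C$ column vanishes ($J^{m}_{oc,C}=0$), so the original computation yields $H^D_Z u_{A,eff}^\mathrm{T}$; placing $u_{A,eff}$ at block-$A$ in $e^{m}_{X,eff}$ reproduces this exactly through $J^{m}_{oc,A}(u_{A,eff},0)^\mathrm{T}=H^D_Z u_{A,eff}^\mathrm{T}$. All three identities then hold.

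The weight bound is almost immediate. The seven sub-errors $u_{A,1},u_{A,2},u_{B,1},u_{B,2},u_{C,1},u_{C,2},u_{C,3}$ sit at disjoint spacetime locations of the circuit, so by the triangle inequality $\vert u_{A,eff}\vert +\vert u_{C,eff}\vert \leq \vert u_{A,1}\vert +\vert u_{B,1}\vert +\vert u_{B,2}\vert +\vert u_{C,1}\vert +\vert u_{C,2}\vert +\vert u_{C,3}\vert \leq \vert e^{m}_X\vert$, and since these are the only nonzero entries of $e^{m}_{X,eff}$ by construction, one concludes $\vert e^{m}_{X,eff}\vert \leq \vert e^{m}_X\vert$. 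The main subtlety, rather than a real technical obstacle, is making sure that the split into $u_{A,eff}$ and $u_{C,eff}$ respects the asymmetric structure of $J^{m}_{oc}$: because $J^{m}_{oc,C}=0$, every error component whose propagation affects the measurement outcome must be absorbed into the block-$A$ side of the effective error, whereas components that only influence $J^{m}_Z$ and $J^{m}_{mz}$ may be freely placed on block-$C$. Once this asymmetry is identified, the partition in Eqs.~(\ref{eq:u_A_eff})~and~(\ref{eq:u_C_eff}) is forced, and the verification reduces to the routine calculation sketched above.
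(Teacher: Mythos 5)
Your proposal is correct and follows essentially the same route as the paper: the paper's proof likewise just takes $u_{A,eff}$ and $u_{C,eff}$ from Eqs.~(\ref{eq:u_A_eff}) and~(\ref{eq:u_C_eff}), relies on the block computation of $J^{m}_Z e^{m\,\mathrm{T}}_X$, $J^{m}_{mz} e^{m\,\mathrm{T}}_X$, $J^{m}_{oc} e^{m\,\mathrm{T}}_X$ already carried out in the surrounding text, and concludes the weight bound by the same disjoint-support/triangle-inequality observation. Your explicit verification of the three identities and your remark on the asymmetry forced by $J^{m}_{oc,C}=0$ merely spell out what the paper leaves implicit.
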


\begin{proof}
In the effective spacetime error, we take $u_{A,eff}$ and $u_{C,eff}$ according to Eqs.~(\ref{eq:u_A_eff})~and~(\ref{eq:u_C_eff}). Then, Eq.~(\ref{eq:effective_X_error}) is true, and $\vert e^{pc}_{X,eff} \vert \leq \vert e^{pc}_X \vert$. 
\end{proof}

\subsection{Code surgery}

\begin{figure}[htbp]
\centering
\includegraphics[width=\linewidth]{./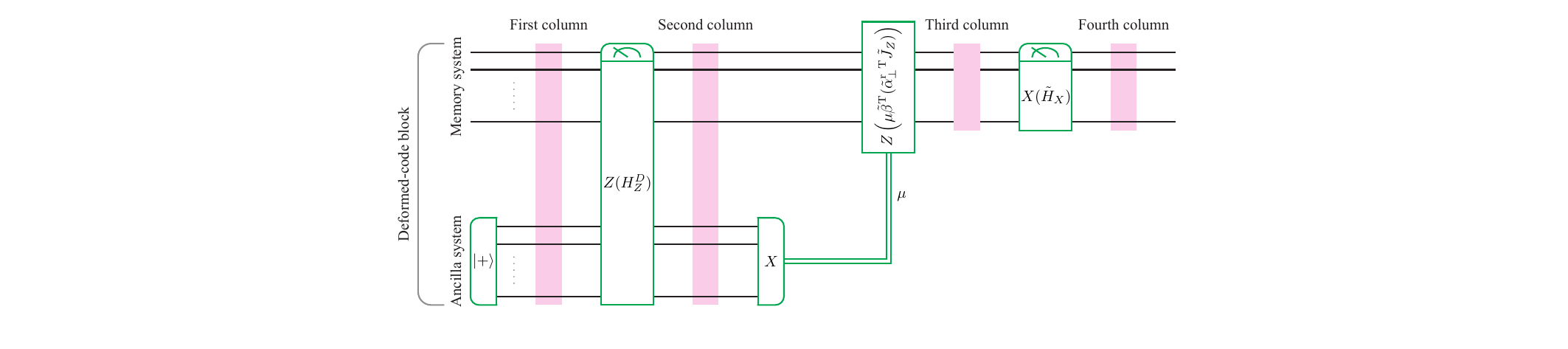}
\caption{
Code surgery. 
}
\label{fig:lattice_surgery}
\end{figure}

The circuit of code surgery is illustrated in Fig.~\ref{fig:lattice_surgery}, which is applied on $k_F$ blocks of the code $(H_X,H_Z,J_X,J_Z)$, called memory system. Check and generator matrices of the memory system are $(\tilde{H}_X,\tilde{H}_Z,\tilde{J}_X,\tilde{J}_Z)$. This code surgery realizes the measurement of $Z(\tilde{\alpha} \tilde{J}_Z)$. 

The input state of the memory system may carry errors. We model them as errors occurring at locations on the memory system in the first column; see Fig.~\ref{fig:lattice_surgery}. Accordingly, the effective input state (the state of the memory system before the first column) is in the codeword subspace defined by stabilizer operators $X(\tilde{H}_X)$ and $Z(\tilde{H}_Z)$. 

We realize the measurements of $Z(H^D_Z)$ and $X(\tilde{H}_X)$ with corresponding resource states. According to Lemmas~\ref{lem:MeasurementZ}~and~\ref{lem:MZ}, errors effectively occur on the input and output states of these measurements, i.e.~outcomes of these measurements are effectively error-free; see Fig.~\ref{fig:measurement}(b). 

\subsubsection{$X$ operators and $Z$ errors}

The propagation of $X$ operators in spacetime is described by the generator matrix 
\begin{eqnarray}
J^{cs}_X &=& \left(\begin{array}{ccc}
J^{cs}_{X,M} & J^{cs}_{X,A} & J^{cs}_{X,mea}
\end{array}\right),
\end{eqnarray}
where 
\begin{eqnarray}
J^{cs}_{X,M} &=& \left(\begin{array}{cccc}
\tilde{\alpha}_\perp \tilde{J}_X & \tilde{\alpha}_\perp \tilde{J}_X & \tilde{\alpha}_\perp \tilde{J}_X & \tilde{\alpha}_\perp \tilde{J}_X
\end{array}\right)
\end{eqnarray}
and 
\begin{eqnarray}
J^{cs}_{X,A} &=& \left(\begin{array}{cc}
\tilde{\beta} & \tilde{\beta}
\end{array}\right)
\end{eqnarray}
correspond to spacetime locations on the memory and ancilla systems, respectively, and 
\begin{eqnarray}
J^{cs}_{X,mea} &=& 0
\end{eqnarray}
corresponds to the measurement of $X(\tilde{H}_X)$. The columns in $J^{cs}_{X,M}$ and $J^{cs}_{X,A}$ correspond to spacetime locations as illustrated in Fig.~\ref{fig:lattice_surgery}. This generator matrix describes that operators $X(\tilde{\alpha}_\perp \tilde{J}_X) \otimes X(\tilde{\beta})$ on two systems at the beginning are transformed into operators $X(\tilde{\alpha}_\perp \tilde{J}_X)$ on the memory system at the end. Note that operators $X(\tilde{\alpha}_\perp \tilde{J}_X) \otimes X(\tilde{\beta}) = X(J^D_X)$ commute with the measurement of $Z(H^D_Z)$. Because qubits in the ancilla system are prepared in the state $\ket{+}$, such a process effectively implements a trivial transformation on operators $X(\tilde{\alpha}_\perp \tilde{J}_X)$. 

We detect errors in the following way. Let $\tilde{\nu} \in \mathbb{F}_2^{k_F r_X}$ be the measurement outcome of $X(\tilde{H}_X)$. We introduce the vector 
\begin{eqnarray}
\nu &=& \left(\begin{array}{cc}
\tilde{\nu} + \mu \tilde{T}^\mathrm{T} & \mu \tilde{H}_M^\mathrm{T}
\end{array}\right).
\end{eqnarray}
Then, $(-1)^{\nu_j}$ is the eigenvalue of the operator $X\left((H^D_X)_{j,\bullet}\right)$. Note that operators $X(\tilde{H}_X)$ [$X(H^D_X)$] commute with the feedback gate [measurement of $Z(H^D_X)$]. Therefore, we have $\nu = 0$ if the circuit is error-free. Accordingly, the check matrix is 
\begin{eqnarray}
H^{cs}_X &=& \left(\begin{array}{ccc}
H^{cs}_{X,M} & H^{cs}_{X,A} & H^{cs}_{X,mea}
\end{array}\right),
\end{eqnarray}
where 
\begin{eqnarray}
H^{cs}_{X,M} &=& \left(\begin{array}{cccc}
\tilde{H}_X & \tilde{H}_X & \tilde{H}_X & 0 \\
0 & 0 & 0 & 0
\end{array}\right), \\
H^{cs}_{X,A} &=& \left(\begin{array}{cc}
\tilde{T} & \tilde{T} \\
\tilde{H}_M & \tilde{H}_M
\end{array}\right),
\end{eqnarray}
and 
\begin{eqnarray}
H^{cs}_{X,mea} &=& \left(\begin{array}{c}
E_{k_F r_X} \\
0
\end{array}\right).
\end{eqnarray}

We represent $Z$ errors in spacetime with the vector 
\begin{eqnarray}
e^{cs}_Z &=& \left(\begin{array}{ccc}
e^{cs}_{Z,M} & e^{cs}_{Z,A} & 0
\end{array}\right)
\end{eqnarray}
represents errors before the $X(\tilde{H}_X)$ measurement, where 
\begin{eqnarray}
e^{cs}_{Z,M} &=& \left(\begin{array}{cccc}
u_{M,1} & u_{M,2} & u_{M,3} & u_{M,4}
\end{array}\right)
\end{eqnarray}
and 
\begin{eqnarray}
e^{cs}_{Z,A} &=& \left(\begin{array}{cc}
u_{A,1} & u_{A,2}
\end{array}\right).
\end{eqnarray}
We decompose this spacetime error into two components: 
\begin{eqnarray}
e^{cs-b}_Z &=& \left(\begin{array}{ccccccc}
u_{M,1} & u_{M,2} & u_{M,3} & 0 & u_{A,1} & u_{A,2} & 0
\end{array}\right)
\end{eqnarray}
represents errors occurring before the $X(\tilde{H}_X)$ measurement, and 
\begin{eqnarray}
e^{cs-a}_Z &=& \left(\begin{array}{ccccccc}
0 & 0 & 0 & u_{M,4} & 0 & 0 & 0
\end{array}\right)
\end{eqnarray}
represents errors occurring after the $X(\tilde{H}_X)$ measurement. Note that $e^{cs}_Z = e^{cs-b}_Z + e^{cs-a}_Z$. The errors flip operators during the propagation according to 
\begin{eqnarray}
J^{cs}_X {e^{cs}_Z}^\mathrm{T} &=& J^D_X u_{eff}^\mathrm{T} + (\tilde{\alpha}_\perp \tilde{J}_X) u_{res}^\mathrm{T},
\end{eqnarray}
where 
\begin{eqnarray}
u_{eff} &=& \left(\begin{array}{cc}
u_{M,1} + u_{M,2} + u_{M,3} & u_{A,1} + u_{A,2}
\end{array}\right), \\
u_{res} &=& u_{M,4}.
\end{eqnarray}

If the spacetime error $e^{cs}_Z$ is undetectable through $H^{cs}_X$, it satisfies the condition $H^{cs}_X {e^{cs}_Z}^\mathrm{T} = 0$. This condition can be rewritten as $H^D_X u_{eff}^\mathrm{T} = 0$. 

\begin{lemma}
Let $e^{cs-b}_Z$ and $e^{cs-a}_Z$ be the spacetime error vectors representing $Z$ errors that occur before and after the $X(\tilde{H}_X)$ measurement, respectively, during code surgery. If errors are undetectable and satisfy $\vert e^{cs-b}_Z \vert < d_D$, then the weight of the residual $Z$ errors on the output state of the memory system is upper bounded by $\vert e^{cs-p}_Z \vert$. 
\label{lem:csmZ}
\end{lemma}

\begin{proof}
Note that $d_D$ is the distance of the deformed code. Because $\vert u_{eff} \vert \leq \vert e^{cs-b}_Z \vert < d_D$ and $H^D_X u_{eff}^\mathrm{T} = 0$, we have $J^D_X u_{eff}^\mathrm{T} = 0$. Then, 
\begin{eqnarray}
J^{cs}_X {e^{cs}_Z}^\mathrm{T} &=& (\tilde{\alpha}_\perp \tilde{J}_X) u_{res}^\mathrm{T},
\end{eqnarray}
i.e.~the vector $u_{res}$ represents the residual errors, and $\vert u_{res} \vert = \vert e^{cs-a}_Z \vert$. 
\end{proof}

\subsubsection{$Z$ operators and $X$ errors}

The propagation of unmeasured $Z$ operators in spacetime is described by the generator matrix 
\begin{eqnarray}
J^{cs}_Z &=& \left(\begin{array}{ccc}
J^{cs}_{Z,M} & J^{cs}_{Z,A} & J^{cs}_{Z,mea}
\end{array}\right),
\end{eqnarray}
where 
\begin{eqnarray}
J^{cs}_{Z,M} &=& \left(\begin{array}{cccc}
\tilde{\alpha}_\perp^\mathrm{r}{}^\mathrm{T} \tilde{J}_Z & \tilde{\alpha}_\perp^\mathrm{r}{}^\mathrm{T} \tilde{J}_Z & \tilde{\alpha}_\perp^\mathrm{r}{}^\mathrm{T} \tilde{J}_Z & \tilde{\alpha}_\perp^\mathrm{r}{}^\mathrm{T} \tilde{J}_Z
\end{array}\right)
\end{eqnarray}
and 
\begin{eqnarray}
J^{cs}_{Z,A} &=& \left(\begin{array}{cc}
0 & 0
\end{array}\right)
\end{eqnarray}
correspond to spacetime locations on the memory and ancilla systems, respectively, and 
\begin{eqnarray}
J^{cs}_{Z,mea} &=& 0
\end{eqnarray}
corresponds to the measurement of $Z(H^D_Z)$. Note that operators $Z(\tilde{\alpha}_\perp^\mathrm{r}{}^\mathrm{T} \tilde{J}_Z)$ commute with the measurement of $X(\tilde{H}_X)$. 

Similarly, the propagation of measured $Z$ operators in spacetime is described by the generator matrix 
\begin{eqnarray}
J^{cs}_{mz} &=& \left(\begin{array}{ccc}
J^{cs}_{mz,M} & J^{cs}_{mz,A} & J^{cs}_{mz,mea}
\end{array}\right),
\end{eqnarray}
where 
\begin{eqnarray}
J^{cs}_{mz,M} &=& \left(\begin{array}{cccc}
\tilde{\alpha} \tilde{J}_Z & \tilde{\alpha} \tilde{J}_Z & \tilde{\alpha} \tilde{J}_Z & \tilde{\alpha} \tilde{J}_Z
\end{array}\right), 
\end{eqnarray}
\begin{eqnarray}
J^{cs}_{mz,A} &=& \left(\begin{array}{cc}
0 & 0
\end{array}\right), 
\end{eqnarray}
and 
\begin{eqnarray}
J^{cs}_{mz,mea} &=& 0.
\end{eqnarray}

Outcome of the $Z(\tilde{\alpha} \tilde{J}_Z)$ measurement is described by the following generator matrix, 
\begin{eqnarray}
J^{cs}_{oc} &=& \left(\begin{array}{ccc}
J^{cs}_{oc,M} & J^{cs}_{oc,A} & J^{cs}_{oc,mea}
\end{array}\right),
\end{eqnarray}
where 
\begin{eqnarray}
J^{cs}_{oc,M} &=& \left(\begin{array}{cccc}
\tilde{\alpha} \tilde{J}_Z & 0 & 0 & 0
\end{array}\right), 
\end{eqnarray}
\begin{eqnarray}
J^{cs}_{oc,A} &=& \left(\begin{array}{cc}
0 & 0
\end{array}\right), 
\end{eqnarray}
and 
\begin{eqnarray}
J^{cs}_{oc,mea} &=& \tilde{\alpha} \tilde{J}_Z \tilde{R} \gamma_2.
\end{eqnarray}
Here, 
\begin{eqnarray}
\gamma_1 = \left(\begin{array}{cc}
E_{k_R r_Z} & 0
\end{array}\right)
\end{eqnarray}
and 
\begin{eqnarray}
\gamma_2 = \left(\begin{array}{cc}
0 & E_{r^D_Z - k_R r_Z}
\end{array}\right).
\end{eqnarray}
Note that the measurement of $Z(H^D_Z)$ measures operators $Z(\tilde{\alpha} \tilde{J}_Z)$ because 
\begin{eqnarray}
\tilde{\alpha} \tilde{J}_Z \tilde{R} \gamma_2 H^D_Z = \left(\begin{array}{cc}
\tilde{\alpha} \tilde{J}_Z & 0
\end{array}\right).
\end{eqnarray}
Let $\nu \in \mathbb{F}_2^{r^D_Z}$ be the outcome of the $Z(H^D_Z)$ measurement. Then, $\nu \gamma_{1/2}^\mathrm{T}$ is the outcome of $Z(\gamma_{1/2} H^D_Z)$ operators, and $\nu (\tilde{\alpha} \tilde{J}_Z \tilde{R} \gamma_2)^\mathrm{T}$ is the outcome of $Z(\tilde{\alpha} \tilde{J}_Z)$ operators. 

If the circuit is error-free, we have $\nu \gamma_1^\mathrm{T} = 1$. Accordingly, the check matrix is 
\begin{eqnarray}
H^{cs}_Z &=& \left(\begin{array}{ccc}
H^{cs}_{Z,M} & H^{cs}_{Z,A} & H^{cs}_{Z,mea}
\end{array}\right),
\end{eqnarray}
where 
\begin{eqnarray}
H^{cs}_{Z,M} &=& \left(\begin{array}{cccc}
\tilde{H}_Z & 0 & 0 & 0
\end{array}\right), \\
H^{cs}_{Z,A} &=& \left(\begin{array}{cc}
0 & 0
\end{array}\right),
\end{eqnarray}
and 
\begin{eqnarray}
H^{cs}_{Z,mea} &=& \gamma_1.
\end{eqnarray}

We represent $X$ errors in spacetime with the vector 
\begin{eqnarray}
e^{cs}_X &=& \left(\begin{array}{ccc}
e^{cs}_{X,M} & e^{cs}_{X,A} & 0
\end{array}\right)
\end{eqnarray}
represents errors before the $X(\tilde{H}_X)$ measurement, where 
\begin{eqnarray}
e^{cs}_{X,M} &=& \left(\begin{array}{cccc}
u_{M,1} & u_{M,2} & u_{M,3} & u_{M,4}
\end{array}\right)
\end{eqnarray}
and 
\begin{eqnarray}
e^{cs}_{X,A} &=& \left(\begin{array}{cc}
u_{A,1} & u_{A,2}
\end{array}\right).
\end{eqnarray}
Similar to $Z$ errors, we decompose this spacetime error into two components: 
\begin{eqnarray}
e^{cs-b}_X &=& \left(\begin{array}{ccccccc}
u_{M,1} & 0 & 0 & 0 & u_{A,1} & 0 & 0
\end{array}\right)
\end{eqnarray}
represents errors occurring before the $Z(H^D_Z)$ measurement, and 
\begin{eqnarray}
e^{cs-a}_X &=& \left(\begin{array}{ccccccc}
0 & u_{M,2} & u_{M,3} & u_{M,4} & 0 & u_{A,2} & 0
\end{array}\right)
\end{eqnarray}
represents errors occurring after the $Z(H^D_Z)$ measurement. The errors flip operators and measurement outcomes during the propagation according to 
\begin{eqnarray}
J^{cs}_Z {e^{cs}_X}^\mathrm{T} &=& (\tilde{\alpha}_\perp^\mathrm{r}{}^\mathrm{T} \tilde{J}_Z) u_{eff}^\mathrm{T} + (\tilde{\alpha}_\perp^\mathrm{r}{}^\mathrm{T} \tilde{J}_Z) u_{res}^\mathrm{T}, \\
J^{cs}_{mz} {e^{cs}_X}^\mathrm{T} &=& (\tilde{\alpha} \tilde{J}_Z) u_{eff}^\mathrm{T} + (\tilde{\alpha} \tilde{J}_Z) u_{res}^\mathrm{T}, \\
J^{cs}_{oc} {e^{cs}_X}^\mathrm{T} &=& (\tilde{\alpha} \tilde{J}_Z) u_{eff}^\mathrm{T}, \\
\end{eqnarray}
where 
where 
\begin{eqnarray}
u_{eff} &=& u_{M,1}, \\
u_{res} &=& u_{M,2} + u_{M,3} + u_{M,4}.
\end{eqnarray}

If the spacetime error $e^{cs}_X$ is undetectable through $H^{cs}_Z$, it satisfies the condition $H^{cs}_Z {e^{cs}_X}^\mathrm{T} = 0$. This condition can be rewritten as $\tilde{H}_Z u_{eff}^\mathrm{T} = 0$. 

\begin{lemma}
Let $e^{cs-b}_X$ and $e^{cs-a}_X$ be the spacetime error vectors representing $X$ errors that occur before and after the $Z(H^D_Z)$ measurement, respectively, during code surgery. If errors are undetectable and satisfy $\vert e^{cs-b}_X \vert < d$, then the measurement outcome of $Z(\tilde{\alpha} \tilde{J}_Z)$ is correct, and the weight of the residual $X$ errors on the output state of the memory system is upper bounded by $\vert e^{cs-p}_X \vert$. 
\label{lem:csmX}
\end{lemma}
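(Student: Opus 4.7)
The plan is to mirror the argument used for Lemma~\ref{lem:csmz}, with the roles of $X$ and $Z$ stabilizers interchanged, exploiting the symmetry between the two analyses carried out in the previous subsection. The key observation, already built into the decomposition of $e^{ls}_X$, is that the effective propagated error $u_{eff} = u_{M,1}$ arises entirely from the ``before'' component $e^{ls-b}_X$, while the residual error $u_{res} = u_{M,2} + u_{M,3} + u_{M,4}$ arises entirely from the ``after'' component $e^{ls-a}_X$. In particular, $\vert u_{eff} \vert \leq \vert e^{ls-b}_X \vert$ and $\vert u_{res} \vert \leq \vert e^{ls-a}_X \vert$.

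First, I would invoke the undetectability hypothesis $H^{ls}_Z (e^{ls}_X)^{\mathrm{T}} = 0$, which, after inspecting $H^{ls}_{Z,M}$, $H^{ls}_{Z,A}$, and $H^{ls}_{Z,mea}$, reduces to the single condition $\tilde{H}_Z u_{eff}^{\mathrm{T}} = 0$. That is, $X(u_{eff})$ commutes with every $Z$ stabilizer of the memory system. Combining this with the size bound $\vert u_{eff} \vert \leq \vert e^{ls-b}_X \vert < d$ and the fact that the memory system $(\tilde{H}_X, \tilde{H}_Z, \tilde{J}_X, \tilde{J}_Z) = (E_{k_F} \otimes H_X, E_{k_F} \otimes H_Z, E_{k_F} \otimes J_X, E_{k_F} \otimes J_Z)$ inherits distance $d$ from the target code, I conclude that $X(u_{eff})$ must lie in the memory-system stabilizer group, hence $\tilde{J}_Z u_{eff}^{\mathrm{T}} = 0$.

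Next, I would establish the two conclusions separately. For the correctness of the measurement outcome, the flip contributed by the error is $J^{ls}_{oc} (e^{ls}_X)^{\mathrm{T}} = (\tilde{\alpha}\tilde{J}_Z) u_{eff}^{\mathrm{T}} = \tilde{\alpha}(\tilde{J}_Z u_{eff}^{\mathrm{T}}) = 0$, so the readout of $Z(\tilde{\alpha}\tilde{J}_Z)$ from the measurement outcome $\nu (\tilde{\alpha}\tilde{J}_Z \tilde{R} \gamma_2)^{\mathrm{T}}$ is not disturbed. For the residual error on the output memory state, the action on unmeasured $Z$ logical operators is
\begin{equation}
J^{ls}_Z (e^{ls}_X)^{\mathrm{T}} = (\tilde{\alpha}_\perp^{\mathrm{r}}{}^{\mathrm{T}} \tilde{J}_Z)(u_{eff} + u_{res})^{\mathrm{T}}.
\end{equation}
Since $\tilde{J}_Z u_{eff}^{\mathrm{T}} = 0$ (and in fact $\tilde{\alpha}_\perp^{\mathrm{r}}{}^{\mathrm{T}} \tilde{J}_Z u_{eff}^{\mathrm{T}} = 0$ follows from this), the effective residual is $(\tilde{\alpha}_\perp^{\mathrm{r}}{}^{\mathrm{T}} \tilde{J}_Z) u_{res}^{\mathrm{T}}$, i.e., the output carries an $X$ error equivalent, up to memory-code stabilizers, to $X(u_{res})$. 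Its weight is therefore upper bounded by $\vert u_{res} \vert \leq \vert e^{ls-a}_X \vert$, completing both parts.

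The main subtle point that I would have to check carefully is the transition from ``$X(u_{eff})$ is a stabilizer'' to ``$u_{eff}$ contributes no disturbance at all,'' both to the measurement outcome and to the residual error class on the output. This is exactly the dual of the step in Lemma~\ref{lem:csmz} that converts a low-weight undetected $Z$ error on the deformed code into a stabilizer; here we do not have the deformed-code distance $d_D$ at our disposal but rather the memory-code distance $d$, which is why the hypothesis is $\vert e^{ls-b}_X \vert < d$ rather than $< d_D$. The rest of the proof is just the bookkeeping of which pieces of $e^{ls}_X$ feed into $u_{eff}$ versus $u_{res}$, which is immediate from the explicit generator matrices $J^{ls}_X$, $J^{ls}_Z$, $J^{ls}_{mz}$, $J^{ls}_{oc}$ written down above.
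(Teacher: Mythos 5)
Your proposal is correct and follows essentially the same route as the paper's proof: you use the undetectability condition to reduce to $\tilde{H}_Z u_{eff}^{\mathrm{T}} = 0$, combine it with $\vert u_{eff} \vert \leq \vert e^{ls-b}_X \vert < d$ and the memory-system distance to conclude $\tilde{J}_Z u_{eff}^{\mathrm{T}} = 0$, and then read off the correctness of the outcome from $J^{ls}_{oc}$ and the residual $u_{res}$ (with $\vert u_{res} \vert \leq \vert e^{ls-a}_X \vert$) from $J^{ls}_Z$ and $J^{ls}_{mz}$, exactly as the paper does. The only cosmetic difference is that you phrase the intermediate conclusion as ``$X(u_{eff})$ lies in the stabilizer group,'' whereas only the weaker statement $\tilde{J}_Z u_{eff}^{\mathrm{T}} = 0$ is needed and used.
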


\begin{proof}
Note that $d$ is the distance of the code $(H_X,H_Z,J_X,J_Z)$, which is also the distance of the code $(\tilde{H}_X,\tilde{H}_Z,\tilde{J}_X,\tilde{J}_Z)$. Because $\vert u_{eff} \vert \leq \vert e^{cs-b}_X \vert < d$ and $\tilde{H}_Z u_{eff}^\mathrm{T} = 0$, we have $\tilde{J}_Z u_{eff}^\mathrm{T} = 0$. Then, 
\begin{eqnarray}
J^{cs}_Z {e^{cs}_X}^\mathrm{T} &=& (\tilde{\alpha}_\perp^\mathrm{r}{}^\mathrm{T} \tilde{J}_Z) u_{res}^\mathrm{T}, \\
J^{cs}_{mz} {e^{cs}_X}^\mathrm{T} &=& (\tilde{\alpha} \tilde{J}_Z) u_{res}^\mathrm{T}, \\
J^{cs}_{oc} {e^{cs}_X}^\mathrm{T} &=& 0, \\
\end{eqnarray}
i.e.~the measurement outcome is correct, and the vector $u_{res}$ represents the residual errors. Regarding the weight, $\vert u_{res} \vert \leq \vert e^{cs-a}_X \vert$. 
\end{proof}

{\section{Threshold theorem}}
\label{app:threshold}

In this section, we provide the proof of the threshold theorem (Theorem~\ref{the:FTQC}) for our protocol. We begin by defining the underlying stochastic noise model. Subsequently, we analyze the primitive circuits by leveraging the technical lemmas established in Appendices~\ref{app:LTSP} and~\ref{app:PCS+LTS}. Building on these results, we finally establish and prove the existence of a fault-tolerant threshold. 

\subsection{Noise model}

\begin{definition}
{\bf Standard local stochastic noise model.} Let $C$ be the set of operations constituting a quantum circuit. Each element in $C$ is also referred to as a \textbf{location}. A fault configuration is described by a random subset $F \subseteq C$, representing locations where operations are replaced by faulty versions. Such a set $F$ is referred to as a \textbf{fault path}. The local stochastic noise model specifies that for any subset of locations $S \subseteq C$, the probability that $S$ is contained within the random fault path $F$ satisfies: 
\begin{eqnarray}
\Pr[S \subseteq F] \leq \prod_{j \in S} p_j,
\end{eqnarray}
where each location $j \in C$ has an associated error probability $p_j$. 
\end{definition}

We establish the fault-tolerance theorem under the local stochastic noise model. To enhance clarity, we first present the derivation using an intuitive, simplified noise model before demonstrating that the theorem remains valid under the standard local stochastic model. 

\subsubsection{Generalized local stochastic noise model}

As a proof technique, we introduce a \textit{generalized local stochastic noise model} to capture correlated errors and the propagation of faults. In this framework, we introduce \textit{virtual locations} where faults occur and subsequently propagate to \textit{real locations}. 

\begin{definition}
{\bf $\nu$-confined generalized local stochastic noise model.} The noise model is specified by the tuple $(A, B, f, p_\bullet, \delta, \mathfrak{con})$, where $A$ denotes the set of real locations, $B$ is the set of virtual locations, and $f: A \to \mathcal{P}(B)$ is a mapping to the power set of $B$ that describes the propagation of faults. We define the following notations to characterize this mapping:
\begin{itemize}
    \item For any subset $S \subseteq A$, let $f_\cup(S) \equiv \bigcup_{a \in S} f(a)$ denote the set of all virtual locations that can influence the real locations in $S$.
    \item Let $\|f\|_\to \equiv \max_{a \in A} |f(a)|$ be the maximum number of virtual locations associated with any single real location.
    \item Let $\|f\|_\gets \equiv \max_{b \in B} |\{a \in A \mid b \in f(a)\}|$ be the maximum number of real locations influenced by any single virtual location.
    \item A mapping $f$ is said to be \textit{disjoint} if for any $a, a' \in A$ with $a \neq a'$, the sets $f(a)$ and $f(a')$ are disjoint, i.e.,~$f(a) \cap f(a') = \emptyset$.
\end{itemize}
The model is said to be \textit{$\nu$-confined} if the mapping $f$ satisfies $\|f\|_\leftrightarrow \leq \nu$, where $\|f\|_\leftrightarrow \equiv \max\{\|f\|_\to, \|f\|_\gets\}$. 

Faults at virtual locations are distributed according to a standard local stochastic noise model, with error probabilities specified by the function $p_\bullet: B \to [0, p]$. Faults at real locations arise from two sources: 
\begin{enumerate}
\item \textbf{Intrinsic faults:} Conditioned on the event $\mathfrak{con}$, at most $\delta$ faults may occur directly at real locations. 
\item \textbf{Propagated faults:} A virtual fault at location $b \in B$ may induce a fault at any real location $a \in A$ for which $b \in f(a)$. 
\end{enumerate}

Formally, let $F_A \subseteq A$ and $F_B \subseteq B$ denote the real and virtual fault paths, respectively. For any subset of virtual locations $S \subseteq B$, the probability that $S$ is contained within the virtual fault path $F_B$ satisfies: 
\begin{eqnarray}
\Pr[S \subseteq F_B] \leq \prod_{b \in S} p_b \leq p^{|S|}.
\end{eqnarray}
The real fault path $F_A$ is always contained within the union of two sub-paths, $F_A \subseteq F_{A,I} \cup F_{A,P}$, where $F_{A,I}$ and $F_{A,P}$ represent the intrinsic and propagated fault paths, respectively. Conditioned on the event $\mathfrak{con}$, the intrinsic fault path satisfies 
\begin{eqnarray}
|F_{A,I}| \leq \delta.
\end{eqnarray}
The propagated fault path satisfies 
\begin{eqnarray}
F_{A,P} \subseteq \{a \in A \st f(a) \cap F_B \neq \emptyset\}.
\end{eqnarray}

For brevity, we often denote the generalized noise model as $(A, B, f, p, \delta, \mathfrak{con})$, replacing the mapping $p_\bullet$ with its upper bound $p$. 
\end{definition}

The standard local stochastic noise model is a $1$-confined generalized local stochastic noise model given by $(A=C,B=C,f=\mathrm{identity},p_\bullet,\delta=0,\mathfrak{con}=\mathfrak{trivial})$. 

\begin{lemma}
{\bf Probability bound under the generalized noise model.} Let $(A, B, f, p, \delta, \mathfrak{con})$ be a $\nu$-confined generalized local stochastic noise model. Let $F_A \subseteq A$ be the real fault path. Conditioned on the event $\mathfrak{con}$, for any subset of real locations $S \subseteq A$ with cardinality $s = |S| \geq \delta$, the probability that $S$ contains at least $t \geq \delta$ faults satisfies: 
\begin{align}
    \Pr[|S \cap F_A| \geq t \mid \mathfrak{con}] \leq \left( \frac{\nu s\,e}{l} p \right)^l,
\end{align}
where $l = \lceil(t-\delta)/\nu\rceil$. Furthermore, if the mapping $f$ is disjoint, the same upper bound holds with $l = \lceil t-\delta \rceil$. 
\label{lem:probability_bound}
\end{lemma}

\begin{proof}
By definition, the real fault path is partitioned such that $F_A \subseteq F_{A,I} \cup F_{A,P}$, where $F_{A,I}$ and $F_{A,P}$ represent the intrinsic and propagated fault paths, respectively. Conditioned on $\mathfrak{con}$, the intrinsic component contributes at most $\delta$ faults to the set $S$, i.e.,~$|S \cap F_{A,I}| \leq \delta$. 

Consequently, to satisfy the condition $|S \cap F_A| \geq t$, the propagated component must account for the remaining faults: 
\begin{equation}
    |S \cap F_{A,P}| \geq t - |S \cap F_{A,I}| \geq t - \delta.
\end{equation}
Since each virtual fault in $B$ can influence at most $\nu$ real locations ($\|f\|_\gets \leq \nu$), achieving at least $t - \delta$ propagated faults in $S$ requires at least $l = \lceil (t - \delta) / \nu \rceil$ distinct faults in the virtual set $f_\cup(S)$. 

The probability of having at least $l$ faults in the virtual set is upper bounded by 
\begin{equation}
    \Pr[|f_\cup(S) \cap F_B| \geq l] \leq \binom{|f_\cup(S)|}{l} p^l \leq \left( \frac{|f_\cup(S)|\,e}{l} p \right)^l.
\end{equation}
Using the confinement property $\|f\|_\to \leq \nu$, we have $|f_\cup(S)| \leq \nu s$. Substituting this into the inequality yields the inequality in the lemma. 

Now, consider the case where the mapping $f$ is disjoint. Because the images $\{f(a) \st a \in S\}$ are mutually disjoint, each virtual fault can influence at most one real location in $S$. Achieving at least $t - \delta$ propagated faults in $S$ necessitates at least $l' = \lceil t - \delta \rceil$ virtual faults, which must be distributed across $l'$ distinct subsets $f(a)$ with $a \in S$. This probability is upper-bounded by:
\begin{equation}
    \Pr[|\{f(a) \cap F_B \neq \emptyset \st a \in S\}| \geq l'] \leq \binom{s}{l'} \nu^{l'} p^{l'} \leq \left( \frac{s\,e}{l'} \nu p \right)^{l'}.
\end{equation}
\end{proof}

\begin{lemma}
{\bf Merging of generalized noise models.} Consider two generalized noise models $(A_j, B_j, f_j, p_{j,\bullet}, \delta_j, \mathfrak{con}_j)$ for $j \in \{1, 2\}$, where the $j$th model is $\nu_j$-confined. Suppose two virtual location sets are disjoint, i.e.,~$B_1\cap B_2 = \emptyset$. Let $A$ be a set of locations, and assume there exist merging mappings $g_j: A \to \mathcal{P}(A_j)$ such that an arbitrary location $a \in A$ is faulty only if $g_1(a)\cup g_2(a)$ contains at least one fault. 

Then, the effective noise on $A$ is described by a $\nu$-confined model $(A, B, f, p_\bullet, \delta, \mathfrak{con})$, where: 
\begin{itemize}
\item $B = B_1 \cup B_2$ and $f(a) = f_{1,\cup} \circ g_1(a) \cup f_{2,\cup} \circ g_2(a)$; 
\item $p_\bullet|_{B_j} = p_{j,\bullet}$ for $j \in \{1, 2\}$, and $p = \max\{p_1, p_2\}$; 
\item $\delta = \delta_1 \|g_1\|_\gets + \delta_2 \|g_2\|_\gets$; 
\item $\mathfrak{con} = \mathfrak{con}_1 \cap \mathfrak{con}_2$; 
\item $\nu = \nu_1 \|g_1\|_\leftrightarrow + \nu_2 \|g_2\|_\leftrightarrow$. 
\end{itemize}
If $f_1$, $f_2$, $g_1$, and $g_2$ are disjoint, $f$ is disjoint. 
\label{lem:merging_generalized_models}
\end{lemma}

\begin{proof}
Let $F_{j,A}$, $F_{j,B}$, $F_{j,I}$, and $F_{j,P}$ denote the real, virtual, intrinsic, and propagated fault paths of model $j \in \{1, 2\}$, respectively. We use the same notation without the subscript $j$ to represent the fault paths of the resulting effective model on $A$. 

Based on the merging mappings, a location $a \in A$ is faulty only if at least one of its associated locations in $A_1$ or $A_2$ is faulty. Formally: 
\begin{eqnarray}
F_A \subseteq \{a \in A \st g_1(a) \cap F_{1,A} \neq \emptyset \text{ or } g_2(a) \cap F_{2,A} \neq \emptyset\}.
\end{eqnarray}
Since $F_{j,A} \subseteq F_{j,I} \cup F_{j,P}$, we can decompose the effective fault path into intrinsic and propagated components. We define $F_I$ as the set of real locations in $A$ affected by intrinsic faults in $A_1$ and $A_2$: 
\begin{eqnarray}
F_I \subseteq \{a \in A \st g_1(a) \cap F_{1,I} \neq \emptyset \text{ or } g_2(a) \cap F_{2,I} \neq \emptyset\}.
\end{eqnarray}
Conditioned on $\mathfrak{con} = \mathfrak{con}_1 \cap \mathfrak{con}_2$, the cardinality of the intrinsic fault set $F_I$ is bounded by the total number of errors propagated from the intrinsic sets $F_{1,I}$ and $F_{2,I}$, i.e.,~$|F_I| \leq \delta_1 \|g_1\|_{\gets} + \delta_2 \|g_2\|_{\gets} = \delta$. 

Next, we characterize the propagated fault path $F_P$. Using the definition of propagated faults in the constituent models, $F_{j,P} \subseteq \{a_j \in A_j \st f_j(a_j) \cap F_{j,B} \neq \emptyset\}$, we have: 
\begin{eqnarray}
F_P \subseteq \{a \in A \st f_{1,\cup} \circ g_1(a) \cap F_{1,B} \neq \emptyset \text{ or } f_{2,\cup} \circ g_2(a) \cap F_{2,B} \neq \emptyset\}.
\end{eqnarray}
This expression justifies the definitions of the virtual location set, mapping function, and error probability for the effective model. Specifically, the virtual fault path is given by $F_B = F_{1,B} \cup F_{2,B}$. Finally, the confinement parameter $\nu$ is obtained by calculating the norms $\|f\|_{\to}$ and $\|f\|_{\gets}$ based on the individual norms of $f_j$ and $g_j$, yielding the result stated in the lemma. 
\end{proof}

\begin{remark}
This result can be generalized to the merging of $N$ models with disjoint virtual location sets. For instance, if $\|g_j\|_\leftrightarrow = 1$ for all $j \in \{1, \dots, N\}$, the parameters of the resulting effective model are: 
\begin{align*}
p &= \max\{p_j \st j=1,2,\ldots,N\}, & \delta &= \sum_{j=1}^N \delta_j, & \mathfrak{con} &= \bigcap_{j=1}^N \mathfrak{con}_j, & \nu &= \sum_{j=1}^N \nu_j.
\end{align*}
\end{remark}

\begin{remark}
In all generalized noise models presented in Secs.~\ref{app:fault-tolerant}, \ref{app:noisy}, and~\ref{app:theorem_proof}, the propagation and merging mapping, i.e.,~$f$ and $g$, are all disjoint. Note that a sufficient condition for $g$ to be disjoint is that $\|g\|_\gets = 1$, which is satisfied by all merging mappings utilized in our analysis. For a $1$-confined model, $f$ is always disjoint. 
\end{remark}

\begin{lemma}
{\bf Splitting of a generalized noise model.} Consider a $\nu$-confined generalized noise model $(A, B, f, p_\bullet, \delta, \mathfrak{con})$. If the mapping $f$ is disjoint, then for any arbitrary partition of the real locations $A = A_1 \cup A_2$, the noise in each subset $A_j$ ($j \in \{1, 2\}$) is described by a $\nu$-confined model $(A_j, B_j, f_j, p_{j,\bullet}, \delta_j, \mathfrak{con}_j)$, characterized by:
\begin{itemize}
    \item $B_j = f_\cup(A_j) \subseteq B$ is the set of associated virtual locations;
    \item $f_j = f|_{A_j}$ is the restriction of the mapping $f$ to the domain $A_j$;
    \item $p_{j,\bullet} = p_{\bullet}|_{B_j}$ is the restriction of the probability mapping to $B_j$, with $p_j = p$;
    \item $\delta_j = \delta$, $\mathfrak{con}_j = \mathfrak{con}$, and $\nu_j = \nu$.
\end{itemize}
Furthermore, the resulting propagation mappings $f_1$ and $f_2$ are disjoint, and the virtual location sets $B_1$ and $B_2$ are disjoint. 
\label{lem:splitting_generalized_models}
\end{lemma}

\begin{proof}
Given an arbitrary fault path $F_A \subseteq A$ and its decomposition into intrinsic and propagated sub-paths $F_A \subseteq F_I \cup F_P$, this induces a corresponding decomposition of the fault path restricted to each $A_j$. Specifically, we define $F_j = F_A \cap A_j = (F_I \cap A_j) \cup (F_P \cap A_j)$, where $F_I \cap A_j$ and $F_P \cap A_j$ represent the intrinsic and propagated components of the $j$-th sub-model, respectively. Under the condition $\mathfrak{con}$, the cardinality of the intrinsic component satisfies $|F_I \cap A_j| \leq |F_I| \leq \delta$. Furthermore, the distribution of the propagated component $F_P \cap A_j$ satisfies the model of $A_j$ described in the lemma. 
\end{proof}

\subsubsection{Simplified local stochastic noise model}

In our protocol, universal quantum computation is implemented through the following primitive circuits: 
\begin{itemize}
\item \textbf{Resource state preparation:} The preparation of gate-teleportation resource states, as depicted in Fig.~\ref{fig:state_preparation}; 
\item \textbf{Parity-check measurement:} Measurement of stabilizer operators using the prepared resource states, as depicted in Fig.~\ref{fig:measurement}(a); 
\item \textbf{Code surgery:} Measurement of logical operators performed via code surgery, as depicted in Fig.~\ref{fig:lattice_surgery}; 
\item \textbf{Additional transversal operations:} Transversal initialization, Pauli and $S$ gates, and measurement of code blocks, as shown in Figs.~\ref{fig:operations} and~\ref{fig:injections}; the parity-check measurement and code surgery circuits apply specifically to $Z$-type operators, and $X$-type operators can be measured by applying transversal Hadamard gates to the code block both before and after a $Z$-type measurement. 
\end{itemize}

\begin{definition}
{\bf Simplified local stochastic noise model.} In this model, we assume that error events occur between transversal operations. To simplify the analysis, we effectively subsume the noise from all additional constant-depth transversal operations into the state preparation and parity-check measurement circuits. Errors in the code surgery circuits are likewise accounted for within these two primary circuits. Specifically, each location is either a qubit-time tuple $(q, t)$ or a measurement operation, defined as follows: 
\begin{itemize}
\item \textbf{State preparation (Fig.~\ref{fig:state_preparation}):} The locations include the qubit-time positions in the six indicated columns and all parity-check measurement operations shown in the circuit. We partition the locations into two disjoint subsets: the \textit{preparation locations}, consisting of qubit-time locations in the first five columns and all measurement locations, and the \textit{decoding locations}, consisting of qubit-time locations in the sixth column. 
\item \textbf{Parity-check measurement (Fig.~\ref{fig:measurement}(a)):} The locations include the qubit-time positions in block A of the first column and blocks A and B of the second column. Errors occurring at other qubit-time positions in the first and second columns are effectively subsumed into the preceding state preparation circuit, while errors in the third column is attributed to the subsequent parity-check measurement circuit. 
\end{itemize}

A fault at a qubit-time location indicates that an error is applied to the qubit at that time; a fault at a measurement location indicates that the measurement outcome is flipped. In this model, faults are distributed according to a standard local stochastic noise model characterized by a physical error parameter $p$ as follows: 
\begin{itemize}
\item \textbf{In the state preparation circuits:} 
    \begin{enumerate}
    \item For preparation locations, the error probability is $p_{L,S}(p)$. This represents the logical error probability of a surface-code logical qubit. We assume a threshold $p_{th,S}$ exists such that for $p < p_{th,S}$, this probability is $p_{L,S}(p) \leq e^{-\Theta(d_S)}$, where $d_S$ is the surface-code distance. 
    \item For decoding locations, the error probability is $p_{D,S} \leq \lambda_S\,p$. This represents the error probability of the decoding operation that transfers the state of a surface-code logical qubit to a physical qubit, where $\lambda_S$ is a positive constant. We define $\lambda = \max\{1,\lambda_S\}$. 
    \end{enumerate}
\item \textbf{In the stabilizer measurement circuits:} The error probability for every location is simply $p$. 
\end{itemize}
\end{definition}

This simplified noise model is consistent with the error weight analysis presented in Appendices~\ref{app:LTSP} and~\ref{app:PCS+LTS}. However, one specific simplification warrants clarification: we have neglected error propagation during the generalized transversal controlled-NOT gates (including those used for parity-check measurements). We will demonstrate that omitting explicit error propagation does not impact the validity of our final threshold conclusions (see Sec.~\ref{app:justification}). This is justified by our exclusive use of (q)LDPC codes, where the sparse connectivity of the operations ensures that any error spread remains strictly bounded. 

Regarding the surface-code assumptions, we acknowledge that this simplified model neglects potential temporal correlations of surface-code logical errors. However, as the state preparation circuit has constant depth, these correlations remain strictly bounded and do not affect the existence of a finite fault-tolerant threshold (see Sec.~\ref{app:justification}). 

\subsection{Fault-tolerant operations}
\label{app:fault-tolerant}

\begin{lemma}
{\bf Resource state preparation for fault-tolerant operations.} Let $d$ be the distance of the memory code. Assume the factory code is an LDPC-type LTC with parameters as specified in Table~\ref{tab:codes}. Assume that the circuit noise follows the simplified local stochastic noise model. There exists a decoding algorithm such that the state preparation circuit in Fig.~\ref{fig:state_preparation}, using a factory-code distance $d_F = \Theta(d)$ and surface-code distance $d_S = O(\mathrm{polylog}(n_D n_F))$, satisfies the following: 
\begin{itemize}
\item The circuit outputs $k_F$ copies of the resource state $H_Z^D$. 
\item Under the simplified local stochastic noise model with physical error rate $p < p_{S,th}$, the circuit fails with probability $\Pr(\mathfrak{fail}_{SP}) \leq \exp(-\Omega(d))$. 
\item If the circuit succeeds, each copy of the output state is subject to noise characterized as follows: 
    \begin{enumerate}
    \item Let $A_{RS}$ be the set of physical qubits for a given copy of the state. 
    \item The noise on the output state is effectively characterized by a $1$-confined model $(A_{RS}, B_{RS}, f_{RS}, p_{D,S}, t_{RS} = d/16, \mathfrak{success}_{SP})$ for some $B_{RS}$ and disjoint $f_{RS}$. 
    \end{enumerate}
\end{itemize}
The virtual location sets associated with the $k_F$ resource state copies are mutually disjoint; furthermore, the virtual location sets associated with resource states generated by different state preparation circuits are also disjoint. 
\label{lem:state_preparation}
\end{lemma}

\begin{proof}
Let $C$ denote the total set of locations, partitioned into the subsets of preparation locations, $C_P$, and decoding locations, $C_D$. Consequently, any fault path $F \subseteq C$ can be partitioned into $F = F_P \cup F_D$, where $F_P = F \cap C_P$ and $F_D = F \cap C_D$ represent preparation and decoding faults, respectively. 

The check matrix of the state preparation circuit, $H_Z^{sp}$, only detects $X$-type errors at preparation locations. Following Lemma~\ref{lem:spZ}, the $X$-type error vector is $e_X^{sp} \in \mathbb{F}_2^{C_X}$, where $C_X \subseteq C$ is the subset of locations relevant to $X$-type errors. The support of this vector, restricted to preparation locations, satisfies $\mathrm{supp}_{C_X \cap C_P}(e_X^{sp}) \subseteq C_X \cap F_P$. 

Given the observed syndrome $v^\mathrm{T} = H_Z^{sp} {e_X^{sp}}^\mathrm{T}$, the decoding algorithm identifies the minimum weight vector $c \in \mathbb{F}_2^{C_X}$ such that $v^\mathrm{T} = H_Z^{sp} c^\mathrm{T}$. Since $H_Z^{sp}$ only monitors locations in $C_X \cap C_P$, the weight-minimizing solution satisfies $\mathrm{supp}(c) \subseteq C_X \cap C_P$ and $\vert c \vert \leq \vert \mathrm{supp}_{C_X \cap C_P}(e_X^{sp}) \vert \leq \vert C_X \cap F_P \vert \leq \vert F_P \vert$. 

Error correction is performed by effectively applying $X$ gates on the support of $c$. Let $F' = F_P' \cup F_U$ be the post-correction fault path, where $F_P'$ denotes post-correction preparation faults. Since $F_P' \subseteq \mathrm{supp}(c) \cup F_P$, $\vert F_P' \vert \leq \vert c \vert + \vert F_P \vert \leq 2 \vert F_P \vert$. 

The circuit is defined to \textit{succeed} if the total number of preparation faults is smaller than $t$ ($\vert F_P \vert < t$), where 
\begin{eqnarray}
t = \left\lceil \min\left\{\frac{t_{RS}+1}{4 \max\left\{1, \frac{n_F}{r_F s}\right\}}, \frac{d_F}{2 \omega^D_Z \max\left\{1, \frac{n_F}{r_F s}\right\}} \right\} \right\rceil.
\label{equ:errorbound}
\end{eqnarray}
Under this success condition, $\vert F_P' \vert < 2t$. By Lemma~\ref{lem:spX} and Lemma~\ref{lem:spZ}, these preparation faults induce $X$ (or $Z$) errors on at most $t_{RS}/2$ physical qubits in each copy of the output state: the post-correction errors are undetectable (the corresponding error vector is in the kernel of $H_Z^{sp}$), and the potential occurrence of decoding faults does not interfere with the propagation of errors originating from the preparation locations. Let $F_{RS,P} \subseteq A_{RS}$ be the set of physical qubits having errors from preparation faults. It follows that $\vert F_{RS,P} \vert \leq t_{RS}$. 

The total number of preparation locations is $O(n_Dn_F)$. Using a tail bound for the binomial distribution, the probability that the circuit fails (i.e.,~$\vert F_P \vert \geq t$) is: 
\begin{eqnarray}
\Pr(\mathfrak{fail}_{SP}) \leq \binom{O(n_Dn_F)}{\lceil t \rceil} p_{L,S}(p)^{\lceil t \rceil} \leq \left( \frac{O(n_Dn_F) e}{\lceil t \rceil} p_{L,S}(p) \right)^{\lceil t \rceil}.
\label{eq:logicalerror}
\end{eqnarray}
By choosing a surface-code distance $d_S = O(\mathrm{polylog}(n_Dn_F))$ such that $p_{L,S}(p)$ is sufficiently small, we ensure the term inside the parentheses is less than 1. Given $t_{RS} = d/16$ and $d_F = \Theta(d)$, it follows that $t = \Theta(d)$, yielding a failure probability suppressed exponentially in the memory code distance: $\Pr(\mathfrak{fail}_{SP}) \leq e^{-\Omega(d)}$. 

Each fault at a decoding location maps directly to a physical qubit in the output state. This correspondence is described by a bijective mapping $f_{RS}: A_{RS} \to C_D'$, where $C_D' \subseteq C_D$ denotes the subset of decoding locations associated with the physical qubits of a given copy of the output state. Accordingly, we take $B_{RS} = C_D'$ as the set of virtual locations for the resulting output state. 

Let $F_{RS,D} \subseteq A_{RS}$ denote the set of output faults originating from the decoding locations. The total output fault path satisfies $F_{RS} \subseteq F_{RS,P} \cup F_{RS,D}$. Within this framework, the preparation faults $F_{RS,P}$ constitute the \textit{intrinsic faults}, satisfying $|F_{RS,P}| \leq t_{RS}$ conditioned on the event of circuit success. The decoding faults $F_{RS,D}$ represent the \textit{propagated faults} induced by the virtual fault path $F_D \cap C_D'$. Consequently, the resulting noise on the output state precisely satisfies the generalized stochastic noise model. 

Since the virtual locations of each output state copy are the decoding locations on the relevant qubit set, the virtual location sets associated with different copies are mutually disjoint. These virtual locations correspond to the physical components of each preparation circuit; consequently, the virtual location sets for different circuits are disjoint. 
\end{proof}

\begin{lemma}
{\bf Fault-tolerant parity-check measurement.} Consider the circuit in Fig.~\ref{fig:measurement}(a). Let $A_{in}$ and $A_{out}$ denote the sets of qubits in the input and output states, respectively. Assume that the circuit noise follows the simplified local stochastic noise model. Suppose the resource state is generated by a fault-tolerant resource state preparation circuit depicted by Lemma~\ref{lem:state_preparation}.

Under these conditions, all errors introduced by the noise sources are equivalent to effective errors on the input and output states, characterized as follows:
\begin{itemize}
\item The circuit is treated as effectively error-free.
\item The effective noise on the input state is described by a $4$-confined model $(A_{in}, B_{in}, f_{in}, \lambda p, t_{RS}, \mathfrak{success}_{SP})$ for some $B_{in}$ and disjoint $f_{in}$.
\item The effective noise on the output state is described by a $2$-confined model $(A_{out}, B_{out}, f_{out}, \lambda p, t_{RS}, \mathfrak{success}_{SP})$ for some $B_{out}$ and disjoint $f_{out}$.
\end{itemize}
The virtual location sets $B_{in}$ and $B_{out}$ are disjoint; furthermore, the virtual location sets associated with different parity-check measurements are also disjoint.
\label{lem:stabilizer_measurement}
\end{lemma}

\begin{proof}
Let $C_{\alpha,j}$ denote the set of qubit-time locations for block $\alpha$ in the $j$-th column of the circuit, and define $C_{in} = \bigcup_{\phi \in \Phi} C_\phi$, where the index set is $\Phi = \{(A,1), (B,1),(B,2)\}$.
By Lemmas~\ref{lem:MeasurementZ} and~\ref{lem:MeasurementX}, every single-qubit fault at a location in $C_{\mathrm{in}}$ is equivalent to a single-qubit error on the input state, and every single-qubit fault at a location in $C_{A,2}$ is equivalent to a single-qubit error on the output state. Hence every single-qubit fault within the circuit is equivalent to a single-qubit error on either the input or output state.

Partition the resource-state as $A_{RS}=A_{RS,1}\cup A_{RS,2}$, where errors on $A_{RS,1}$ are equivalent to errors at locations $C_{B,1}$ and hence to input errors, while errors on $A_{RS,2}$ are equivalent to errors at locations $C_{C,1}$ and hence to output errors. By Lemma~\ref{lem:splitting_generalized_models}, this induces corresponding disjoint virtual location sets $B_{RS,1}$ and $B_{RS,2}$.

According to the above analysis, the error equivalences are described by the following merging mappings: $g_{\phi\in \Phi}: A_{in} \rightarrow C_{\phi}$, $g_{A,2}: A_{out} \rightarrow C_{A,2}$,  $g_{RS,1}: A_{in} \to A_{RS,1}$ and $g_{RS,2}: A_{out} \to A_{RS,2}$. All these mappings satisfy \(\|g\|_\leftrightarrow=1\). The input and output virtual location sets are $B_{in}=C_{\mathrm{in}}\cup B_{RS,1}$ and $B_{out}=C_{A,2}\cup B_{RS,2}$, which are disjoint. By applying Lemma~\ref{lem:merging_generalized_models} to these parameters, the proof is complete. Note that all the mappings are disjoint. 

The sets $B_{in}$ and $B_{out}$ are comprised of locations within the current circuit or virtual locations of the resource state. Since the virtual locations for distinct resource states are mutually disjoint, they are disjoint from the virtual locations for all other parity-check measurements.
\end{proof}

\begin{remark}
While we only explicitly analyze the deformed code parity-check matrix $H_Z^D$ in Lemmas~\ref{lem:state_preparation} and~\ref{lem:stabilizer_measurement}, our results and conclusions generalize naturally to all other check matrices utilized in the protocol. 
\end{remark}

\begin{lemma}
{\bf Lemma 1 in \cite{Gottesman2014}.} Consider a set $S$ of $t$ nodes in a graph where each node has degree at most $z$. Let $M_z(\mathsf{s}, \mathsf{S})$ be the number of sets containing $\mathsf{S}$ and in total $\mathsf{s}$ nodes ($\mathsf{s}-t$ nodes beyond those in $\mathsf{S}$), which form a union of connected clusters, each containing at least one node from $S$. Then $M_z(\mathsf{s}, \mathsf{S}) \leq e^{t-1}(ze)^{\mathsf{s}-t}$, where $e$ is the base of the natural logarithm. 
\label{lem:cluster_bound}
\end{lemma}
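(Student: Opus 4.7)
The plan is to prove the bound by a spanning-forest argument that reduces the combinatorial problem to counting rooted subtrees of a bounded-degree graph. The key observation is that any set counted by $M_z(\mathsf{s},\mathsf{S})$ decomposes into $k \le t$ connected components, each containing at least one node of $\mathsf{S}$; picking a spanning tree in each component and rooting it at one of its seeds produces a rooted spanning forest whose leaves and internal vertices together account for all $\mathsf{s}$ nodes.

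First I would establish the single-cluster bound: the number of connected subgraphs of size $m$ in a graph of maximum degree $z$ containing a fixed vertex is at most $(ze)^{m-1}$. This follows from the classical depth-first encoding in which a rooted subtree of size $m$ is recorded as a sequence of $m-1$ growth steps, each selecting one of at most $z$ incident edges, together with backtracking markers; summing over encodings yields the bound via a standard generating-function manipulation (Catalan-number asymptotics). Next I would sum over the number of clusters $k$, the partition of the seed set $\mathsf{S}$ among these clusters (with one distinguished seed per cluster serving as its root), and the compositions $m_1+\cdots+m_k = \mathsf{s}$ with $m_i \ge 1$. Applying the rooted-tree bound to each cluster gives the product
\begin{equation}
\prod_{i=1}^{k} (ze)^{m_i - 1} \;=\; (ze)^{\mathsf{s}-k} \;\le\; (ze)^{\mathsf{s}-t} \cdot (ze)^{t-k}.
\end{equation}
The remaining combinatorial prefactor arises from choosing how the $t$ seeds are distributed among the $k$ clusters and which seed is the root; this is bounded by a Stirling-type estimate combined with $\binom{\mathsf{s}-1}{k-1} \le \bigl(e(\mathsf{s}-1)/(k-1)\bigr)^{k-1}$ for the composition count. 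Carefully telescoping these factors collapses the $k$-dependent pieces and yields the clean prefactor $e^{t-1}$.

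The main obstacle is the bookkeeping of overcounting: a given union of clusters admits many distinct spanning-forest decorations, since (i) a cluster typically has many spanning trees, (ii) a cluster containing multiple seeds has multiple admissible roots, and (iii) the partition of $\mathsf{S}$ into cluster memberships is only defined once the clusters themselves are fixed. If handled naively, these ambiguities inflate the prefactor to $e^{O(t)}$ with an unwanted constant rather than the sharp $e^{t-1}$. I would eliminate the overcounting by fixing once and for all a total ordering on the vertices of the ambient graph, and then requiring the spanning forest to be canonical with respect to this ordering — specifically, the root of each cluster is its smallest-indexed seed, and the spanning tree is the lexicographically minimal one under a fixed tie-breaking rule on the DFS encoding. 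With this canonicalization, the map from cluster unions to rooted forests is injective, and the final inequality follows directly from the product of the per-cluster enumeration bound with the combinatorial factor for the composition $(m_1,\ldots,m_k)$, matching the target $e^{t-1}(ze)^{\mathsf{s}-t}$.
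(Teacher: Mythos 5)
The paper does not actually prove this lemma; it imports it verbatim as Lemma~1 of Ref.~\cite{Gottesman2014}, so there is no in-paper argument to compare against and your proposal must stand on its own. Unfortunately it does not: there is a genuine gap in the step where you bound each cluster separately. Your per-cluster estimate counts connected subgraphs of size $m_i$ containing \emph{one} distinguished root seed, giving $(ze)^{m_i-1}$, and the product over $k$ clusters is $(ze)^{\mathsf{s}-k}$. This ignores the constraint that a cluster must also contain all the \emph{other} seeds assigned to it, so nothing discounts those seeds. Take $k=1$ (all $t$ seeds in a single cluster): your bound for that single term is already $(ze)^{\mathsf{s}-1} = (ze)^{t-1}(ze)^{\mathsf{s}-t}$, which exceeds the target $e^{t-1}(ze)^{\mathsf{s}-t}$ by a factor of $z^{t-1}$. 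No canonicalization of spanning forests, choice of roots, or control of overcounting can repair this, because the problem is not that sets are counted multiple times — it is that the upper bound fed into the sum is too weak by a factor of $z^{t-k}$. The "telescoping" you invoke to absorb $(ze)^{t-k}$ into $e^{t-1}$ is precisely the step that cannot go through.

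The structural shape of the target, $e^{t-1}(ze)^{\mathsf{s}-t}=e^{\mathsf{s}-1}z^{\mathsf{s}-t}$, shows what a correct argument must deliver: a factor of $z$ only for each of the $\mathsf{s}-t$ \emph{non-seed} vertices (whose attachment edge must be selected among $\le z$ neighbors), with the seeds contributing no $z$. This forces you to treat the whole union as a single rooted object in which seed vertices are ``free'': for instance, adjoin an auxiliary vertex $v_0$ joined to all $t$ seeds, so the union becomes one connected cluster of size $\mathsf{s}+1$ containing $v_0$ in a graph where $v_0$ has degree $t$, and then run the rooted-subtree/generalized-Catalan count distinguishing root edges (at most $t$ slots, contributing the $e^{t-1}$-type factor) from ordinary edges (at most $z$ slots each, contributing $(ze)^{\mathsf{s}-t}$). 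Alternatively an induction on $t$ that peels off one seed at a time works, but again only with a refined single-cluster estimate for clusters containing several prescribed seeds — which is essentially the lemma itself, so it cannot be replaced by the single-seed bound you use. Your step~1 is fine; the decomposition in steps~2--3 is where the proof fails.
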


\begin{lemma}
{\bf Fault-tolerant code surgery.} Let $d$ be the distance of the memory code. Consider the code surgery circuit illustrated in Fig.~\ref{fig:lattice_surgery}. We assume that the deformed code is a qLDPC code constructed such that $n_D = O(\mathrm{poly}(d))$ and $d_D = d$. Let $A_{in}$ and $A_{out}$ denote the sets of qubits in the input and output states, respectively. Assume that the circuit noise follows the simplified local stochastic noise model. Furthermore, assume the noise on the input state is characterized by a $\nu$-confined model $(A_{in},B_{in},f_{in},\lambda p,\eta t_{RS},\mathfrak{success}_{SP})$ for some $B_{in}$ and disjoint $f_{in}$, where $\nu = O(1)$, $B_{in}$ is disjoint from the virtual location sets of all other noise sources in the circuit, $\eta < 8$, and $\mathfrak{success}_{SP}$ denotes the event that all preceding fault-tolerant resource state preparation circuits have been successful. Suppose the parity-check measurements are fault-tolerant as depicted by Lemma~\ref{lem:stabilizer_measurement}, and that the condition $\mathfrak{success}_{SP}$ is satisfied. 

Under these conditions, there exists a decoding algorithm such that:
\begin{itemize}
    \item There exists a finite threshold $\epsilon > 0$ such that when $p < \epsilon$, the error correction fails with probability $\Pr(\mathfrak{fail}_{EC}) \leq \exp(-\Omega(d))$. If the error correction succeeds, the errors undergoing correction, combined with the application of recovery gates, result in trivial operations on the encoded state, ensuring that the logical measurement outcomes are correct. 
    \item Errors that bypass the correction process result in effective noise on the output state characterized by an $8$-confined model $(A_{out},B_{out},f_{out},\lambda p,4t_{RS},\mathfrak{success}_{SP})$ for some $B_{out}$ and disjoint $f_{out}$. 
\end{itemize}
Furthermore, $B_{out}$ is disjoint from the virtual location sets of all other code surgery circuits and parity-check measurements, with the exception of the two specifically employed in the current circuit. 
\label{lem:code_surgery}
\end{lemma}

\begin{proof}
Let $C_j$ denote the set of qubit-time locations in the $j$th column of the circuit. Following the proofs of Lemmas~\ref{lem:csmZ} and~\ref{lem:csmX}, $X$-type errors at $C_1$ undergo correction through the $Z$-type parity-check measurement, while $X$ errors at $C_2, C_3$, and $C_4$ bypass the correction procedure. Similarly, $Z$-type errors at $C_1, C_2$, and $C_3$ undergo correction through the $X$-type parity-check measurement, whereas $Z$ errors at $C_4$ bypass the correction procedure. 

We now analyze the effective noise model for each subset $C_j$. Errors on the input state and the effective input errors of the $Z$-type parity-check measurement are equivalent to errors at $C_1$. Taking these two sources of errors into account, the effective noise at $C_1$ is characterized by a $(\nu+4)$-confined model with $(C_1,B_1,f_1,\lambda p,\eta t_{RS},\mathfrak{success}_{SP})$ for some $B_1$ and disjoint $f_1$. The errors at $C_2$ and $C_4$ are the effective output errors of corresponding parity-check measurements, therefore, characterized by a $2$-confined model with $(C_j,B_j,f_j,\lambda p,t_{RS},\mathfrak{success}_{SP})$ for some $B_j$ and disjoint $f_j$, where $j = 2,4$; similarly,the effective errors at $C_3$ are characterized by a $4$-confined model with $(C_3,B_3,f_3,\lambda p,t_{RS},\mathfrak{success}_{SP})$ for some $B_3$ and disjoint $f_3$. Note that virtual location sets in these models are mutually disjoint. 

{\bf Error correction.} Our derivation follows the threshold proof presented in Ref.~\cite{Gottesman2014}, adapted for our $\nu$-confined generalized noise model. 

Since the parity-check measurements are effectively error-free , the logical measurement outcomes are always correct provided that no logical $X$ errors occur prior to the $Z$-type parity-check measurement. Consequently, the threshold proof reduces to analyzing the probability of logical errors. 

First, we consider $X$ errors acting on the memory system in the first column, which are corrected by the check matrix $\tilde{H}_Z$ (see Lemma~\ref{lem:csmX}): Although the circuit measures stabilizers for the complete $H_Z^D$ check matrix, only the $\tilde{H}_Z$ component can detect errors. This is because only the stabilizers of $\tilde{H}_Z$ have well-defined values in the input state, enabling a direct syndrome comparison to identify faults. Note that measurement outcomes for these stablizers are treated as effectively error-free. 

The proof is established on the \textit{adjacency graph} of $\tilde{H}_Z$. Let $C_1' \subseteq C_1$ be the subset of qubits in the memory system. Each column of $\tilde{H}_Z$ corresponds to a physical qubit at a location in $C_1'$. In this graph, qubits are represented as vertices, and an edge exists between two vertices if and only if they share a common parity check. For qLDPC codes, the maximum degree of this graph is bounded by a constant $z = O(1)$. Under the minimum-weight decoding algorithm, error correction fails only if there exists a connected cluster $S \subseteq C_1'$ of $s \geq d$ qubits in the adjacency graph such that $S$ contains at least $\lceil s/2 \rceil$ errors. 

Given $S \subseteq C_1$, we apply Lemma~\ref{lem:probability_bound}. The probability that $S$ contains at least $\lceil s/2 \rceil$ errors is bounded by:
\begin{equation}
P_s = \left[ \frac{(\nu+4)\,s\,e}{l} \lambda p \right]^l, \text{ where } l = \left\lceil \lceil s/2 \rceil - \eta t_{RS} \right\rceil.
\end{equation}
Utilizing $t_{RS} = d/16$, we observe that for $s \geq d$, $l \geq \kappa s$, where $\kappa = 1/2 - \eta/16$. For a sufficiently small physical error rate $p < \epsilon$, where 
\begin{equation}
\epsilon = \frac{\kappa}{\lambda(\nu+4)e(ze)^{1/\kappa}},
\end{equation}
the term inside the parentheses is bounded by $(ze)^{1/\kappa} > 1$. Then, we obtain:
\begin{equation}
P_s \leq \left[ \frac{(\nu+4)\,e}{\kappa} \lambda p \right]^{\kappa s}.
\end{equation}

The number of connected clusters of size $s$ containing a fixed vertex in the adjacency graph is at most $(ze)^{s-1}$ (Lemma~\ref{lem:cluster_bound}). Summing over all $n_D$ possible starting vertices and all cluster sizes $s \geq d$, the total probability of failure for $X$-type error correction is bounded by: 
\begin{align}
\Pr(\mathfrak{fail}_{EC,X}) &\leq \sum_{s \geq d} n_D (ze)^{s-1} P_s \leq \frac{n_D}{ze} \sum_{s \geq d} \left( \frac{p}{\epsilon} \right)^{\kappa s} = \exp(-\Omega(d)).
\end{align}
This result establishes the existence of a finite fault-tolerant threshold in the correction of $X$ errors. 

Next, we consider $Z$ errors acting on the first three columns, which are corrected by the check matrix $H_X^D$ (see Lemma~\ref{lem:csmZ}). Although the circuit directly measures only the stabilizers in $\tilde{H}_X$, the eigenvalues of the full $H_X^D$ stabilizer set can be reconstructed from the outcomes of the transversal $X$-basis measurements on the ancilla system. 

The $Z$-type errors occurring in the first three columns are equivalent to effective $Z$ errors localized at $C_1$. By merging the noise models for these three columns, we obtain a $(\nu+10)$-confined generalized noise model $(C_1,B_1',f_1',\lambda p,(\eta+3) t_{RS},p\},\mathfrak{success}_{SP})$ for some $B_1'$ and $f_1'$. Note that the distance of the code used for decoding is $d_D = d$. Following an analysis analogous to that for $X$-type errors, we can establish the existence of a finite threshold $\epsilon_Z > 0$. Specifically, for any physical error rate $p < \epsilon_Z$, the total failure probability for $Z$-type error correction is exponentially suppressed: $\Pr(\mathfrak{fail}_{EC,Z}) = \exp(-\Omega(d))$. 

{\bf Faults on the output state.} $X$-type errors at $C_2$ and $C_3$ may cause faults on the output state: each single-qubit $X$ error on the memory system can result in fault on a physical qubit in the output state; and $X$ errors on the ancilla system are trivial because of the transversal $X$ measurement. Errors at $C_4$ directly act on physical qubits in the output state. Consequently, the corresponding merging mappings $g$ each satisfy $\|g\|_\leftrightarrow = 1$, which implies they are disjoint. By applying Lemma~\ref{lem:merging_generalized_models} to these parameters, we have the effective noise model of the output state. 

The set $B_{out}$ is comprised of locations within the current circuit and the virtual locations of its two constituent parity-check measurements. Since the virtual sets for distinct parity-check measurements are mutually disjoint, they are disjoint from the virtual locations of all other parity-check measurement and code surgery circuits. 
\end{proof}

\begin{remark}
In Appendix~\ref{app:FToperations}, we establish that the input noise model parameters are $\nu = 8$ and $\eta = 4$, resulting in a threshold $\epsilon = \frac{\kappa}{12 \lambda e (ze)^{1/\kappa}}$, where $\kappa = 1/2 - 4/16 = 1/4$. Numerical studies in Ref.~\cite{Li2014AMS} demonstrate that magic states can be injected into surface codes with effective error rates potentially lower than those of individual physical gates, suggesting a parameter value of $\lambda \approx 1$. The factor of $1/16$ in the definition of $\kappa$ originates from the specific definition of $t_{RS}$. By logarithmically increasing the surface-code distance $d_S$, this factor can be reduced such that $\kappa$ approaches $1/2$. In the limit $\kappa = 1/2$, this threshold coincides with the bound reported in Theorem~3 of Ref.~\cite{Gottesman2014}, subject to an additional factor of $(6e)^{-1}$. This discrepancy is comprised of a factor of $2/e$ arising from the specific combinatorial bounding techniques employed, and a factor of $1/12$ which accounts for the accumulation of errors throughout the circuit.
\end{remark}

The final fault-tolerant operations to be analyzed are the initializations of a code block in the logical $\ket{0}$ or $\ket{+}$ state. We focus our discussion on the $\ket{+}$ state, as the results are symmetrically applicable to the $\ket{0}$ state. 

\begin{lemma}
{\bf Fault-tolerant block initialization.} Apply the parity-check measurement of $H_X^M$ to a memory-code block where each physical qubit is initialized in the $\ket{+}$ state, the circuit yields a block with all logical qubits prepared in the logical state $\ket{+}$. Let $A_{out}$ denote the set of physical qubits in the resulting output state. Assume that the circuit noise follows the simplified local stochastic noise model. Suppose the parity-check measurement is fault-tolerant as depicted by Lemma~\ref{lem:stabilizer_measurement}. 

Under these conditions, the output state is subject to effective noise described by a $6$-confined model $(A_{out},B_{out},f_{out},\lambda p,t_{RS},\mathfrak{success}_{SP})$ for some $B_{out}$ and disjoint $f_{out}$. Furthermore, $B_{out}$ is disjoint from the virtual location sets of all other initialization circuits, code surgery circuits and parity-check measurements, with the exception of the one specifically employed in the current circuit. 
\label{lem:initialization}
\end{lemma}

\begin{proof}
By considering the logical $X$ operators, one can verify that the logical qubits are initialized in the logical $\ket{+}$ state. Specifically, if the entire circuit is error-free, the output state is an eigenstate of both the logical $X$ operators and the $X$-type stabilizers with eigenvalue $+1$. Furthermore, the state is projected into an eigenstate of the $Z$-type stabilizers, with eigenvalues determined by the outcomes of the parity-check measurement. 

The noise in the parity-check measurement is effectively described by the input and output noise models established in Lemma~\ref{lem:stabilizer_measurement}. Input $X$ errors are trivial in this context, as they act on physical qubits already in the $\ket{+}$ state. Conversely, input $Z$ errors are equivalent to $Z$ errors on the final output state, as they commute with the parity-check measurement. Therefore, the effective noise model for the fault-tolerant initialization operation is obtained by merging the two effective noise models of the parity-check measurement according to Lemma~\ref{lem:merging_generalized_models}. Note that merging mappings $g$ are disjoint. 

The set $B_{out}$ consists of the virtual locations from its constituent parity-check measurement; consequently, it is disjoint with virtual location sets for all other fault-tolerant operations. 
\end{proof}

\subsection{Noisy preparation of encoded magic states}
\label{app:noisy}

\begin{remark}
Throughout the following discussion, referring to certain operations as ``noisy'' does not necessarily imply that their physical fidelity is low. Rather, it signifies that we have adopted a modified success condition: while the probability of failure may be higher, the weight of the residual errors is guaranteed to be low, conditioned on success. 

To distinguish between the events of fault-tolerant operations and noisy operations, we let $\mathfrak{success}'$ and $\mathfrak{fail}'$ denote events regarding noisy operations, whereas $\mathfrak{success}$ and $\mathfrak{fail}$ denote events relevant to fault-tolerant operations. 

In our protocol, surface codes are employed in two distinct roles. In the resource state preparation circuits, they protect qubits against both $X$- and $Z$-type errors. In the context of magic state preparation, they are utilized for the initial encoding of the magic states. For clarity, we denote the distances of these two types of surface codes as $d_S$ and $d_S'$, respectively. 

In this section, we adopt the notation and conditions established in Sec.~\ref{app:fault-tolerant}; thus, they will not be redundantly restated. For simplicity, we omit explicit statements regarding the disjointness of propagation mappings and virtual location sets, although these properties remain valid for all noisy operations. 
\end{remark}

\begin{lemma}
{\bf Resource state preparation for noisy operations.} Under the same assumptions as Lemma~\ref{lem:state_preparation}, but with a modified success condition, the following statements hold: 
\begin{itemize}
\item If $p < p_{S,th}$, the circuit fails with a probability $\Pr(\mathfrak{fail}_{SP}') \leq p/16$. 
\item Conditioned on the circuit succeeding, each copy of the output state is subject to noise characterized by a $1$-confined model $(A_{RS},B_{RS},f_{RS},p_{D,S},0,\mathfrak{success}_{SP}')$. 
\end{itemize}
\label{lem:noisy_state_preparation}
\end{lemma}

\begin{proof}
The proof follows the same logic as that of Lemma~\ref{lem:state_preparation}, by setting $t_{RS} = 0$. Given that $\lceil t \rceil \geq 1$, the failure probability bound $\Pr(\mathfrak{fail}_{SP}') \leq p/16$ can be satisfied by selecting a surface-code distance $d_S = O(\mathrm{polylog}(n_Dn_F))$, provided the physical error rate is below the surface-code threshold. 
\end{proof}

\begin{lemma}
{\bf Noisy parity-check measurement.} Under the same assumptions as Lemma~\ref{lem:stabilizer_measurement}, but suppose the resource state is generated by a noisy resource state preparation circuit depicted by Lemma~\ref{lem:noisy_state_preparation}. The circuit is effectively error-free, with effective input and output noise described by a $4$-confined model $(A_{in}, B_{in}, f_{in}, \lambda p, 0, \mathfrak{success}_{SP}')$ and a $2$-confined model $(A_{out},B_{out},f_{out},\lambda p,0,\mathfrak{success}_{SP}')$, respectively. 
\label{lem:noisy_stabilizer_measurement}
\end{lemma}

\begin{proof}
The proof is the same as that of Lemma~\ref{lem:stabilizer_measurement}. 
\end{proof}

\begin{lemma}
{\bf Noisy code surgery.} Consider the code surgery circuit illustrated in Fig.~\ref{fig:lattice_surgery}. We assume a surface-code distance $d_S' = O(\mathrm{polylog}(d))$ and a deformed code distance $d_D = d_S'$. Assume the noise on the input state is characterized by a $\nu$-confined model $(A_{in},B_{in},f_{in},\lambda p,0,\mathfrak{success}_{SP}')$ with $\nu = O(1)$. Suppose the parity-check measurements are noisy as depicted by Lemma~\ref{lem:noisy_stabilizer_measurement}. 

Under these conditions, there exists a decoding algorithm such that:
\begin{itemize}
    \item There exist a finite threshold $\epsilon' > 0$ and a constant $\lambda_S' > 0$ such that for $p < \epsilon'$, the error correction fails with probability $\Pr(\mathfrak{fail}_{EC}') \leq \lambda_S' p$. 
    \item Errors that bypass the correction process result in effective noise on the output state characterized by an $8$-confined model $(A_{out},B_{out},f_{out},\lambda p,0,\mathfrak{success}_{SP}')$. 
\end{itemize}
\label{lem:noisycode_surgery}
\end{lemma}

\begin{proof}
The difference between fault-tolerant and noisy parity-check measurements lies in the effective output noise. Specifically, in the fault-tolerant (noisy) measurement, the intrinsic fault parameter is $\delta = t_{RS}$ ($\delta = 0$). Consequently, the proof for Lemma~\ref{lem:code_surgery} applies here, yielding a finite threshold $\epsilon'$, and if $p < \epsilon'$, the failure probability is $\Pr(\mathfrak{fail}_{EC}') \leq n_D \exp(-\Omega(d_S'))$. By choosing the surface-code distance $d_S' = O(\mathrm{polylog}(d))$, we can ensure that this probability is suppressed below $\lambda_S' p$ for any fixed physical error rate $p < \epsilon'$ and positive constant $\lambda_S'$. 
\end{proof}

\begin{lemma}
{\bf Noisy block initialization.} Under the same conditions as in Lemma~\ref{lem:initialization} with the exception that the parity-check measurement is noisy as depicted by Lemma~\ref{lem:noisy_stabilizer_measurement}. Then, the output state is subject to effective noise described by a $6$-confined model $(A_{out},B_{out},f_{out},\lambda p,0,\mathfrak{success}_{SP}')$. 
\label{lem:noisyinitialization}
\end{lemma}

\begin{proof}
The proof is the same as that of Lemma~\ref{lem:initialization}. 
\end{proof}

\begin{lemma}
{\bf Noisy surface-code state injection.} Consider a surface-code block of distance $d_S'$. Suppose each physical qubit is initialized according to the circuit in Fig.~\ref{fig:injections}(b), following the layout pattern given in Fig.~\ref{fig:surface_code}(e). Upon applying parity-check measurements to extract $X$- and $Z$-type stabilizers, the circuit prepares the surface-code block in the logical state $\ket{\psi}$. Assume that the circuit noise follows the simplified local stochastic noise model. Suppose the parity-check measurements are noisy as depicted by Lemma~\ref{lem:noisy_stabilizer_measurement}. 

Under these conditions, there exists a decoding algorithm such that:
\begin{itemize}
    \item There exist a finite threshold $\epsilon' > 0$ and a constant $\lambda_S' > 0$ such that for $p < \epsilon'$, the error correction fails with probability $\Pr(\mathfrak{fail}_{EC}') \leq \lambda_S' p$. 
    \item Errors that bypass the correction process result in effective noise on the output state characterized by an $8$-confined model $(A_{out},B_{out},f_{out},\lambda p,0,\mathfrak{success}_{SP}')$. 
\end{itemize}
\label{lem:injection}
\end{lemma}

\begin{proof}
Every parity-check measurement is subject to both input and output effective noise. The error correction procedure specifically addresses $X$ errors preceding the $Z$-stabilizer measurement and $Z$ errors preceding the $X$-stabilizer measurement. Consequently, the effective noise on the output state is described by a model formed by merging one input noise model and two output noise models. Notably, in the merging of models, the merging mappings $g$ are disjoint. 

The noise addressed during error correction is characterized by a model formed by merging two input noise models and one output noise model, which is a $10$-confined model $(A_{in},B_{in},f_{in},\lambda p,0,\mathfrak{success}_{SP}')$. Error correction is performed by comparing the stabilizer measurement outcomes with expected eigenvalues derived from the initialization pattern shown in Fig.~\ref{fig:surface_code}(e). The initialization pattern is partitioned into two triangular regions, where qubits are initialized in either the $\ket{0}$ or $\ket{+}$ state. In the $\ket{0}$-state region, $Z$-type stabilizer eigenvalues can be deduced and utilized for error correction, while in the $\ket{+}$-state region, $X$-type stabilizer eigenvalues are similarly available. Consequently, this initialization scheme provides only partial error-correction capability compared to the standard surface code. 

Henceforth, we focus on the $\ket{+}$-state region, defined by the top horizontal boundary and the diagonal line of the surface-code lattice [see Fig.~\ref{fig:surface_code}(e)]. We observe that a logical $Z$ error occurs if a connected cluster of $Z$ errors spans the distance from the top boundary to the diagonal line. Let $x$ denote the horizontal coordinate of a qubit on the top boundary, indexed from left to right. There exists a positive constant $L$ such that any such cluster containing qubit-$x$ must have a minimum size of $x/L$; otherwise, the cluster cannot span the geometric distance between the top boundary and the diagonal line. 

The number of connected clusters of size $s$ containing a fixed qubit on the top boundary is at most $(ze)^{s-1}$ (see Lemma~\ref{lem:cluster_bound}). Since a cluster of size $s$ can only reach the diagonal line if its starting point $x$ on the top boundary satisfies $x \leq Ls$, there are at most $Ls$ possible starting positions for such a cluster. Therefore, the total number of connected size-$s$ clusters spanning from the top boundary to the diagonal line is upper-bounded by $Ls(ze)^{s-1}$. 

Following an analysis analogous to the proof of Lemma~\ref{lem:code_surgery}, the probability of having sufficient errors within a cluster of size $s$ to cause a error correction failure is upper bounded: there exist positive constants $\epsilon'$ and $\kappa'$ such that for $p < \epsilon'$, 
\begin{equation}
P_s \leq \left( \frac{10\,e}{\kappa'} \lambda p \right)^{\max\{\kappa' s,1\}}.
\end{equation}
Note that we have used $l \geq 1$. Consequently, the probability of an error correction failure can be bounded by splitting the sum over cluster sizes at a constant $s_0 = \lfloor 1/\kappa' \rfloor$: 
\begin{align}
\Pr(\mathfrak{fail}_{EC}') &\leq \frac{L}{ze} \sum_{s = 1}^{s_0} s \left( \frac{p}{\epsilon'} \right)^1 + \frac{L}{ze} \sum_{s > s_0} s \left( \frac{p}{\epsilon'} \right)^{\kappa' s} 
\leq \frac{L}{ze} \left[ \frac{s_0(s_0+1)}{2\epsilon'} p + \frac{(s_0+1)(p/\epsilon')^{\kappa'(s_0+1)}}{[1 - (p/\epsilon')^{\kappa'}]^2} \right].
\end{align}
When $(p/\epsilon')^{\kappa'} \leq 1/2$, which defines the threshold, there exists a positive constant $\lambda_S'$ such that $\Pr(\mathfrak{fail}_{EC}') \leq \lambda_S' p$. 

It is similar for the correction of $X$ errors. 
\end{proof}

\subsection{Proof of the theorem under the simplified noise model}
\label{app:theorem_proof}

\subsubsection{Fault-tolerant operations}
\label{app:FToperations}

The lemmas in Sec.~\ref{app:fault-tolerant} establish the fault tolerance of the protocol's operations, specifically including all operations illustrated in Figs.~\ref{fig:operations} and~\ref{fig:injections}(a) (taking $d_C = \Theta(d)$). These operations, which consist of parity-check measurements and code surgery, each fail with a probability upper-bounded by $e^{-\Omega(d)}$, conditioned on the success of all preceding fault-tolerant operations. Consequently, the probability that any failure occurs within a set of $L_{FT}$ fault-tolerant operations is at most $L_{FT} e^{-\Omega(d)}$. This total failure probability can be efficiently suppressed by increasing the code distance $d$.

The final remaining condition to examine is the input noise model for code surgery operations; see Lemma~\ref{lem:code_surgery}. In a general code surgery, the input memory system may be composed of multiple code blocks, each originating from a preceding operation---such as a parity-check measurement, code surgery, or block initialization. A single preceding operation may produce multiple output code blocks, with noise collectively described by a unified noise model. Since the propagation mapping $f$ in all our output noise models is inherently disjoint, we can apply Lemma~\ref{lem:splitting_generalized_models} to decompose the collective model into independent sub-models for each block. Consequently, each input memory-code block participating in a code surgery operation possesses noise characterized by an $8$-confined model $(A,B,f,\lambda p,4t_{RS},\mathfrak{success}_{SP})$. 

However, a critical issue arises: a single code surgery operation may involve as many as $\Theta(k_R)$ code blocks. Since each constituent input block contributes an intrinsic fault path of weight $4t_{RS}$, the cumulative weight of the intrinsic faults across the entire input memory system scales as $\Theta(k_R) t_{RS}$. Consequently, the joint noise model for the memory system is described by an $8$-confined model $(A', B', f', \lambda p, \Theta(k_R)t_{RS}, \mathfrak{success}_{SP})$, which is an invalid input when $k_R$ is large. 

We can effectively control the accumulation of intrinsic faults through joint error correction. Whenever multiple code blocks are provided as input to a code surgery operation, a preceding round of joint error correction must be performed. This is implemented by executing a code surgery operation with the check matrix configured as $H_Z^D = \tilde{H}_Z$. While the parity-check measurements are executed jointly across multiple blocks, the subsequent error correction is performed individually for each code block. Although the total intrinsic error weight may reach $\Theta(k_R)t_{RS}$ across the full input memory system, the intrinsic weight localized to any single block remains bounded by $4t_{RS}$. Consequently, the joint error correction round maintains a failure probability upper-bounded by $e^{-\Omega(d)}$. Following this procedure, the noise on the resulting memory system is characterized by an $8$-confined model $(A'',B'',f'',\lambda p,4t_{RS},\mathfrak{success}_{SP})$. This provides a valid input for the subsequent code surgery operation. 

\subsubsection{Magic states}

The noisy encoded magic states are prepared using the circuit illustrated in Fig.~\ref{fig:injections}(b). Most constituent operations within this circuit have been previously analyzed in Sec.~\ref{app:noisy}. Specifically, each operation yields an effective output noise model that serves as a valid input for the subsequent step, with logical failure probabilities bounded by $O(p)$.

The remaining operation, the transversal $X$-basis measurement on the surface-code block, corrects $Z$-type errors to ensure that the feedback gate applied to the memory-code block is logically correct. Since the surface code possesses a finite fault-tolerant threshold, the probability of a logical failure at this stage is upper-bounded by $e^{-\Omega(d_S')}$. By choosing a surface-code distance $d_S' = O(\mathrm{polylog}(d))$, this probability can be efficiently suppressed to $O(p)$.

In summary, the circuit produces an encoded magic state with a logical error probability of $O(p)$ and residual physical errors characterized by an $8$-confined model $(A_{out},B_{out},f_{out},\lambda p,0,\mathfrak{success}_{SP}')$. Hence, the prepared state is a valid noisy input for the subsequent fault-tolerant stages of the protocol.

\begin{theorem}[Theorem 1.4 in \cite{Nguyen2024}]
Let $\ket{\mathrm{CCZ}} = \mathrm{CCZ} \ket{+}^{\otimes 3}$. There exists a distillation protocol (using noiseless Clifford operations and measurements) $\mathrm{MSD}(p_L)$ and a constant noise threshold $p_{\mathrm{noisy}}$ such that the following holds. Upon receiving $N$ (independent) noisy states whose infidelity with $\ket{\mathrm{CCZ}}$ is $p \le p_{\mathrm{noisy}}$, where $N=O(\log^{1+o(1)}(1/p_L))$ is a sufficiently large number, $\mathrm{MSD}(p_L)$ produces $\Theta\!\left( \frac{N}{\log^{o(1)}(1/p_L)} \right)$ states each of which has infidelity at most $p_L$ with $\ket{\mathrm{CCZ}}$. Furthermore, the quantum depth and classical depth of $\mathrm{MSD}(p_L)$ are both polyloglog$(1/p_L)$. Here, the $o(1)$ exponent scales as $O\!\left( \frac{1}{\log \log\log(1/p_L)} \right)$.
\label{the:constantmagic}
\end{theorem}

Although our protocol is formulated in terms of $T$-gate magic states, it can be straightforwardly adapted to CCZ-gate magic states. Alternatively, one could utilize the $CCZ$ state distillation protocol (described in Theorem~\ref{the:constantmagic}) to distill $T$ states, as these states can be converted into each other with constant overhead~\cite{Gidney2019efficientmagicstate}. Given that the error rate of the injected magic state is $O(p)$, there exists a finite threshold for $p$ below which the conditions of Theorem~\ref{the:constantmagic} are satisfied. Consequently, magic states can be distilled to the target error rate with almost constant overhead.

\subsubsection{Summary}

The overhead analysis is provided in Secs.~\ref{sub:cost analysis} and Appendices~\ref{app:cost}. Our analysis concludes that the protocol achieves strictly constant qubit overhead and a time overhead of $O(d^{a+o(1)})$, provided that the code distances $d,d_R,d_F=O(\mathrm{polylog}(Mk))$.

We consider the usual setting where classical information is provided as input, processed by the quantum computation, and finally returned as classical output. We use $C_{\mathrm{FT}}$ to simulate the quantum circuit $C$, and measure the simulation error by the discrepancy between their resulting output distributions, with target accuracy $\epsilon_L$.

To ensure that the output distributions of $C$ and $C_{\mathrm{FT}}$ differ by at most $\epsilon_L$, it suffices to require that the logical failure probability per operation satisfy $p_L\le \epsilon_L/|C|$. In our construction, $p_L=e^{-\Omega(d)}$, so we choose $d=d_R=d_F=\Theta(\log(|C|/\epsilon_L))$. Given that the circuit size is polynomial in the width, $|C|=O(\mathrm{poly}(W))$, and the total number of logical qubits satisfies $Mk=\Theta(W)$, the polylogarithmic condition assumed in the overhead analysis is fully justified. Consequently, the total time overhead of the protocol scales as $O(\log^{a+o(1)}(|C|/\epsilon_L))$.

\subsection{Justification of the simplified noise model}
\label{app:justification}

In this section, we demonstrate that the threshold theorem remains valid under the standard local stochastic noise model, assuming each operation in the circuit can be faulty. In this framework, every circuit operation is treated as a potential fault location, and the probability of a specific fault path $F$ is exponentially suppressed by the number of faulty operations, i.e.,~$\Pr[F] \leq p^{|F|}$. 

\textbf{Error propagation.} In our simplified model, we assume that noise is concentrated at specific locations that subsume all errors generated by the preceding or subsequent constant-depth transversal operations. These locations are treated under a $1$-confined model with $\delta = 0$, where the probability of $s$ faults is exponentially suppressed as $p^s$. 

However, in a more realistic model where every operation acts as a potential error source, a single faulty operation may propagate to multiple locations (of the simplified model), and conversely, a single location may be influenced by multiple operations. This error propagation is naturally captured by our generalized local stochastic noise model, where each physical operation corresponds to a virtual location and the locations in the simplified model represent real locations. 

To account for this propagation, we transition from a $1$-confined model to a $\nu$-confined model (with $\delta = 0$). The confinement parameter $\nu$ remains a constant due to the (q)LDPC nature of the circuits: the sparsity of the constant degree of the gate network ensures that each faulty operation affects at most a constant number of qubits, and each qubit is affected by at most a constant number of operations. While increasing the confinement parameter $\nu$ necessarily reduces the numerical value of the fault-tolerant threshold, it does not invalidate its existence; the threshold remains finite as long as $\nu$ is bounded. 

\textbf{Surface-code logical errors.} We now examine our model for surface-code logical errors. The existence of a finite fault-tolerant threshold for the surface code is well-established (see Ref.~\cite{Gottesman2014}). If the physical error rate is below this threshold, the logical error probability is exponentially suppressed with the code distance $d_S$. However, because our protocol employs a small $d_S$, the logical error rate is non-negligible. 

In the following, we demonstrate that although these logical errors are potentially \textit{temporally non-local}, their impact can be effectively mitigated. Specifically, since the surface code is utilized within a constant-depth state preparation circuit, any temporal non-locality is strictly bounded. This limited temporal range ensures that such correlations can be successfully managed during the error correction process without compromising the threshold theorem. 

In the fault-tolerance framework of Ref.~\cite{Gottesman2014}, the primary objective is to bound both the probability of error-correction failure and the weight of the residual errors. When error correction succeeds, the residual noise can be faithfully characterized by a local stochastic model, which ensures that the logical error probability for subsequent operations remains suppressed by $e^{-\Omega(d_S)}$. Conversely, if an error-correction failure occurs, this exponential suppression can no longer be unconditionally guaranteed for subsequent stages of the circuit.

In the fault-tolerance proof presented in Ref.~\cite{Gottesman2014}, the essential task is to bound both the probability of error-correction failure and the weight of the residual errors following correction. If error correction succeeds, the residual errors are effectively described by a (standard) local stochastic noise model, which ensures that the logical error probability for subsequent operations remains suppressed by $e^{-\Omega(d_S)}$. Conversely, if an error-correction failure occurs, this exponential suppression can no longer be guaranteed for subsequent operations on the surface-code block.

This necessitates a more careful treatment of logical errors: an error-correction failure at a qubit-time position $(q, t)$ may induce faults throughout its \textit{causal cone}, which consists of all subsequent positions reachable through the circuit's operations. We can naturally incorporate these effects into our generalized noise model by treating error-correction failures as \textit{virtual faults}. Each such virtual fault is then mapped to a set of \textit{real faults} representing the logical errors at the affected qubit-time positions. 

Critically, since the surface code is utilized within a constant-depth state preparation circuit, each causal cone contains at most a constant number of locations. Consequently, while this propagation increases the confinement parameter $\nu$, it remains an $O(1)$ constant, thereby preserving the existence of a finite fault-tolerant threshold. 

Lastly, we address the errors associated with decoding operations. Analogous to the analysis in Lemma~\ref{lem:noisy_state_preparation}, if the noise on a surface-code block follows a (standard) local stochastic model, the decoding error rate is upper-bounded by $\lambda_S p$. However, due to potential temporal correlations of logical errors, this $\lambda_S p$ bound is strictly applicable only to qubits lying outside the causal cones of any prior error-correction failures. 

If $t$ such error-correction failures occur within the circuit, they may propagate to $O(t)$ qubits at the decoding locations. To incorporate this effect, we can model the decoding-operation errors as a $1$-confined model with an intrinsic fault parameter $\delta = O(t)$. Since the number of logical failures $t$ is exponentially suppressed by increasing the surface-code distance $d_S$, the parameter $\delta$ can be rigorously controlled in the same manner as $t_{RS}$. Including a finite intrinsic error parameter $\delta$ in the decoding-operation noise model merely results in an enlarged value for the parameter $t_{RS}$ in the generalized noise model of the output resource state in Lemma~\ref{lem:state_preparation}. 

{\section{Performance at finite distance}}
\label{app:performance}

While our protocol is optimized for asymptotic performance, its feasibility for moderate code distances is of practical interest. This section evaluates performance of the proposed methods in the finite-distance regime.

\subsection{The surface-code overhead}
\label{app:SC_overhead}

Although the surface-code distance $d_S$ is sub-polynomial in the large-distance limit, it may significantly influence the physical qubit requirements and the overall performance in the finite-distance regime. In this section, we provide a more explicit analytical estimation of $d_S$.

The resource-state preparation circuit, prior to surface-code encoding, comprises $(n_D + r_Z^D)(n_F + r_F)$ qubits and possesses a depth of $\omega_Z^D + \omega_F + 3$. Here, $n_D$ and $n_F$ denote the lengths of the deformed and factory codes, respectively; $r_Z^D$ and $r_F$ are the number of rows in the $Z$-type check matrix of the deformed code and the check matrix of the factory code, respectively. The parameters $\omega_Z^D$ and $\omega_F$ represent the maximum row/column weights of these respective matrices. To ensure that the logical error contribution in Eq.~\eqref{eq:logicalerror} is suppressed below $p/16$, it is sufficient to satisfy the condition:
\begin{equation}
    p_{L,S}(p) < \frac{\lceil t \rceil p}{16(\omega_Z^D + \omega_F + 3)(n_D + r_Z^D)(n_F + r_F)e},
\end{equation}
where $\lceil t \rceil \geq 1$. Using the standard phenomenological formula for the surface-code logical error rate, $p_{L,S}(p) \approx p_0 (p/p_{th,S})^{\lfloor (d_S+1)/2 \rfloor}$~\cite{PhysRevA.86.032324}, where $p_0$ is a positive constant and $p_{th,S}$ is the surface-code threshold, we can derive an estimate for the required surface-code distance:
\begin{equation}
    d_S \approx 2\frac{\ln\left( \frac{p/p_0}{16 (\omega_Z^D + \omega_F + 3)(n_D + r_Z^D)(n_F + r_F)e} \right)}{\ln(p/p_{th,S})}.
\end{equation}
Since $n_D, r_Z^D = O(d^{a+a_R})$ and $n_F, r_F = O(d^{a_F})$, while the weights $\omega_Z^D$ and $\omega_F$ remain constant, it follows that the required distance scales logarithmically with the memory-code distance, $d_S = O(\log d)$.

\subsection{Application of PCS to finite-length codes}
\label{app:PCS_finite}

In this section, we analyze the application of PCS to quantum codes with finite length. We demonstrate its performance advantage for implementing certain logical operations, even when the underlying code is not drawn from a good qLDPC family.

Consider the following logical circuit layer. We have $M$ memory-code blocks, each encoding $k$ logical qubits. Within each block, we label the logical qubits as a one-dimensional array indexed from $1$ to $k$. The layer consists of $Z \otimes Z$ logical measurements applied to nearest-neighboring logical qubit pairs. Assuming $k$ is even, the measurement is performed on each pair of logical qubits with indices $(2m-1, 2m)$, where $m$ ranges from $1$ to $k/2$.

Before illustrating the benefits of PCS for implementing such a layer, we first demonstrate its utility within practical quantum algorithms using a concrete example. Consider the simulation of time evolution under a one-dimensional Heisenberg model, where each spin is represented by a logical qubit. According to the Trotter decomposition, this simulation requires the implementation of two-qubit unitary gates of the form $e^{-i Z \otimes Z \theta}$ on nearest-neighboring spins. When decomposed into standard universal gates, the $e^{-i Z \otimes Z \theta}$ interaction can be realized using two controlled-NOT gates and an $e^{-i Z \theta}$ gate on the target qubit. Consequently, to implement this interaction across all spin pairs, we must perform controlled-NOT gates on each pair of nearest neighbors simultaneously. In the code surgery framework, each controlled-NOT gate requires an ancilla logical qubit and $Z \otimes Z$ measurements on the control and ancilla qubits. Therefore, to implement the $Z \otimes Z$ measurement layer on a one-dimensional logical qubit array, we allocate the logical qubits as follows: logical qubits with odd indices are used to represent the spins, and logical qubits with even indices are used as ancilla qubits. In this arrangement, the $e^{-i Z \otimes Z \theta}$ interaction between each spin pair involves the simultaneous execution of the specified operations across all blocks. The use of ancilla qubits leads to a scaling in the number of memory-code blocks $M = \lceil 2N/k \rceil$, where $N$ is the number of spins.

To estimate the qubit and time overhead for this $Z \otimes Z$ measurement layer under PCS, we recall that the deformed code has a length bounded by $n_D \leq (k_R+2n_R)n$. Given that PCS simultaneously implements the operation layer on $k_R$ blocks, the qubit overhead is given by:
\begin{equation}
    \frac{n_D}{k_R k} \leq \frac{\rho_R+2}{\rho_R\rho},
\end{equation}
where $\rho = k/n$ and $\rho_R = k_R/n_R$ are encoding rates for the memory code and R code, respectively. Here, we have neglected ancilla qubits for implementing stabilizer measurements. Absent the use of LTSP, the time overhead for PCS is simply $O(d)$. When the codes have constant encoding rates, the overall spacetime overhead is $O(1)$, outperforms code surgery methods listed in Table~\ref{tab:schemes}. 

To provide concrete numbers, we take HGP codes as an example. Assume the memory code is an HGP code generated by a linear code with rate $\rho_R$ (the same code used for the R code). Assuming the check matrix of the linear code is full-rank, the HGP rate is $\rho = \rho_R^2 / (1 + (1-\rho_R)^2)$. Substituting this into the PCS overhead formula yields:
\begin{equation}
    \frac{n_D}{k_R k} \leq \frac{(\rho_R + 2)[1 + (1-\rho_R)^2]}{\rho_R^3}.
\end{equation}
Notably, the qubit overhead remains constant, resulting in a total spacetime overhead of $O(d)$. Table~\ref{tab:hgp_examples_all} provides explicit qubit overhead values for several example linear codes.

\begin{table}[t]
\centering
\begin{tabular}{c c c c c}
\hline
Code family & Classical code & HGP code & $(\rho_R,\rho)$ & PCS qubit overhead \\
\hline
Hamming code
& $[7,4,3]$
& $[[58,16,3]]$
& $\left(\frac47,\frac{8}{29}\right)$
& $\le 17$ \\ 
\hline
\multirow{6}{*}{$(3,4)$-regular Gallager codes}
& $[12,3,4]$
& $[[225,9,4]]$
& \multirow{6}{*}{$\left(\frac14,\frac{1}{25}\right)$}
& \multirow{6}{*}{$\le 225$} \\
& $[20,5,6]$
& $[[625,25,6]]$ & & \\
& $[28,7,8]$
& $[[1225,49,8]]$ & & \\
& $[40,10,12]$
& $[[2500,100,12]]$ & & \\
& $[60,15,16]$
& $[[5625,225,16]]$ & & \\
& $[80,20,18]$
& $[[10000,400,18]]$ & & \\
\hline
\multirow{4}{*}{$(3,5)$-regular quasi-cyclic codes}
& $[80,34,12]$
& $[[8704,1160,12]]$
& \multirow{4}{*}{$\left(\ge \frac{2}{5},\ge \frac{2}{17}\right)$}
& \multirow{4}{*}{$\le 51$} \\
& $[105,44,16]$
& $[[14994,1940,16]]$ & & \\
& $[150,62,20]$
& $[[30600,3844,20]]$ & & \\
& $[210,86,24]$
& $[[59976,7400,24]]$ & & \\
\hline
\end{tabular}
\caption{Examples of classical codes, the corresponding symmetric HGP memory codes, and the resulting PCS qubit-overhead bounds. The $(3,4)$-regular Gallager code examples and the $(3,5)$-regular quasi-cyclic code examples are taken from Refs.~\cite{Xu2024,Fossorier2004}.}
\label{tab:hgp_examples_all}
\end{table}

\subsection{LTSP with more general factory codes}
\label{app:LTSP_finite}

To achieve constant qubit and time overheads for resource state preparation, we have primarily focused on the asymptotic regime where the F code is selected from a family of LTCs with constant encoding rate and constant soundness. Although theoretical constructions for such codes exist~\cite{Panteleev2022,lin2022c3LTC}, identifying concrete code instances optimized for the finite-distance regime remains a subject that requires further investigation.

Critically, neither the constant encoding rate nor the constant soundness requirement is strictly necessary for finite-distance applications. Any LDPC code with a non-zero soundness parameter can be utilized as an F code. For example, one could employ a repetition code with a check matrix defined by a bounded-degree expander graph, where vertices represent bits and edges represent checks enforcing equality between adjacent bits. In this case, the soundness of the code is controlled by the expansion properties of the graph. Specifically, the soundness satisfies $s=\frac{n_F}{r_F}h$, where $h$ denotes the edge expansion of the graph. In particular, for a constant-degree expander family with $h=\Omega(1)$, this yields $s=\Omega(1)$.

Using such a construction for the F code allows for a significant reduction in time overhead at the expense of additional qubits. For an F code based on a length-$d$ repetition code, the circuit depth for state preparation remains almost constant at $O(d_S)$, while the qubit overhead increases to $O(d d_S^2)$, where the factor $d$ originates from the vanishing encoding rate of the repetition code. This allows for a trade-off between qubit overhead and temporal efficiency, illustrating the flexibility of the LTSP approach, where even codes that do not satisfy strictly constant asymptotic requirements remain viable for applications in fault-tolerant quantum computation.

\end{widetext}

\bibliography{references.bib}

\end{document}